\pgfplotsset{compat=1.18}
\tikzset{%
downa/.style={thick,-{Stealth}},
upa/.style={line width=0.5pt,double distance=0.8pt,double,-Stealth},
bluearrow/.style={thick,-{Stealth},blue},
redarrow/.style={thick,-{Stealth},dashed,red},
blackarrow/.style={thick,-{Stealth},dotted,very thick},
}
\theoremstyle{definition}
\newtheorem{definition}{Definition}
\theoremstyle{plain}
\newtheorem{theorem}{Theorem}
\newtheorem{example}{Example}
\newtheorem{lemma}{Lemma}
\newtheorem{remark}{Remark}
\newcommand{\Ind}{\mtt{Ind}}
\newcommand{\mgu}{\mtt{mgu}}
\def\orcidID#1{\href{http://orcid.org/#1}{\raisebox{-1.25pt}{\includegraphics{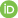}}}}
\newcommand{\bluearrow}[1]{\tikz{\draw[bluearrow](0,0) to (#1,0);}}
\newcommand{\redarrow}[1]{\tikz{\draw[redarrow](0,0) to (#1,0);}}
\newcommand{\blackarrow}[1]{\tikz{\draw[blackarrow](0,0) to (#1,0);}}
\DeclareRobustCommand\halfsquare{%
  \unskip\nobreak\thinspace\textemdash\allowbreak\thinspace\ignorespaces}
\DeclareSymbolFont{arrows3}{LS2}{stixtt}{m}{n}
\DeclareMathSymbol{\squarelrblack}{\mathord}{arrows3}{"89}
\newcommand\drw[1]{\ensuremath{{}^{\square} #1}}
\newcommand\erw[1]{\ensuremath{{}^{\squarelrblack} #1}}
\newcommand\urw[1]{\ensuremath{{}^{\blacksquare} #1}}
\newcommand{\Sup}{\ensuremath{\mathbf{Sup}}\xspace}
\newcommand{\ReC}{\ensuremath{\mathbf{ReC}}\xspace}
\newcommand{\CReC}{\ensuremath{\mathbf{CReC}}\xspace}
\newcommand{\Rw}{\mtt{Rw\xspace}}
\newcommand{\CRw}{\mtt{CRw\xspace}}
\newcommand{\Cleft}{\mtt{Chain_1\xspace}}
\newcommand{\Cright}{\mtt{Chain_2\xspace}}
\newcommand{\lenif}[1]{\llbracket#1\rrbracket}
\newcommand{\Axioms}{\mtt{Ax}}
\newcommand{\Conj}{\mtt{Conj}}
\newcommand{\Lem}{\mtt{Lem}}
\newcommand{\CommentedOut}[1]{}
\begin{document}

\title{Rewriting and Inductive Reasoning}
\author{M\'arton Hajdu\orcidID{0000-0002-8273-2613} \and
Laura Kovács\orcidID{0000-0002-8299-2714}
\and 
Michael Rawson\orcidID{0000-0001-7834-1567}
}
\institute{TU Wien}
\date{}
\titlerunning{Rewriting and Inductive Reasoning}
\authorrunning{M. Hajdu, L. Kovács, M. Rawson}

\maketitle
\vspace{-1.5em}
\begin{abstract} 
% The superposition calculus is prone to produce proofs that are unintuitive and non-goal-oriented. For example, definitional equations are intended to be oriented such that the defined object is expanded into its definition, but superposition may arbitrarily orient them according to a simplification ordering. This becomes problematic when the superposition calculus is expected to find consequences, for example, auxiliary lemmas for inductive reasoning. We introduce inferences to rewrite universally quantified unit equations with unit equations to find auxiliary lemmas for induction. We integrate these inferences into saturation-based proof search to be used alongside superposition, while making use of notions such as redundancy to retain efficiency.
% Experimental results show that our approach significantly improves automated inductive theorem proving.
%
% Michael's take:
%
Rewriting techniques based on reduction orderings generate ``just enough'' consequences to retain first-order completeness.
This is ideal for superposition-based first-order theorem proving, but for at least one approach to inductive reasoning we show that we are missing crucial consequences.
We therefore extend the superposition calculus with rewriting-based techniques to generate sufficient consequences for automating induction in saturation. When applying our work within the unit-equational fragment, our experiments with the theorem prover Vampire show significant improvements for inductive reasoning.

% \keywords{ Saturation \and Rewriting \and Superposition \and Induction}

\end{abstract}

\section{Introduction}\label{sec:introduction}

Automating \emph{proof by induction} is a particularly hard task with a long history~\cite{Aubin, ACL2, Rippling93, RRL, WaltherComputing}.
A recent promising line of research in this direction comes with the  integration of induction into saturation-based first-order theorem proving~\cite{CADE19,GettingSaturated22}, by extending the logical calculus with induction inference rules. Such induction rules are of the form
\begin{prooftree}
\AxiomC{$\neg L[t]\lor C$}
% \RightLabel{{\tt (Ind)},}
\UnaryInfC{$\cnf(\neg F\lor C)$}
\end{prooftree}
\noindent where $\neg L[t]\lor C$ is a clause, $\neg L[t]$ a ground literal, $t$ a term of some inductive data type, and $F\to \forall x.L[x]$ is an instance of some valid second-order induction \emph{schema}, for example structural induction\footnote{$\cnf(\neg F\lor C)$ denotes the clausified formula $\neg F\lor C$}. Note that %by replacing $\neg L[t]$ with $F$ in the conclusion, 
the schema instance $F\to \forall x.L[x]$ is applied to resolve $\neg L[t]$, but the schema instance is never added to the saturation search space, making sure that the conclusion of the inference is derived via inductive reasoning. As such, the application of the induction rule is  \emph{triggered} by the presence of the literal $\neg L[t]$ in the search space. Hence, if there is no such literal, no induction is applied. 
Applying induction only when triggered means that only premises of schema instances directly related to a clause selected during proof search are added to the search space.
This is a strong heuristic method for automating induction, particularly in the presence of full first-order logic and theories~\cite{CICM20,CADE21,FMCAD21}.

\paragraph{Challenge.}
Unfortunately, applying induction only when triggered leads to tension between two competing factors:
\begin{enumerate}[label=(\roman*),leftmargin=1.7em]
\item
It may be that a valid goal is not provable without deriving a certain consequence $C$, that in   turn triggers an induction rule with premise $C$ with a specific schema. 
\item
Efficient first-order calculi, such as superposition~\cite{EquationalReasoningInSaturation}, go to great lengths to derive and retain only those consequences absolutely required for completeness. Therefore, induction rules that could be triggered by a (missing) consequence $C$ might not be applied.
\end{enumerate}
\noindent
%Superposition may avoid generating consequences that would otherwise lead to a straightforward inductive proof in this setting.
We show in our motivating example  that this tension causes a lost proof in practice (Section~\ref{motivating-example}), as superposition may avoid generating consequences that would be needed to be used in inductive proofs. %in this setting.

\paragraph{Our contributions.} For automating (triggered) induction in saturation, we thus need to do something counter-intuitive for those accustomed to first-order superposition reasoning. We propose deriving slightly \emph{more} consequences than usual, in order to trigger induction rules with suitable schemas. Naturally, this must still be as few extra consequences as possible in order to retain a high level of first-order efficiency.
Concretely, motivated by applications of program verification~\cite{DBLP:conf/fmcad/GeorgiouGBRKR22} in this paper we focus on the unit-equational fragment of first-order logic with induction and bring the following contributions. 
\begin{enumerate}[label=(\arabic*),leftmargin=1.7em]
%\item We focus on the unit-equational fragment of first-order logic with induction. This fragment is widely applicable to e.g. program verification.
\item
We introduce a modified superposition calculus (Section~\ref{sec:rewriting}) with slightly relaxed constraints, allowing us to derive consequences that cannot be derived by  standard superposition.
\item
We impose new restrictions on our calculus (Section~\ref{sec:rewriting}) to further improve efficiency while retaining our newfound ability to generate consequences.
\item
We improve redundancy elimination for saturation with induction, providing sufficient conditions for skipping redundant induction steps and rewrites (Section~\ref{sec:induction}).
\item
We implement our approach in the \textsc{Vampire} first-order prover~\cite{CAV13} and apply it to proving inductive properties (Section~\ref{sec:evaluation}).
Results show that our work improves the state of the art in automating proofs by induction.
\end{enumerate}
\begin{figure*}[t]
\setlength{\abovecaptionskip}{0pt}
\setlength{\belowcaptionskip}{0pt}
% \fontsize{7pt}{8pt}\selectfont
\small
\hrule\smallskip
{\bf Signature}
\smallskip\hrule\vspace{-1em}
\begin{minipage}{.35\linewidth}
\begin{align*}
\zero &: \nat\\
\suc &: \nat\to\nat\\
\nil &: \lst
\end{align*}
\end{minipage}\begin{minipage}{.35\linewidth}
\begin{align*}
\cons &: \nat \to \lst \to \lst\\
(+) &: \nat\to\nat\to\nat\\
(\appif) &: \lst\to\lst\to\lst
\end{align*}
\end{minipage}\begin{minipage}{.3\linewidth}
\begin{align*}
(\lenif{.}) &: \lst\to\nat\\
f &: \lst \to\lst\\
g &: \lst \to\lst
%c &: \lst\\
%d &: \lst
\end{align*}
\end{minipage}
\smallskip\hrule\smallskip
{\bf Axioms} --- \Axioms
\smallskip\hrule\vspace{-1em}
\begin{minipage}{.5\linewidth}
\begin{align}
    &\zero\not\simeq\suc(x)\label{eq:nat1}\tag{$\nat.1$}\\
    &\suc(x)\not\simeq\suc(y)\lor x\simeq y\label{eq:nat2}\tag{$\nat.2$}\\[.7em]
    % &\nonumber\\
    &\nil\not\simeq\cons(x,y)\label{eq:list1}\tag{$\lst.1$}\\
    &\cons(x,y)\not\simeq\cons(z,u)\lor x\simeq z\label{eq:list2}\tag{$\lst.2$}\\
    &\cons(x,y)\not\simeq\cons(z,u)\lor y\simeq u\label{eq:list3}\tag{$\lst.3$}\\[.7em]
    % &\nonumber\\
    &\zero + x \simeq x&\label{eq:plus1}\tag{$\plus.1$}\\
    &\suc(x) + y \simeq  \suc(x + y)&\label{eq:plus2}\tag{$\plus.2$}
\end{align}
\end{minipage}\begin{minipage}{.5\linewidth}
\begin{align}
    &\nil \appif x \simeq x&\label{eq:app1}\tag{$\append.1$}\\
    &\cons(x,y) \appif z \simeq \cons(x,y \appif z)&\label{eq:app2}\tag{$\append.2$}\\[.7em]
    % &\nonumber\\
    &\lenif{\nil}\simeq\zero&\label{eq:len1}\tag{$\len.1$}\\
    &\lenif{\cons(x,y)}\simeq\suc(\lenif{y})&\label{eq:len2}\tag{$\len.2$}\\[.7em]
    % &\nonumber\\
    &g(x)\appif f(y)\simeq f(x)\appif g(y)&\label{eq:ax1}\tag{$\mtt{ax}.1$}\\
    &\lenif{x}+\lenif{f(y)}\simeq\lenif{f(x)}+\lenif{y}&\label{eq:ax2}\tag{$\mtt{ax}.2$}\\
    &\lenif{f(g(x))}\simeq\suc(\lenif{x})&\label{eq:ax3}\tag{$\mtt{ax}.3$}
    % &\lenif{x\appif y}\simeq\lenif{y\appif x}&\label{eq:ax4}\tag{$\mtt{ax}.4$}
    % &\nonumber\\
    % &\len(x\appif y)\simeq\len(x)+\len(y)\label{eq:lem}\tag{$\mtt{lemma}$}
\end{align}
\end{minipage}
% \smallskip\hrule\smallskip
% {\bf Negated conjecture} --- $\neg\Conj$\hspace{10em}{\bf Auxiliary lemma} --- \Lem
% \smallskip\hrule
\begin{minipage}{.5\linewidth}
\smallskip\hrule\smallskip
{\bf Negated conjecture} --- $\neg\Conj$
\smallskip\hrule
{\small \begin{equation}\lenif{g(d)\appif f(c)}\not\simeq\suc(\lenif{c}+\lenif{d})\label{eq:negconj}\tag{$\mtt{nConj}$}\end{equation}}
\end{minipage}\hspace{.01\linewidth}\begin{minipage}{.49\linewidth}
\smallskip\hrule\smallskip
{\bf Auxiliary lemma} --- \Lem
\smallskip\hrule
{\small \begin{equation}\forall x,y.\lenif{x\appif y}\simeq\lenif{y}+\lenif{x}\label{eq:lemma}\tag{$\mtt{lemma}$}\end{equation}}
\end{minipage}
\medskip
% \hrule\smallskip
% {\bf Auxiliary lemma} --- \Lem
% \smallskip\hrule
% {\small \begin{equation}\forall x,y.\lenif{x\appif y}\simeq\lenif{y}+\lenif{x}\label{eq:lemma}\tag{$\mtt{lemma}$}\end{equation}}
\caption{Motivating example, conjecturing that the first-order formula \Conj is implied by first-order axioms \Axioms.\label{fig:defs}}
\end{figure*}
\vspace{-1.7em}
%\begin{motivatingexample}[label=motivating-example]%
\section{Motivating Example}
\label{motivating-example}
We motivate our work using Figure~\ref{fig:defs}. The data types \nat and \lst---corresponding to the term algebras of natural numbers and lists in first-order logic---are defined inductively using constructors as given by the first-order formulas \eqref{eq:nat1}--\eqref{eq:list3}.
Moreover, Figure~\ref{fig:defs} defines the recursive functions for the addition of natural numbers ($+$), list append ($\appif$) and list length ($\lenif{.}$), encoded by \eqref{eq:plus1}--\eqref{eq:len2}, and declares the uninterpreted functions $f$ and $g$. % and Skolem functions $c$ and $d$. 
We use infix notation for the symbols $+$, $\appif$ and $\lenif{.}$. Axioms \eqref{eq:ax1}--\eqref{eq:ax3} define the behaviour of $f$ and $g$. 
All properties in Figure~\ref{fig:defs} are implicitly universally-quantified.

Suppose we are trying to prove that the axioms (denoted \Axioms) in Figure~\ref{fig:defs} imply the following first-order formula: % \Conj:
\begin{equation}\forall x,y. \lenif{g(y)\appif f(x)}\simeq \suc(\lenif{x}+\lenif{y})\label{eq:conj}\tag{\Conj}\end{equation}
Proving \eqref{eq:conj} in classical first-order logic can be reduced to establishing the unsatisfiability of the negation of~\eqref{eq:conj} together with the axioms of Figure~\ref{fig:defs}. That is, we prove  unsatisfiability of \eqref{eq:negconj} together with the axioms~\eqref{eq:nat1}--\eqref{eq:ax3}; here,  \eqref{eq:negconj} is the  negated and Skolemized form, of \eqref{eq:conj}, using the Skolem (list) functions  $c,d$. 
While the axioms imply formula~\eqref{eq:conj} in the theory of lists and natural numbers, $\Axioms\land\eqref{eq:negconj}$ is not first-order unsatisfiable.
Showing unsatisfiability requires an additional first-order axiom over lists, in particular the {\it auxiliary lemma} \Lem of Figure~\ref{fig:defs}.
With this lemma, $\Axioms\land\eqref{eq:negconj}\land\Lem$ is unsatisfiable and validity of \Conj follows.
% To summarise, proving the functional correctness of Figure~\ref{fig:defs}(a) requires the {\it auxiliary lemma} \Lem.

Let us make two key observations. First, \Lem is a stronger property than \Conj with respect to \Axioms; from \eqref{eq:negconj}, the negation of an instance of \Lem (denoted $\neg\Lem$ by slight abuse of notation) can be derived via rewriting with equal terms. Second, \Lem is not a first-order consequence of \Axioms, but it is valid with respect to \Axioms in the background theory of lists and natural numbers, a fact that can be shown by induction. We make use of these two observations to synthesize and use \Lem in the proof of \Conj as follows:
\begin{enumerate}[label=(\roman*),leftmargin=1.7em]
\item We derive $\neg\Lem$ from $\Axioms\land\eqref{eq:negconj}$ by rewriting. Soundness of rewriting ensures that the unsatisfiability of $\Axioms\land\neg\Lem$ implies the unsatisfiability of $\Axioms\land\eqref{eq:negconj}$.
\item We refute $\Axioms\land\neg\Lem$ by instantiating a valid induction schema with $\Lem$ to obtain a valid first-order induction \emph{axiom}, which in conjunction with $\Axioms\land\neg\Lem$ is unsatisfiable, implying the unsatisfiability of $\Axioms\land\eqref{eq:negconj}$ and hence the claim of Figure~\ref{fig:defs}.
\end{enumerate}
To derive $\neg\Lem$ from $\Axioms\land\eqref{eq:negconj}$, we apply the following rewriting steps:
\begin{align}
&\text{-- rewrite \eqref{eq:negconj}}&&\text{with \eqref{eq:plus2} resulting in}&    \lenif{g(d)\appif f(c)}\not\simeq\suc(\lenif{c})+\lenif{d}\label{eq:negconj1}\\
&\text{-- rewrite \eqref{eq:negconj1}}&&\text{with \eqref{eq:ax3} resulting in}&
    \lenif{g(d)\appif f(c)}\not\simeq\lenif{f(g(c))}+\lenif{d}\label{eq:negconj2}\\
&\text{-- rewrite \eqref{eq:negconj2}}&&\text{with \eqref{eq:ax2} resulting in}&
    \lenif{g(d)\appif f(c)}\not\simeq\lenif{g(c)}+\lenif{f(d)}\label{eq:negconj3}\\
&\text{-- rewrite \eqref{eq:negconj3}}&&\text{with \eqref{eq:ax1} resulting in}&
    \lenif{f(d)\appif g(c)}\not\simeq\lenif{g(c)}+\lenif{f(d)}\label{eq:negconj4}
% &\text{- rewrite \eqref{eq:negconj4} with \eqref{eq:ax4}:}&
%     \lenif{g(c)\appif f(d)}\not\simeq\lenif{g(c)}+\lenif{f(d)}\label{eq:negconj5}
\end{align}
Notice that clause~\eqref{eq:negconj4} is the negation of \Lem, instantiated with $f(d)$ and $g(c)$. To refute $\Axioms\land\neg\Lem$, we \textit{conjecture} $\Lem$ to be proven by induction, by taking the negation of \eqref{eq:negconj4} and \textit{by generalizing over} the term $f(d)$. Hence, we instantiate the following \textit{second-order structural induction formula} over lists of natural numbers with $\Lem$:
\begin{equation}\label{eq:list-schema}
  \forall F.\big(\big(F(\nil)\land\forall x\in\nat,y\in\lst.(F(y)\implies F(\cons(x,y)))\big)\implies \forall z\in\lst.F(z)\big)
\end{equation}
Showing the first-order unsatisfiability of this induction axiom in conjunction with $\Axioms\land\neg\Lem$ requires no further rewriting\footnote{See Appendix A for details}.\medskip

\noindent
Note that the above reasoning actually proves \Lem \textit{in addition} to \Conj, but it comes with the following two main challenges for proving Figure~\ref{fig:defs}:
\begin{enumerate}[label={{\bf (C\arabic*)}},wide=0em,leftmargin=0em]
\item \label{C1} use rewriting with equalities to derive $\neg\Lem$ from $\Axioms\land\eqref{eq:negconj}$ (Section~\ref{sec:rewriting});
\item \label{C2} combine inductive reasoning with first-order reasoning to refute $\Axioms\land\neg\Lem$ (Section~\ref{sec:induction}).
\end{enumerate}
\noindent
For tackling challenge~\ref{C1}, we use \textit{rewriting} inferences to rewrite equal terms and generate auxiliary lemmas.
However, such proof steps cannot always be performed with the ubiquitous (ordered) superposition inferences. Let us use a  Knuth-Bendix simplification ordering (KBO)  $\succ$~\cite{NieuwenhuisRubio:HandbookAR:paramodulation:2001} parameterized by a constant weight function and the precedence $\gg$:%\footnote{Such ad-hoc parameters are common in theorem provers, since fine-tuning these parameters for each problem instance is quite difficult.}:
$$d\gg c\gg g\gg f\gg(\lenif{.})\gg(\appif)\gg(+)\gg\cons\gg\nil\gg\suc\gg\zero$$
The ordering $\succ$ cannot orient the equalities of~\eqref{eq:plus2} and \eqref{eq:ax3} right-to-left so that clause~\eqref{eq:negconj2} could be derived by rewriting. Addressing such obstacles, we introduce an extension of the superposition calculus (Section~\ref{sec:rewriting}) to enable generating auxiliary lemmas during saturation. Our extension solves challenge~\ref{C1} and provides an efficient reasoning backend for challenge~\ref{C2}.

\section{Theoretical Background}

We assume familiarity with many-sorted first-order logic with equality.
Variables are denoted with $x$, $y$, $z$,  terms with $s$, $t$, $u$, $w$, $l$, $r$, all possibly with indices. 
A term is \textit{ground} if it contains no variables. 
We use the standard logical connectives $\neg$, $\lor$, $\land$, $\rightarrow$ and $\leftrightarrow$,
and quantifiers $\forall$ and $\exists$. 
A \textit{literal} is an atom or its negation.
The literal $\overline{L}$ denotes the complement of literal $L$. 
A disjunction of literals is a \textit{clause}. We denote clauses by $C, D$ and reserve the symbol $\square$ for the \textit{empty clause}
that is logically equivalent to $\bot$. We refer to the \textit{clausal normal form} of a formula $F$ by $\cnf(F)$. We assume that $\cnf$  preserves satisfiability, i.e. $F$ is satisfiable iff $\cnf(F)$ is satisfiable.
We use $\simeq$ to denote equality and write $\bowtie$ for either $\simeq$ or $\not\simeq$.

An \textit{expression $E$}  is a term, literal, clause or formula. 
We write $E[s]_p$ to state that the expression $E$ contains some distinguished occurrence of the term
$s$ at some position $p$. We might simply write $E[s]$ if the position $p$ is not relevant. Further, $E[s\mapsto t]$  denotes that this occurrence of $s$ is replaced with $t$; when $s$ is clear from the context, we simply write $E[t]$. We say that $t$ is a \textit{subterm} of $s[t]$, denoted by $t\trianglelefteq s[t]$; and a \textit{strict subterm} if additionally $t\neq s[t]$, denoted by $t\triangleleft s[t]$.
A \textit{substitution} is a mapping from variables to terms. We denote substitutions by $\theta$, $\sigma$, $\rho$, $\mu$, $\eta$. A substitution $\theta$ is a \textit{unifier} of two terms $s$ and $t$ if $s\theta= t\theta$, and is a \textit{most general unifier} (denoted $\mgu(s,t)$) if for every unifier $\eta$ of $s$ and $t$, there exists a substitution $\mu$ s.t. $\eta=\theta\mu$.

A \emph{position} is a finite sequence of positive integers. The \emph{root position} is the empty sequence, denoted by $\epsilon$. Let $p$ and $q$ be positions. The \emph{concatenation} of $p$ and $q$ is denoted by $pq$. We say that $p$ is \emph{above} $q$ if there exists a position $r$ such that $pr=q$, denoted by $p\leqslant q$. We say that $p$ and $q$ are \textit{parallel}, denoted by $p\parallel q$, if $p\nleqslant q$ and $q\nleqslant p$. We say that $p$ is \emph{to the left} of $q$, denoted by $p<_l q$, if there are positive integers $i$ and $j$, positions $r$, $p^\prime$ and $q^\prime$ such that $p=rip^\prime$, $q=rjq^\prime$ and $i<j$.

Let $\to$ be a binary relation. The \textit{inverse} of $\to$ is denoted by $\leftarrow$. The \textit{reflexive-transitive closure} of $\to$ is denoted by $\to^*$. A binary relation $\to$ over the set of terms is a \textit{rewrite relation} if (i) $l\to r \Rightarrow l\theta\to r\theta$ and (ii) $l\to r \Rightarrow s[l]\to s[r]$ for any term $l$, $r$, $s$ and substitution $\theta$.
A \textit{rewrite ordering} is a strict rewrite relation. A \textit{reduction ordering} is a well-founded rewrite ordering. In this paper we consider reduction orderings total on ground terms. Such orderings satisfy $s \triangleright t \Rightarrow s \succ t$ and are also called \emph{simplification orderings}.

\subsection{Saturation-Based Theorem Proving}\label{sec:sat:sup}
We briefly introduce saturation-based proof search in first-order theorem proving.
%which is the leading technology  for automated
%first-order theorem proving. 
For  details, we refer to~\cite{CAV13,GettingSaturated22}.
The majority of first-order theorem provers work with clauses, rather than arbitrary formulas. Let $S = \mathcal{A}\cup\{\neg G\}$ be a set of clauses including assumptions $\mathcal{A}$ and the clausified negation $\neg G$ of a goal $G$. Given $S$, first-order
provers {\it saturate} $S$ by computing all logical consequences of $S$ with respect to a sound inference system $\mathcal{I}$. This process is called \textit{saturation}. An inference system $\mathcal{I}$ is a set of inference rules of the form%\vspace{-0.5em}
\[
\frac{C_1\quad\ldots\quad C_n}{C},
\]
where $C_1,\ldots, C_n$ are the \emph{premises} and $C$ is the \emph{conclusion} of the inference. We also write $C_1,...,C_n\vdash_\mathcal{I}C$ to denote an inference in $\mathcal{I}$; as $\mathcal{I}$ is sound, this also means that $C$ is a logical consequence of $C_1,\quad,C_n$.  We denote that $\mathcal{I}$ derives clause $D$ from clauses $\mathcal{C}$ with $\mathcal{C}\vdash^*_\mathcal{I}D$. If the the saturated set of $S$ contains the empty clause $\square$, the original set $S$ of clauses is unsatisfiable, implying validity of $\mathcal{A}\rightarrow G$; in this case, we established a \emph{refutation} of $\neg G$ from $\mathcal{A}$.

\begin{figure}[tb]
\begin{tabular}{c p{.1\linewidth} l}
\multirow{3}{.45\linewidth}{
\centering
\AxiomC{$\underline{s[u]\bowtie t}\lor C$}
\AxiomC{$\underline{l\simeq r}\lor D$}
\LeftLabel{(\mtt{Sup})}
\BinaryInfC{$(s[r]\bowtie t\lor C \lor D)\theta$}
\DisplayProof}
&
\multirow{3}{*}{where} & (1) $u$ is not a variable,\\
& & (2) $\theta=\mgu(l,u)$,\\
& & (3) $r\theta\not\succeq l\theta$ and $t\theta\not\succeq s\theta$,\\[.3em]

\multirow{2}{.45\linewidth}{
\centering
\AxiomC{$\underline{s\not\simeq t}\lor C$}
\LeftLabel{(\mtt{EqRes})}
\UnaryInfC{$C\theta$}
\DisplayProof}
&
\multirow{2}{*}{where} & \multirow{2}{*}{$\theta=\mgu(s,t)$,}\\[1.8em]

\multirow{2}{.45\linewidth}{
\centering
\AxiomC{$\underline{s\simeq t}\lor \underline{u\simeq w} \lor C$}
\LeftLabel{(\mtt{EqFac})}
\UnaryInfC{$(s\simeq t\lor t\not\simeq w \lor C)\theta$}
\DisplayProof}
&
\multirow{2}{*}{where} & (1) $\theta=\mgu(s,u)$,\\
& & (2) $t\theta\not\succeq s\theta$ and $w\theta\not\succeq t\theta$.\\[.5em]
\end{tabular}
\caption{The superposition calculus \Sup{} for first-order logic with equality\label{fig:sup}.}
\end{figure}

Completeness and efficiency of saturation-based reasoning relies
on selecting and adding clauses from/to $S$ using the inference system $\mathcal{I}$. 
To constrain the inference system,
some first-order provers use simplification orderings
on terms. Simplification orderings are extended to
orderings over literals and clauses using the bag extension of the ordering; for simplicity, we write $\succ$  both for  the term ordering and its clause ordering extensions. Given an ordering $\succ$, a clause $C$ is \textit{redundant} with respect to a set $S$ of clauses if there exists a subset $S'$ of $S$ such that $S'$ implies $C$ and is smaller than $\{C\}$, i.e. $S'\implies C$ and $\{C\}\succ S'$.

The \textit{superposition calculus}, denoted \Sup{} and given in Figure~\ref{fig:sup}, is the most common inference system used by saturation-based first-order theorem provers~\cite{NieuwenhuisRubio:HandbookAR:paramodulation:2001}. We assume a literal selection function satisfying the standard condition on $\succ$ and underline selected literals in \Sup{} inferences. The \Sup{} calculus is \textit{sound} and \textit{refutationally complete}: for any unsatisfiable formula $\neg G$, the empty clause $\square$ can be derived as a logical consequence of $\neg G$.

\subsection{Inductive Reasoning in Saturation}\label{sec:ind}
Inductive reasoning has recently been  embedded in  saturation-based theorem proving~\cite{Cruanes17,CADE19}, by extending $\Sup$ with a new inference rule. More precisely, we introduce a family of induction inference rules parameterized by a second-order formula $G$ with exactly one free second-order variable $F$: the formula over which induction should be applied. Moreover, we restrict inductions to a set of terms $\mathcal{I}nd(\mathcal{T})\subseteq\mathcal{T}$ where $\mathcal{T}$ is the set of all terms. Then, the inference rules are of the following form:
\begin{center}
\begin{tabular}{c p{.1\linewidth} p{.45\linewidth}}
\multirow{3}{*}{
\AxiomC{$\overline{L}[t] \lor C$}
\LeftLabel{$(\Ind_G)$}
\UnaryInfC{$\cnf(\neg G[L[x]] \lor C)$}
\DisplayProof
}
&
\multirow{3}{*}{where}
&
(1) $L[t]$ is ground and $t\in\mathcal{I}nd(\mathcal{T})$,\\
& & (2) $\forall F.(G[F]\rightarrow \forall x.F[x])$ is a valid\\
& & \hspace{1.7em}second-order \textit{induction schema}.
%, that is,  an instance of a valid induction schema.
\end{tabular}
\end{center}
By an \emph{induction axiom} we refer to an instance of a valid induction schema. When performing an $\Ind_G$ inference, the induction schema $\forall F.(G[F]\rightarrow \forall x.F[x])$ is said to be \textit{applied on} the clause $\overline{L}[t]\lor C$, or alternatively speaking $\overline{L}[t]\lor C$ is \textit{inducted upon}; in addition, we also say that we 
\textit{induct on term} $t$ in clause $\overline{L}[t]\lor C$ with induction schema $\forall F.(G[F]\rightarrow \forall x.F[x])$. For example, using the schema~\eqref{eq:list-schema}, we parameterize the $\Ind_G$ schema with $G:=F[\nil]\land\forall x,y.(F[y]\to F[\cons(x,y)])$ and obtain the $\Ind_G$ instance: 
\begin{center}
\begin{tabular}{c p{.05\linewidth} l}
\multirow{3}{*}{
\AxiomC{$\overline{L}[t] \lor C$}
\LeftLabel{$(\Ind_{G})$}
\UnaryInfC{$\begin{matrix}\overline{L}[\nil]\lor L[c_y] \lor C\\
\overline{L}[\nil]\lor \overline{L}[\cons(c_x,c_y)] \lor C\end{matrix}$}
\DisplayProof}
&
\multirow{3}{*}{where} &
(1) $L[t]$ is ground,\\
& & (2) $t\in\mathcal{I}nd(\mathcal{T})$ is of sort $\lst$,\\
& & (3) $c_x$ and $c_y$ are fresh Skolem symbols.\medskip\\
\end{tabular}
\end{center}
Note that the above $\Ind_G$ inference instance yields two clauses.

\section{Efficient Rewriting in Saturation}\label{sec:rewriting}

As motivated in Section~\ref{motivating-example}, rewriting derives clauses useful for auxiliary lemma generation that \Sup{} is not able to derive.
% %
% \begin{definition}[Non-erasing equation]
% Let $l$ and $r$ be terms. We call the unit clause $l\simeq r$ a \emph{non-erasing equation} if $\Var(l)=\Var(r)$.
% \end{definition}
% %
% Non-erasing equations never introduce new variables into any rewritten clause, hence ground literals rewritten with non-erasing equations remain ground. For example, clauses~\eqref{eq:plus1}--\eqref{eq:len1} are non-erasing equations, while clause~\eqref{eq:len2} is not. We will use non-erasing equations for rewriting throughout the paper.
We therefore focus on rewriting variants captured by  the following inference rule:
\begin{center}
    \AxiomC{$C[l\theta]$}
    \AxiomC{$l\simeq r$}
    \LeftLabel{$(\Rw)$}
    \BinaryInfC{$C[r\theta]$}
    \DisplayProof
\end{center}
where $\theta$ is a substitution. 
We call an \Rw inference a \emph{downward rewrite} if $l\theta\succ r\theta$, and call an \Rw inference an \emph{upward rewrite} if $l\theta\prec r\theta$. 

We start by defining our base inference system, called the \textit{\textbf{Re}writing \textbf{C}alculus} (\ReC), as the calculus extending $\Sup$ with $\Rw$. In other words, we define \ReC to  consists of the inference rules of $\Sup\cup\{\Rw\}$. The refutational completeness of \ReC follows from the completeness of its subsystem \Sup.
 In addition to completeness, we consider the following property over inference systems, and in particular over \ReC. 
\begin{restatable}[Equational derivability (ED)]{definition}{edDef}
\label{def:ed}
Let $\theta$ be a substitution. 
An inference system $\mathcal{I}$ \emph{admits equational derivability (ED)} if, for any set of equations $\mathcal{C}$, equation $l\simeq r$ and clause $D[l\theta]$, if $\mathcal{C}\vdash^*_\mathcal{I} D[l\theta]$  then $\mathcal{C},l\simeq r\vdash^*_\mathcal{I} D[r\theta]$.
\end{restatable}
Equational derivability in Definition~\ref{def:ed} essentially expresses that an inference system can simulate the application of the \Rw rule, by some (possibly longer) derivation. This allows us to introduce and compare variants of \ReC, by imposing additional  rewriting constraints in \Rw. We state the following, straightforward result.
\begin{restatable}[\ReC--ED]{theorem}{recEgd}
The inference system \ReC admits ED.
\end{restatable}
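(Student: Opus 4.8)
The plan is to show that the $\Rw$ inference rule itself is an admissible rule of \ReC, i.e., that any single application of $\Rw$ can be simulated by a derivation using only the rules of $\Sup \cup \{\Rw\}$. But since \ReC \emph{already contains} $\Rw$ as a primitive rule, this simulation is trivial: a single $\Rw$ step \emph{is} a (one-step) \ReC derivation. So the proof is essentially bookkeeping on the definition of ED.

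Concretely, I would proceed as follows. First, fix an arbitrary set of equations $\mathcal{C}$, an equation $l \simeq r$, a substitution $\theta$, and a clause $D[l\theta]$ such that $\mathcal{C} \vdash^*_{\ReC} D[l\theta]$. I must produce a \ReC-derivation of $D[r\theta]$ from $\mathcal{C} \cup \{l \simeq r\}$. Since \ReC-derivations are closed under adding premises to the initial clause set (a derivation from $\mathcal{C}$ is a fortiori a derivation from $\mathcal{C} \cup \{l \simeq r\}$), we first obtain $\mathcal{C}, l \simeq r \vdash^*_{\ReC} D[l\theta]$. Then, appending one further inference step, namely the $\Rw$ inference
\begin{center}
    \AxiomC{$D[l\theta]$}
    \AxiomC{$l\simeq r$}
    \LeftLabel{$(\Rw)$}
    \BinaryInfC{$D[r\theta]$}
    \DisplayProof
\end{center}
whose left premise $D[l\theta]$ has just been derived and whose right premise $l \simeq r$ is available in the (extended) initial set, yields $\mathcal{C}, l \simeq r \vdash^*_{\ReC} D[r\theta]$, as required. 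Note this inference is a legitimate instance of $\Rw$: the rule as stated imposes no ordering or non-variable constraints — those come only with the later restricted variants of \ReC — so no side conditions need checking here.

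There is really no substantial obstacle; the only points that need a word of care are (i) confirming that $D[l\theta]$ matches the premise shape of $\Rw$ after identifying the rewritten subterm occurrence, which is immediate from the notation $D[l\theta]$ itself, and (ii) confirming the monotonicity property that a derivation over a clause set remains a derivation over any superset — this is standard for saturation-style derivations and follows because every inference step uses premises that are either initial clauses or previously derived clauses, a property preserved under enlarging the initial set. Hence \ReC admits ED. I would remark that this argument does not use soundness, completeness of \Sup, or any specific feature of \Sup at all — it works for \emph{any} inference system containing $\Rw$ — which is precisely why the theorem is labelled ``straightforward''; its role is to serve as the base case against which the more delicate ED results for the \emph{restricted} variants of \ReC (where $\Rw$ applications are constrained and must genuinely be simulated by $\Sup$-style inferences) are later contrasted.
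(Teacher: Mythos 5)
Your proof is correct and is essentially identical to the paper's own argument: both simply extend the given derivation of $D[l\theta]$ (which remains valid over the enlarged premise set $\mathcal{C}\cup\{l\simeq r\}$) by a single unconstrained $\Rw$ step to obtain $D[r\theta]$. The extra bookkeeping you supply about monotonicity under adding premises is implicit in the paper's one-line proof but harmless.
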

In the sequel, we develop three improved variants of \ReC that admit ED, and thus  derive the same consequences with equations as \ReC does.

\subsection{Peak Elimination in \ReC}
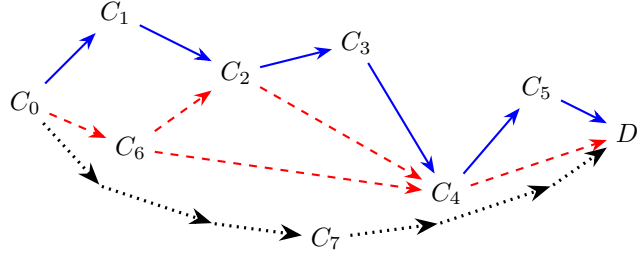
\begin{figure}[t]
\centering
\begin{tikzpicture}[n/.style={fill,circle,inner sep=2pt},outer sep=0pt]
    \node (s) at (0,0) {$C_0$};
    \node (s1) at (1.2,1.2) {$C_1$};
    \node (s1p) at (1.4,-0.6) {$C_6$};
    \node (s2) at (2.8,0.4) {$C_2$};
    \node (s3) at (4.4,0.8) {$C_3$};
    \node (s4) at (5.6,-1.2) {$C_4$};
    \node (s5) at (6.8,0.2) {$C_5$};
    \node (t) at (8,-0.4) {$D$};
    \node (sp) at (4,-1.8) {$C_7$};
    \node[inner sep=0pt] (sp1) at (1,-1.1) {};
    \node[inner sep=0pt] (sp2) at (2.5,-1.6) {};
    \node[inner sep=0pt] (sp3) at (5.5,-1.6) {};
    \node[inner sep=0pt] (sp4) at (7,-1.1) {};
    % \node[] at (1.3,0.3) {$\Diamond$};
    % \node[] at (3,-0.3) {$\bigtriangleup_1$};
    % \node[] at (4.2,0.2) {$\bigtriangleup_2$};
    % \node[] at (6.9,-0.3) {$\bigtriangleup_3$};
    \draw[bluearrow] (s) -- (s1) {};
    \draw[bluearrow] (s1) -- (s2) {};
    \draw[bluearrow] (s2) -- (s3) {};
    \draw[bluearrow] (s3) -- (s4) {};
    \draw[bluearrow] (s4) -- (s5) {};
    \draw[bluearrow] (s5) -- (t) {};
    \draw[redarrow] (s) -- (s1p) {};
    \draw[redarrow] (s1p) -- (s2) {};
    \draw[redarrow] (s1p) -- (s4) {};
    \draw[redarrow] (s2) -- (s4) {};
    \draw[redarrow] (s4) -- (t) {};
    % \draw[dotted,thick] (s) -- (sp) {};
    % \draw[dotted,thick] (t) -- (sp) {};
    \draw[blackarrow] (s) -- (sp1) {};
    \draw[blackarrow] (sp1) -- (sp2) {};
    \draw[blackarrow] (sp2) -- (sp) {};
    \draw[blackarrow] (sp) -- (sp3) {};
    \draw[blackarrow] (sp3) -- (sp4) {};
    \draw[blackarrow] (sp4) -- (t) {};
\end{tikzpicture}
\caption{Possible rewrite sequences from a ground clause $C_0$ to a ground clause $D$. The total order between clauses is $C_1\succ C_3\succ C_2\succ C_5\succ C_0\succ D\succ C_6\succ C_4\succ C_7$, as also visualized by the vertical  alignment of clauses.}
\label{fig:par-paths}
\end{figure}
Let $\mathcal{C}$ be a satisfiable set of clauses. Suppose there is some ground clause $D$ that triggers the generation of a necessary inductive axiom, and suppose $D$ can be derived from $\mathcal{C}$ via rewrites with equations in $\mathcal{C}$. Hence our goal is to derive $D$. In Figure \ref{fig:par-paths}, we show\footnote{similarly to~\cite{goal_oriented_completion}.} possible ways to derive $D$  from a ground clause $C_0\in\mathcal{C}$ using equations in $\mathcal{C}$. Arrows of Figure \ref{fig:par-paths} point in the direction of deduction. Assume that all clauses in Figure~\ref{fig:par-paths} are ground; using a  total simplification ordering $\succ$ over ground clauses, we order clauses in 
Figure~\ref{fig:par-paths} as given by their vertical alignment in Figure~\ref{fig:par-paths}.
%can place the clausest in the diagram such that their vertical order coincides with their order w.r.t.\ $\succ$. 
Therefore, an arrow going vertically upwards (resp. downwards) in Figure~\ref{fig:par-paths} corresponds to an upward (resp. downward) rewrite variant of \Rw. We use three different arrows in Figure~\ref{fig:par-paths}, corresponding to paths available at different saturation  steps (iterations) while    saturating $\mathcal{C}$:
\begin{enumerate}[label=(\arabic*),leftmargin=1.7em]
\item Arrows \bluearrow{2em} designate a path which is possible in a certain iteration $i$ during saturation, that is with equations available at iteration $i$.
\item Arrows \redarrow{2em} correspond to paths in later iterations than $i$ but not necessarily at the end of the saturation process.
\item Arrows \blackarrow{2em} correspond to the ``ideal path'' at the end of the saturation process, that is, when the equations are transformed into a set of equations corresponding to a complete (non-overlapping and terminating) rewrite system.
\end{enumerate}
As shown by the many  rewriting steps of 
Figure~\ref{fig:par-paths}, choosing a  path between $C_0$ and $D$ is not trivial. For example, using arrows \bluearrow{2em}, we may derive $D$ from $C_0$ in iteration $i$ already, but in principle we have to exhaustively apply rewrites in all ``directions'', resulting in many duplicate clauses. A different strategy is to wait until saturation end, in which case using arrows \blackarrow{2em} we rewrite $C_0$ into its normal form $C_7$, and then from $C_7$ we reach $D$ only by upward rewrites. However, saturation may never terminate, for example  in the presence of associativity and commutativity (AC) axioms.

Another option is to find a path of a specific form during saturation, such as the paths designated by \redarrow{2em} arrows in Figure~\ref{fig:par-paths}. We propose  to avoid so-called \textit{peaks} during saturation, where a peak comes with an upward rewrite followed by a downward rewrite. That is, upward rewrites followed by downward rewrites should be avoided. Depending on the positions in which the  upward and downward rewrites happen, the following two possibilities occur:  %  are then possible in  possibilities are thus There are two possibilities, depending on the positions in which the two rewrites happen:

\begin{enumerate}[label=(\roman*),leftmargin=1.7em]
\item If the positions of 
upward and downward rewrites
are \textit{parallel}, the two rewrites can be simply flipped. For example,  the path $C_0\bluearrow{2em}C_1\bluearrow{2em}C_2$ of  Figure \ref{fig:par-paths} is replaced by the path $C_0\redarrow{2em}C_6\redarrow{2em}C_2$.

\item If the positions are \textit{overlapping}, there is a superposition between the two rewriting equations of the peak. This superposition  inference generates an equation that ``cuts'' the peak, giving a one-step rewrite alternative instead of two rewrites. For example, the peak $C_4\bluearrow{2em} C_5\bluearrow{2em} D$ can be replaced by $C_4\redarrow{2em} D$. Note that sometimes multiple superpositions have to be performed before the path can be continued, e.g. the (double)
peak $C_6\redarrow{2em} C_2\bluearrow{2em} C_3\bluearrow{2em} C_4$ needs two superpositions to be eliminated, and performed simply as $C_6\redarrow{2em} C_4$.
\end{enumerate}
\begin{figure}[t]
\centering
\begin{tikzpicture}[outer sep=0]
    % \node[] (s0) at (0,0) {(\ref*{eq:negconj})};
    % \node[] (s1) at (1,1.5) {(\ref*{eq:negconj1})};
    % \node[] (s2) at (3,3) {(\ref*{eq:negconj2})};
    % \node[] (s3) at (5,2.7) {(\ref*{eq:negconj3})};
    % \node[] (s4) at (7,2.4) {(\ref*{eq:negconj4})};
    % \node[] (s5) at (9,2.1) {(\ref*{eq:negconj5})};
    % \node[white] (s6) at (2,-.3) {};
    % \node[] (s7) at (4,-.6) {(\ref*{eq:negconj-alt1})};
    % \node[] (s8) at (5,.9) {(\ref*{eq:negconj-alt2})};

    \node[] (s0) at (0,0) {\small \eqref{eq:negconj} $\lenif{g(d)\appif f(c)}\not\simeq\suc(\lenif{c}+\lenif{d})$};
    \node[] (s1) at (0.5,1.5) {\small \eqref{eq:negconj1} $\lenif{g(d)\appif f(c)}\not\simeq\suc(\lenif{c})+\lenif{d}$};
    \node[] (s2) at (1,3.2) {\small \eqref{eq:negconj2} $\lenif{g(d)\appif f(c)}\not\simeq\lenif{f(g(c))}+\lenif{d}$};
    \node[] (s3) at (6,2.2) {\small \eqref{eq:negconj3} $\lenif{g(d)\appif f(c)}\not\simeq\lenif{g(c)}+\lenif{f(d)}$};
    \node[anchor=east,xshift=0pt] (s4) at (11.5,1.2) {\small \eqref{eq:negconj4} $\lenif{f(d)\appif g(c)}\not\simeq\lenif{g(c)}+\lenif{f(d)}$};
    \node[] (s5) at (4.8,-1) {\small$\lenif{f(d)\appif g(c)}\not\simeq\suc(\lenif{c}+\lenif{d})$};
    \node[] (s6) at (5.2,.5) {\small$\lenif{f(d)\appif g(c)}\not\simeq\suc(\lenif{c})+\lenif{d}$};
    \draw[bluearrow] (s0) -- (s1) node[midway,xshift=-1.4em] {\small\ref*{eq:plus2}};
    \draw[bluearrow] (s1) -- (s2) node[midway,xshift=-.8em,yshift=.5em] {\small\ref*{eq:ax3}};
    \draw[bluearrow] (s2) -- (s3) node[midway,yshift=.5em] {\small\ref*{eq:ax2}};
    \draw[bluearrow] (s3) -- (s4.north) node[midway,yshift=.5em] {\small\ref*{eq:ax1}};
    \draw[redarrow] (s1) -- (s3) node[midway,yshift=-.7em] {\small\ref*{eq:sup1}};
    % \draw[a,red,dashed] (s0) -- (s3) {};
    \draw[redarrow] (s0) -- (s5)  node[midway,yshift=-.6em] {\small\ref*{eq:ax1}};
    \draw[redarrow] (s5) -- (s6)  node[midway,xshift=1.4em] {\small\ref*{eq:plus2}};
    \draw[redarrow] (s6) -- (s4.south)  node[midway,yshift=-.6em] {\small\ref*{eq:sup1}};
    \draw[redarrow] (s1) -- (s6)  node[midway,yshift=-.6em] {\small\ref*{eq:ax1}};
    % \draw[a,red,dashed] (s6) -- (s4) {};
    % \draw[a,red,dashed] (s7) -- (s5) {};
\end{tikzpicture}
\label{fig:ex-paths}
\caption{Possible rewrite sequences to derive clause~\eqref{eq:negconj4} from~\eqref{eq:negconj} in the example of Section~\ref{motivating-example}.}
\end{figure}
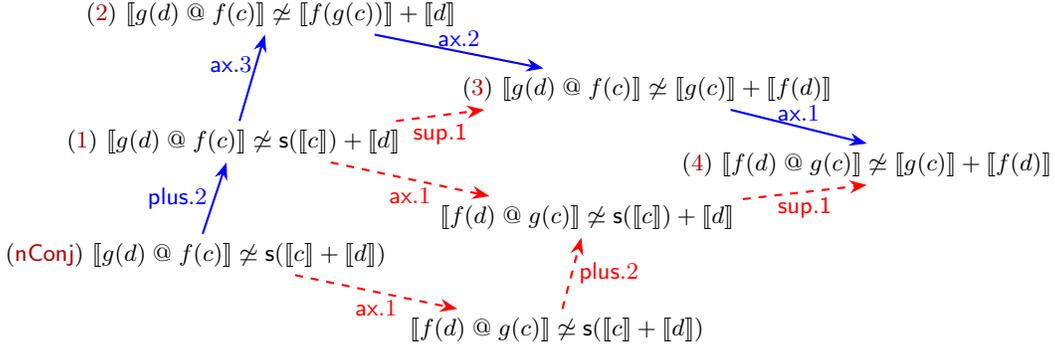
To avoid  peaks in saturation, we distinguish clauses  resulting from upward rewrites (annotated as $\urw{C}$) from other clauses (annotated as  $\drw{C}$). We use the notation $\erw{C}$ to denote either of these. We might leave clauses without annotation if this information is not relevant in the context. We split the \Rw inference into two components, resulting in the following inferences:\medskip
\begin{center}
\begin{minipage}{.5\linewidth}
\centering
\AxiomC{$\drw{C[l\theta]}$}
\AxiomC{$\drw{l\simeq r}$}
\LeftLabel{$(\Rw_\downarrow)$}
\BinaryInfC{$\drw{C[r\theta]}$}
\DisplayProof\quad
where $l\theta\npreceq r\theta$,
\end{minipage}\begin{minipage}{.5\linewidth}
\AxiomC{$\erw{C[l\theta]}$}
\AxiomC{$\drw{l\simeq r}$}
\LeftLabel{$(\Rw_\uparrow)$}
\BinaryInfC{$\urw{C[r\theta]}$}
\DisplayProof
\quad where $l\theta\nsucceq r\theta$.
\end{minipage}
\end{center}
\medskip
We denote our \emph{\ReC variant for  avoiding peaks in saturation}  by $\ReC_\lor$, and define  $\ReC_\lor$ to consist of the inference rules of $\Sup\cup\{\Rw_\downarrow,\Rw_\uparrow\}$. %\footnote{The symbol $\lor$ in $\ReC_\lor$ stands for the valley-shaped derivations in this inference system}.
\begin{remark}
Note that the $\Rw_\downarrow$ and $\Rw_\uparrow$ rules both allow rewriting with incomparable equations. The reason for this is that disallowing rewrites with incomparable equations after upward rewrites violates ED in some cases\footnote{see Appendix B for details}.
\end{remark}
\begin{example}
Solid blue lines (\bluearrow{2em}) in  Figure~\ref{fig:ex-paths} show the sequence of rewrite steps to reach clause~\eqref{eq:negconj4} from clause~\eqref{eq:negconj} within the  motivating example of Section~\ref{motivating-example}, when using \ReC. 

Alternatively, we can perform the following \redarrow{2em} steps with $\ReC_\lor$. A superposition into clause~\eqref{eq:ax2} with \eqref{eq:ax3} results in
\begin{equation}
\lenif{g(x)}+\lenif{f(y)}\simeq\suc(\lenif{x})+\lenif{y}.\label{eq:sup1}\tag{\mtt{sup.1}}
\end{equation}
Using clause~\eqref{eq:sup1}, we eliminate the peak through clause~\eqref{eq:negconj2}, and directly derive clause~\eqref{eq:negconj3} from clause~\eqref{eq:negconj1} in $\ReC_\lor$. % as shown in Figure~\ref{fig:ex-paths}. 
Note  that we can switch the order of rewrites using clauses~\eqref{eq:ax1} and \eqref{eq:plus2}; and similarly the order of rewrites using clauses~\eqref{eq:ax1} and \eqref{eq:sup1}.
We thus obtain the $\ReC_\lor$ derivation  of clause~\eqref{eq:negconj4} via rewriting clause~\eqref{eq:negconj}  with~\eqref{eq:ax1}, then with~\eqref{eq:plus2}, and finally with~\eqref{eq:sup1}.\qed

% \begin{figure}
% \centering
% \begin{tikzpicture}[n/.style={fill,circle,inner sep=2pt},a/.style={thick,-Stealth}]
%     \node[n,label={[anchor=east]${\suc(\lenif{c}+\lenif{d})}$}] (s) at (0,-1) {};
%     \node[n,label={[anchor=east]${\suc(\lenif{c})+\lenif{d}}$}] (s1) at (.5,0) {};
%     \node[n,label=${\lenif{f(g(c))}+\lenif{d}}$] (s2) at (1.5,1) {};
%     \node[n,label={[anchor=west]${\lenif{g(c)}+\lenif{f(d)}}$}] (t) at (2.5,0.5) {};
%     \draw[a,blue] (s) -- (s1) {};
%     \draw[a,blue] (s1) -- (s2) {};
%     \draw[a,blue] (s2) -- (t) {};
%     \draw[a,dashed,red] (s1) -- (t) {};
%     \draw[a,dashed,red] (s) -- (t) {};
% \end{tikzpicture}
% \caption{Rewriting ${\suc(\lenif{c}+\lenif{d})}$ into ${\lenif{g(c)}+\lenif{f(d)}}$.}
% \end{figure}
\end{example}
We conclude $\ReC_\lor$ with the following result.
\begin{restatable}[$\ReC_\lor$--ED]{theorem}{ugdRwDownarrow}
\label{thm:rw_v_ugd}
The inference system $\ReC_\lor$ admits ED.
\end{restatable}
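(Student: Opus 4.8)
The plan is to show that any single application of \(\Rw\) can be simulated in \(\ReC_\lor\), and then iterate. So suppose \(\mathcal{C}\vdash^*_{\ReC_\lor} D[l\theta]\) and we are given \(l\simeq r\); we must derive \(D[r\theta]\) in \(\ReC_\lor\) from \(\mathcal{C}, l\simeq r\). Since \(\ReC_\lor\) contains all of \(\Sup\), and \(\ReC\) admits ED (by the \ReC--ED theorem), the only thing that can go wrong is the bookkeeping of the \(\urw{\cdot}\)/\(\drw{\cdot}\) annotations and the side conditions splitting \(\Rw\) into \(\Rw_\downarrow\) and \(\Rw_\uparrow\). Concretely, \(\Rw_\downarrow\) is only allowed when the premise clause is annotated \(\drw{\cdot}\) and the step is non-upward (\(l\theta\npreceq r\theta\)), while \(\Rw_\uparrow\) accepts a premise with either annotation but produces a \(\urw{\cdot}\) clause; crucially \(\Rw_\downarrow\) cannot be applied to a clause produced by \(\Rw_\uparrow\). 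So the real content of the theorem is: a path of rewrites that contains a \emph{peak} (an upward step followed by a downward step) can always be rerouted through \(\ReC_\lor\) without peaks, using the peak-elimination moves informally described before the statement.

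First I would fix terminology: call a rewrite \emph{upward} if \(l\theta\prec r\theta\), \emph{downward} if \(l\theta\succ r\theta\), and \emph{side} if \(l\theta\) and \(r\theta\) are incomparable. Observe that \(\Rw_\uparrow\) captures exactly "upward or side" and \(\Rw_\downarrow\) exactly "downward or side", and that a side rewrite may legitimately be performed by either rule — this is the point of the Remark, and it is what keeps ED from breaking. So the annotation discipline forbids precisely the pattern "a genuinely upward step, then a genuinely downward step." I would then argue by induction on a suitable well-founded measure on rewrite sequences — e.g. the multiset of clauses occurring in the sequence, ordered by \(\succ\), or the number of upward-then-downward adjacencies — that any \(\ReC\)-derivation witnessing \(\mathcal{C}\vdash^* D[l\theta] \to D[r\theta]\) can be transformed into a peak-free one. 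The base case (no peak) is immediate: a peak-free sequence starts with some downward/side steps (all doable by \(\Rw_\downarrow\)) and continues with upward/side steps (all doable by \(\Rw_\uparrow\), which accepts the \(\drw{\cdot}\)-annotated intermediate clause), and the final appended step \(D[l\theta]\to D[r\theta]\) is handled by whichever rule matches its direction.

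For the inductive step I would take a peak, i.e. consecutive steps \(E_1 \xrightarrow{\text{up}} E_2 \xrightarrow{\text{down}} E_3\), and do the standard critical-pair / commutation analysis on the two rewrite positions, as sketched in cases (i) and (ii) before the statement. If the positions are parallel, swap the two steps; the result is \(E_1 \xrightarrow{\text{down}} E_2' \xrightarrow{\text{up}} E_3\), which has strictly fewer up-then-down adjacencies (the new \(E_2'\) sits below \(E_1\) and \(E_3\), so no new peak is introduced with neighbours), and we invoke the induction hypothesis. If the positions overlap, the two rewrite equations form a superposition; performing that \(\Sup\) inference (available in \(\ReC_\lor\)) produces an equation that gives a direct one-step rewrite \(E_1 \to E_3\), possibly after several superpositions as in the double-peak example — here one needs a small sub-induction showing the overlap can be fully resolved, which is exactly the classical local-confluence argument and terminates because each resolved overlap strictly decreases \(\{E_1,E_3\}\) or the overlap's "size." Replacing the peak by this shortcut again reduces the measure. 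Then I would iterate over all remaining steps of the derivation: since \(\Rw\) steps compose and each is individually simulable, \(\mathcal{C},l\simeq r\vdash^*_{\ReC_\lor} D[r\theta]\) follows.

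The main obstacle I expect is the overlapping-positions case: one must be careful that the newly superposed equation is orientable (or usable as a side/downward \(\drw{\cdot}\) equation) and that the chain of superpositions needed to "cut" a multi-step peak actually terminates and yields a peak-free continuation — this is where the ordering hypotheses (total simplification ordering on ground terms, \(r\theta\not\succeq l\theta\) side conditions in \(\Sup\)) must be used, and where one has to confirm that the intermediate clauses produced keep a consistent annotation so that no forbidden \(\Rw_\downarrow\)-after-\(\Rw_\uparrow\) arises. The parallel case and the composition-of-single-steps reduction are routine.
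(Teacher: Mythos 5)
Your high-level diagnosis is right and matches the paper's: the only obstacle is a (strictly) downward rewrite requested on a clause produced by $\Rw_\uparrow$, and it is resolved by commuting parallel steps and superposing overlapping ones. However, your proof strategy --- a forward surgery on the rewrite sequence that eliminates \emph{adjacent} up-then-down peaks --- has two genuine gaps. First, the base case is false as stated: a sequence with no adjacent peak need not be a valley, because of incomparable (``side'') steps. The sequence ``strict up, side, strict down'' contains no consecutive $E_1\prec E_2\succ E_3$ pattern in your sense, yet it is not directly simulable: the strict up step forces a $\urw{\cdot}$ annotation, the side step can only continue via $\Rw_\uparrow$ (its premise is $\urw{\cdot}$, so $\Rw_\downarrow$ is unavailable regardless of how the side step is labelled), and the final strict down step then has no $\drw{\cdot}$ premise. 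The paper's proof sidesteps this by inducting on the clause ordering $\succ$ and analysing the \emph{last} inference of the derivation: when the last step was $\Rw_\uparrow$ with an incomparable equation, that same step can be re-performed as $\Rw_\downarrow$ (reducing to the easy case), and the genuinely problematic situation is only an honest $u\sigma\prec w\sigma$ step, pushed backwards through the derivation by the induction hypothesis. Your sequence-level measure would have to be redefined (e.g.\ ``a strict up step eventually followed by a strict down step'') and re-verified; note also that your alternative measure, the number of up-then-down adjacencies, does not obviously decrease under the parallel swap, since turning the first step out of $E_1$ from up into down can create a new adjacency with the step entering $E_1$ (the multiset-of-clauses measure does work for adjacent peaks).

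Second, your case split ``parallel vs.\ overlapping, and overlapping means superposition'' omits the variable-overlap case, which is where the paper's proof does most of its work. If the downward redex $l\theta$ sits at or below a \emph{variable} position of $w$ (or symmetrically $w\sigma$ below a variable of $l$), there is no $\Sup$ inference (its side condition requires a non-variable overlap position), and no simple swap either: because equations need not be left-linear (the paper's own remark uses $g(x,x)\simeq x$ to justify allowing incomparable rewrites after upward ones), one rewrite under a variable of $l$ in $l\theta$ may correspond to several rewrites in $r\theta$ and vice versa. The paper handles this with explicit substitution surgery (the modified substitutions $\sigma'$ and $\theta'$) plus repeated applications of the induction hypothesis, followed by extra $\Rw_\uparrow$ steps to restore the occurrences that should not have been rewritten. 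Without this case your argument does not go through; with it, and with a corrected treatment of incomparable steps and a committed well-founded measure, your plan can be completed, but it would then essentially reconstruct the paper's induction in a different packaging.
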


\subsection{Diamond Elimination in \ReC}

Note that rewrites in parallel positions can be performed in any order. If the rewriting is performed in all possible orders, this leads to a large number of duplicated clauses. In this section, we restrict \ReC and $\ReC_\lor$ to eliminate this effect, while preserving equational derivability from  Definition~\ref{def:ed}.

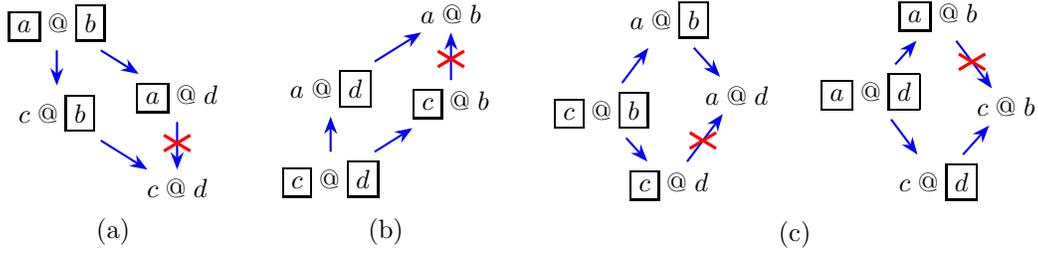
\begin{figure}[t]
\begin{minipage}{.25\linewidth}
\centering
\begin{tikzpicture}
    \node[] (s0) at (0,0) {$\boxed{a}\appif\boxed{b}$};
    \node[] (s1) at (0,-1.2) {$c\appif \boxed{b}$};
    \node[] (s2) at (1.6,-1.0) {$\boxed{a}\appif d$};
    \node[] (s3) at (1.6,-2.2) {$c\appif d$};
    \draw[downa,blue] (s0) -- (s1) {};
    \draw[downa,blue] (s0) -- (s2) {};
    \draw[downa,blue] (s1) -- (s3) {};
    \draw[downa,blue] (s2) -- (s3) {};
    \draw [very thick, red,-] ($(s2)!0.5!(s3)-(5pt,3pt)$) -- 
            ($(s2)!0.5!(s3)+(5pt,3pt)$);
    \draw [very thick, red,-] ($(s2)!0.5!(s3)-(5pt,-3pt)$) -- 
            ($(s2)!0.5!(s3)+(5pt,-3pt)$);
\end{tikzpicture}\\
(a)
\end{minipage}\begin{minipage}{.25\linewidth}
\centering
\begin{tikzpicture}
    \node[] (s0) at (1.6,0) {$a\appif b$};
    \node[] (s1) at (1.6,-1.2) {$\boxed{c}\appif b$};
    \node[] (s2) at (0,-1.0) {$a\appif \boxed{d}$};
    \node[] (s3) at (0,-2.2) {$\boxed{c}\appif \boxed{d}$};
    \draw[downa,blue] (s3) -- (s1) {};
    \draw[downa,blue] (s3) -- (s2) {};
    \draw[downa,blue] (s1) -- (s0) {};
    \draw[downa,blue] (s2) -- (s0) {};
    \draw [very thick, red,-] ($(s1)!0.5!(s0)-(5pt,3pt)$) -- 
            ($(s1)!0.5!(s0)+(5pt,3pt)$);
    \draw [very thick, red,-] ($(s1)!0.5!(s0)-(5pt,-3pt)$) -- 
            ($(s1)!0.5!(s0)+(5pt,-3pt)$);
\end{tikzpicture}\\
(b)
\end{minipage}\begin{minipage}{.5\linewidth}
\centering
\begin{tikzpicture}
    \node[] (s0) at (0,0) {$a\appif \boxed{b}$};
    \node[] (s1) at (-.9,-1.2) {$\boxed{c}\appif \boxed{b}$};
    \node[] (s2) at (.9,-1.0) {$a\appif d$};
    \node[] (s3) at (0,-2.2) {$\boxed{c}\appif d$};
    \draw[downa,blue] (s1) -- (s3) {};
    \draw[downa,blue] (s1) -- (s0) {};
    \draw[downa,blue] (s3) -- (s2) {};
    \draw[downa,blue] (s0) -- (s2) {};
    \draw [very thick, red,-] ($(s3)!0.5!(s2)-(5pt,3pt)$) -- 
            ($(s3)!0.5!(s2)+(5pt,3pt)$);
    \draw [very thick, red,-] ($(s3)!0.5!(s2)-(5pt,-3pt)$) -- 
            ($(s3)!0.5!(s2)+(5pt,-3pt)$);
\end{tikzpicture}
\quad
\begin{tikzpicture}
    \node[] (s0) at (0,0) {$\boxed{a}\appif b$};
    \node[] (s1) at (.9,-1.2) {$c\appif b$};
    \node[] (s2) at (-.9,-1.0) {$\boxed{a}\appif \boxed{d}$};
    \node[] (s3) at (0,-2.2) {$c\appif \boxed{d}$};
    \draw[downa,blue] (s2) -- (s3) {};
    \draw[downa,blue] (s2) -- (s0) {};
    \draw[downa,blue] (s3) -- (s1) {};
    \draw[downa,blue] (s0) -- (s1) {};
    \draw [very thick, red,-] ($(s0)!0.5!(s1)-(5pt,3pt)$) -- 
            ($(s0)!0.5!(s1)+(5pt,3pt)$);
    \draw [very thick, red,-] ($(s0)!0.5!(s1)-(5pt,-3pt)$) -- 
            ($(s0)!0.5!(s1)+(5pt,-3pt)$);
\end{tikzpicture}\\
(c)
\end{minipage}
\caption{Possible parallel rewrites in \ReC given equations $a\simeq c$, $b\simeq d$ with $a\succ c$ and $b\succ d$. Rewrites corresponding to crossed out arrows are not performed in the left-to-right order.\label{fig:diamonds}}
\end{figure}

Figure~\ref{fig:diamonds} illustrates rewriting possibilities for two parallel rewrites with \ReC. Depending on the direction of the rewrites, there are three possibilities, that is, three ``diamonds'': Figure~\ref{fig:diamonds}(a) shows two downward rewrites; Figure~\ref{fig:diamonds}(b) illustrates two upward rewrites; while Figure~\ref{fig:diamonds}(c) shows one downward and one upward rewrite. Note that Figure~\ref{fig:diamonds}(c) contains two subcases, one where the upward rewrite happens in the left position and one where it happens in the right position. We denote the positions to be rewritten with boxes, e.g. $\boxed{a}$. 
To generate all terms in these diamonds without duplicating any terms, we follow the tradition of reduction strategies in programming languages, for example \textit{leftmost-outermost} (also \textit{call-by-need}) and \textit{leftmost-innermost} (also \textit{call-by-value}) strategies~\cite{TermRewritingAndAllThat}. We choose a \textit{left-to-right} rewriting order, that is, we cannot perform a rewrite to the left of the previous rewrite. In Figure~\ref{fig:diamonds}, the skipped rewrites are crossed out in red.

Figure~\ref{fig:diamonds}(a)  and Figure~\ref{fig:diamonds}(b) are the same in $\ReC_\lor$. However,  $\ReC_\lor$ avoids duplication of Figure~\ref{fig:diamonds}(c) in the first place (recall the parallel positions in Figure~\ref{fig:par-paths}). Therefore, in the case of $\ReC_\lor$, we only apply the left-to-right order in the case of multiple consecutive \mtt{Rw_\downarrow} (resp. \mtt{Rw_\uparrow}) inferences. Towards this, we associate with each clause $C$ a position $p$ where the previous rewrite was performed. We denote such clauses by $_pC$. Our modified \Rw rule for avoiding duplicated diamonds is:\medskip
\begin{center}
    \AxiomC{$_pC[l\theta]_q$}
    \AxiomC{$l\simeq r$}
    \LeftLabel{$(\Rw^\to)$}
    \BinaryInfC{$_qC[r\theta]_q$}
    \DisplayProof
    \quad where $q\nless_l p$.
\end{center}
Our \ReC variant for \emph{avoiding duplicated diamonds during rewritings} is denoted 
by $\ReC^\to$ and is defined to be $\Sup\cup\{\Rw^\to\}$. We state the following result.
\begin{restatable}[$\ReC^\to$--ED]{theorem}{recToEgd}
\label{thm:rec_to_egd}
The inference system $\ReC^\to$ admits ED.
\end{restatable}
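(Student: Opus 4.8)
The plan is to reduce this to the ED property of \(\ReC\) (proved earlier) by showing that any \(\Rw\) derivation can be rearranged into one respecting the left-to-right side condition \(q \nless_l p\), without changing its overall effect. Concretely, suppose \(\mathcal{C} \vdash^*_{\ReC} D[l\theta]\); by \(\ReC\)--ED we know \(\mathcal{C}, l\simeq r \vdash^*_{\ReC} D[r\theta]\), so it suffices to transform an arbitrary \(\ReC\) derivation of a ground clause into a \(\ReC^\to\) derivation of the same clause. The \(\Sup\), \(\mathtt{EqRes}\) and \(\mathtt{EqFac}\) inferences are shared by both calculi and unaffected, so the real content is confined to maximal blocks of consecutive \(\Rw\) steps applied to a single clause; I would argue block by block.

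The key step is a local swap lemma: if two consecutive \(\Rw\) steps rewrite at positions \(p_1\) then \(p_2\) with \(p_2 <_l p_1\), then either \(p_1 \parallel p_2\) or one is above the other. If \(p_1 \parallel p_2\), the two rewrites commute (the rewrite relation property (ii), \(l\to r \Rightarrow s[l]\to s[r]\), plus disjointness of positions guarantees the rewritten subterms do not interfere), so we may perform them as \(p_2\) then \(p_1\), and now \(p_1 \nless_l p_2\) holds for the second step. The case \(p_1 \leqslant p_2\) or \(p_2 \leqslant p_1\) is impossible under \(p_2 <_l p_1\) by the definition of \(<_l\) (a position to the left of another is parallel to it), so this exhausts the cases. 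Thus within a block, any adjacent ``left-violating'' pair can be swapped; I would then invoke a standard termination-of-bubble-sort argument: repeatedly swapping such pairs terminates because, for instance, the multiset of positions ordered by \(<_l\) strictly decreases in the appropriate well-founded sense, or more simply, we can sort each block so that positions are non-increasing with respect to \(<_l\). Once sorted, every consecutive pair in the block satisfies \(q \nless_l p\), and we also update each clause's position annotation \({}_pC\) to record the position of its generating rewrite, which is exactly what \(\Rw^\to\) does. The annotation carried into the block from outside can always be taken to be \(\epsilon\) (or reset), since the very first \(\Rw^\to\) step in a block only needs \(q \nless_l \epsilon\), which holds for every \(q\); alternatively one observes annotations are erased by any non-\(\Rw\) inference, so blocks are independent.

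I expect the main obstacle to be handling the \emph{overlapping} (nested) case carefully: when a later \(\Rw\) step rewrites a subterm that lies above or below an earlier rewrite but still ``to the left'' in the term tree. The subtlety is that after a rewrite at \(p_1\), the redex at \(p_2\) may have moved, been destroyed, or been duplicated, so the naive swap fails; one must check that under \(p_2 <_l p_1\) this genuinely cannot happen (the parallelism observation above), and that the positions recorded in the annotations remain consistent after reordering. A secondary concern is making the induction measure for the sorting argument fully rigorous — I would phrase it as induction on the number of inversions (pairs \((i,j)\) with \(i<j\) but the \(i\)-th rewrite position strictly right of... wait, strictly left of — i.e. violating the order) in the block, each swap strictly decreasing this count by one while leaving the block's net rewrite effect and final clause unchanged. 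With these pieces, concatenating the reordered blocks yields a valid \(\ReC^\to\) derivation of \(D[r\theta]\) from \(\mathcal{C}, l\simeq r\), establishing ED.
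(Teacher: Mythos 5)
Your proposal is correct and follows essentially the same route as the paper: the paper also converts an arbitrary $\ReC$ rewrite derivation into a $\ReC^\to$ one by swapping adjacent left-violating pairs (which necessarily occur at parallel positions, since $p<_l q$ implies $p\parallel q$), with termination established via a well-founded order on the sequence of rewrite positions that is exactly your inversion-count argument in disguise.
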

Finally, we define a \ReC variant that \emph{combines peak-elimination with  left-to-right rewriting orders}. We denote this \ReC variant by $\ReC^\to_\lor$. Here, we enforce the left-to-right order separately between downward and upward rewrites, as captured via the following \Rw variants: 
\begin{center}
\begin{tabular}{c p{.1\linewidth} l}
% \begin{minipage}{.5\linewidth}
% \centering
\multirow{2}{*}{\AxiomC{$\drw{_pC[l\theta]_q}$}
\AxiomC{$\drw{l\simeq r}$}
\LeftLabel{$(\Rw^\to_\downarrow)$}
\BinaryInfC{$\drw{_qC[r\theta]_q}$}
\DisplayProof}
& \multirow{2}{*}{where} & (1) $l\theta\npreceq r\theta$,\\
& & (2) $q\nless_l p$,\\[2em]
% \end{minipage}\begin{minipage}{.5\linewidth}
\multirow{2}{*}{
\AxiomC{$\erw{_pC[l\theta]_q}$}
\AxiomC{$\drw{l\simeq r}$}
\LeftLabel{$(\Rw^\to_\uparrow)$}
\BinaryInfC{$\urw{_qC[r\theta]_q}$}
\DisplayProof}
& \multirow{2}{*}{where} & (1) $l\theta\nsucceq r\theta$,\\
& & (2) $\erw{_pC[l\theta]}=\drw{_pC[l\theta]}$ or $q\nless_l p$.
\end{tabular}
% \end{minipage}
\end{center}
\medskip
Our inference system $\ReC^\to_\lor$ is defined as $\Sup\cup\{\Rw^\to_\downarrow,\Rw^\to_\uparrow\}$ and has the following property. % We conclude this section with the following result.
\begin{restatable}[$\ReC^\to_\lor$--ED]{theorem}{recToLorEgd}
\label{thm:rec_to_lor_egd}
The inference system $\ReC^\to_\lor$ admits ED.
\end{restatable}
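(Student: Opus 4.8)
The plan is to prove Theorem~\ref{thm:rec_to_lor_egd} by combining the two techniques whose correctness is already established separately: peak elimination (Theorem~\ref{thm:rw_v_ugd}, giving ED for $\ReC_\lor$) and the left-to-right diamond restriction (Theorem~\ref{thm:rec_to_egd}, giving ED for $\ReC^\to$). Concretely, I would start from an arbitrary derivation $\mathcal{C}\vdash^*_{\ReC^\to_\lor} D[l\theta]$ and the equation $l\simeq r$, and append one $\Rw^\to$-style step producing $D[r\theta]$. The issue is that this final step need not satisfy the side conditions of $\Rw^\to_\downarrow$ or $\Rw^\to_\uparrow$ directly, so the argument has to rewrite the derivation into a normal form.

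The key steps, in order, would be the following. (1) Reduce to the ground case: by the standard lifting argument used for the earlier ED theorems, it suffices to prove the statement for ground derivations, so assume all clauses and the rewrite are ground. (2) First forget the left-to-right annotations and treat the derivation as a $\ReC_\lor$ derivation; append the final rewrite of $l\theta$ to $r\theta$, obtaining a $\ReC_\lor$ derivation of $D[r\theta]$ from $\mathcal{C}, l\simeq r$. This is exactly the content of Theorem~\ref{thm:rw_v_ugd}, so no peak-handling needs to be redone. (3) Now re-impose the left-to-right discipline: take the $\ReC_\lor$ derivation of $D[r\theta]$ and normalize each maximal block of consecutive $\Rw_\downarrow$ steps (resp. consecutive $\Rw_\uparrow$ steps) into left-to-right order, exactly as in the proof of Theorem~\ref{thm:rec_to_egd}. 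Since within such a block all rewrites that can be reordered are at parallel positions (overlapping peaks having already been removed in step 2, and up/down alternations being the boundaries between blocks), the diamond-reordering lemma from the $\ReC^\to$ proof applies blockwise and produces a valid $\ReC^\to_\lor$ derivation. (4) Check that the position annotation carried across a $\Rw^\to_\downarrow$/$\Rw^\to_\uparrow$ boundary is handled by the relaxed second side condition of $\Rw^\to_\uparrow$ (the clause ``$\erw{_pC[l\theta]} = \drw{_pC[l\theta]}$'' disjunct), which is precisely what lets a new up-block start at any position regardless of where the preceding down-block left off; symmetrically a down-block following an up-block is unconstrained because $\Rw^\to_\downarrow$ only restricts relative to the previous \emph{downward} rewrite position, which is reset.

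I expect the main obstacle to be step (3)–(4): making precise that the block structure of the $\ReC_\lor$ derivation lets us apply the $\ReC^\to$ reordering argument \emph{independently} on each block without the reorderings in one block invalidating the side conditions at block boundaries. The subtlety is that reordering a down-block changes which position ``the last downward rewrite'' occurred at, which is fed into the first step of the next down-block (two down-blocks are separated by at least one up-block). One must verify that the relaxed side condition of $\Rw^\to_\uparrow$ absorbs this, and that when control returns to a down-block the annotation has effectively been reset by the intervening up-step(s), so the $q \nless_l p$ constraint of $\Rw^\to_\downarrow$ is satisfiable by re-normalizing that block too. Once the bookkeeping of which position annotation is in force at each step is made explicit, the reorderings are the same parallel-commutation moves already justified for Theorems~\ref{thm:rw_v_ugd} and~\ref{thm:rec_to_egd}, and ED follows. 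If a fully self-contained argument is wanted, an alternative is to prove a single combined reordering lemma (``any $\ReC$ derivation of a ground clause can be transformed into a $\ReC^\to_\lor$ derivation of the same clause from the same premises, possibly using extra superposition consequences'') and then instantiate it with the premise set $\mathcal{C}\cup\{l\simeq r\}$ and target $D[r\theta]$; this lemma would be proved by a well-founded induction on a measure combining the clause ordering of the peaks with the number of out-of-order parallel pairs.
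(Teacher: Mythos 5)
Your proposal is correct and follows essentially the same route as the paper: invoke Theorem~\ref{thm:rw_v_ugd} to obtain a $\ReC_\lor$ derivation of $D[r\theta]$, then re-impose the left-to-right discipline by applying the position-sequence reordering of Theorem~\ref{thm:rec_to_egd} separately to the downward and upward rewrites. The boundary bookkeeping you flag as the main obstacle is in fact moot, since the annotation discipline of $\ReC_\lor$ forces every rewrite chain to be a single block of $\Rw_\downarrow$ steps followed by a single block of $\Rw_\uparrow$ steps (no down-block can follow an up-block), which is exactly the observation the paper uses to treat the two sequences independently.
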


\section{Redundancy and Induction}\label{sec:induction}

As mentioned in Section~\ref{sec:introduction}, the \Sup calculus tries to derive and retain as few clauses as possible without losing (refutational)  completeness. Within the  \ReC calculus, as well as within its three refinements $\ReC_\lor$, $\ReC^\to$ and $\ReC_\lor^\to$, however, we not only derive more consequences, but we  prevent their simplifications, resulting in less efficient reasoning than via $\Sup$. The situation  gets even worse when the prolific $\Ind_G$ rule is used to enable inductive reasoning.

%Redundancy elimination is thus essential in our \ReC calculi. 
As a remedy, this section integrates  redundancy elimination within our \ReC calculi extended with $\Ind_G$ inferences. Our main goal is to be as efficient as possible without losing inductive proofs. This includes, for example, preserving the ED property for our calculi extended with $\Ind_G$. We  introduce sufficient criteria to skip induction inferences in \ReC, and  weaken the ED restriction to avoid deriving  useless clauses for first-order and inductive reasoning.

% Similarly to standard superposition reasoning~\cite{CAV13}, we employ \textit{subsumption} rules for the \ReC calculi to delete clauses from the search space with the purpose of keeping it small. For example, for the $\ReC_\lor$ calculus, we get the following subsumption inference rule variant:
% %
% \begin{center}
% \begin{tabular}{c l l}
% % \multirow{2}{*}{
% % \AxiomC{$\bcancel{\drw{C[l\theta]}}$}
% % \AxiomC{$\drw{l\simeq r}$}
% % \LeftLabel{$(\mtt{Dem_\lor})$}
% % \BinaryInfC{$\drw{C[r\theta]}$}
% % \DisplayProof} &
% % \multirow{2}{*}{where} & (1) $l\theta\succ r\theta$,\\
% % & & (2) $C[l\theta]\succ(l\simeq r)\theta$,\\
% % \\
% % 
% \multirow{2}{*}{
% \AxiomC{$\bcancel{\erw{C}}$}
% \AxiomC{$\erw{D}$}
% \LeftLabel{$(\mtt{Subs_\lor})$}
% \BinaryInfC{\phantom{C}}
% \DisplayProof} &
% \multirow{2}{*}{where} & (1) $C=D$,\\
% & & (2) $\erw{C}=\urw{C}$ or $\erw{D}=\drw{D}$.
% \end{tabular}
% \end{center}
% %
% We denote by $\bcancel{C}$ that a clause $C$ is deleted from the search space. It is straightforward to show that the inference system $\ReC_\lor\cup\{\mtt{Subs_\lor}\}$ admits the ED property. We introduce similar adaptions of the subsumption rule in $\ReC$, $\ReC^\to$ and $\ReC_\lor^\to$. %Similarly to standard superposition reasoning where subsumption and demodulation rules are commonly used to simplify the search space, our subsumption and demodulation variants for \ReC calculi are used to 

% Further, we
We  identify induction inferences that can be omitted without losing proofs and provide efficient ways to detect such (redundant) inferences. 
\begin{remark}\label{ex:trivial-inductions}
Note that constructor-based induction schemas give rise to a few optimisations:
\begin{enumerate}[label=(\arabic*),leftmargin=1.7em]
\item Inducting on $t$ in $L[t]\lor C$ where $t$ has zero occurrences is possible, but using constructor-based induction schemas only results in tautological clauses $L\lor \neg L\lor C$ and clauses with duplicate literals $L\lor L\lor C$. We thus omit inducting   on $t$ in $L[t]\lor C$ where $t$ has zero occurrences. 

\item Inducting on base constructors, such as $\zero$ and $\nil$, only give weaker forms of the same clauses. For example, inducting upon  $\nil$ in $L[\nil]\lor C$ yields clauses of the form $L[\nil]\lor C'\lor C$. We thus omit induction on base costructors. 
\end{enumerate}
\end{remark}
While the inductive inferences described in Remark~\ref{ex:trivial-inductions} can easily be detected, this is not the case with more complex but useless inductive inferences, as shown in the next example. 
\begin{example}\label{ex:redundant-induction}
Consider the $\Ind_G$ inferences on term $c$ in clauses~\eqref{eq:negconj} and~\eqref{eq:negconj2}, respectively. The first $\Ind_G$ inference yields clauses
\begin{align*}
    &\lenif{g(d)\appif f(\nil)}\not\simeq\suc(\lenif{\nil}+\lenif{d})\lor\lenif{g(d)\appif f(c_2)}\simeq\suc(\lenif{c_2}+\lenif{d})\\
    &\lenif{g(d)\appif f(\nil)}\not\simeq\suc(\lenif{\nil}+\lenif{d})\lor\lenif{g(d)\appif f(\cons(c_1,c_2))}\not\simeq\suc(\lenif{\cons(c_1,c_2)}+\lenif{d})
\end{align*}
where $c_1$ and $c_2$ are fresh Skolem constants. The second $\Ind_G$ inference yields clauses
% Normalizing these clauses with demodulation gives the following:
% {\small
% \begin{align*}
%     &\lenif{f(d)\appif g(\nil)}\not\simeq\suc(\lenif{d})\lor\lenif{f(d)\appif g(c_2)}\simeq\suc(\lenif{c_2}+\lenif{d})\\
%     &\lenif{f(d)\appif g(\nil)}\not\simeq\suc(\lenif{d})\lor\lenif{f(d)\appif g(\cons(c_1,c_2))}\not\simeq\suc(\lenif{\cons(c_1,c_2)}+\lenif{d}) 
% \end{align*}}
% Now consider the \Ind inference with induction term $c$ on clause~\eqref{eq:negconj2}:
% {\small
\begin{align*}
    &\lenif{g(d)\appif f(\nil)}\not\simeq\lenif{f(g(\nil))}+\lenif{d}\lor\lenif{g(d)\appif f(c_4)}\simeq\lenif{f(g(c_4))}+\lenif{d}\\
    &\lenif{g(d)\appif f(\nil)}\not\simeq\lenif{f(g(\nil))}+\lenif{d}\lor\lenif{g(d)\appif f(\cons(c_3,c_4))}\not\simeq\lenif{f(g(\cons(c_3,c_4)))}+\lenif{d} 
\end{align*}
% The normal forms of these clauses are as follows:
% {\small
% \begin{align*}
%     &\lenif{f(d)\appif g(\nil)}\not\simeq\suc(\lenif{d})\lor\lenif{f(d)\appif g(c_4)}\simeq\suc(\lenif{c_4}+\lenif{d})\\
%     &\lenif{f(d)\appif g(\nil)}\not\simeq\suc(\lenif{d})\lor\lenif{f(d)\appif g(\cons(c_3,c_4))}\not\simeq\suc(\lenif{\cons(c_3,c_4)}+\lenif{d}) 
% \end{align*}}\noindent
where $c_3$ and $c_4$ are fresh Skolem constants. The two clause sets are equisatisfiable w.r.t. axioms~\eqref{eq:nat1}--\eqref{eq:ax3} and yield the same consequences, hence it is sufficient to retain only one. This is unfortunately hard to detect due to the different sets of Skolem constants $c_1,c_2$ and $c_3,c_4$. For example, simplifying the induction formulas and checking them for equivalence before clausification takes considerable effort.\qed
\end{example}
For detecting  redundancies similar to Example~\ref{ex:redundant-induction}, 
we characterize  redundant induction inferences of interest and  introduce sufficient conditions to efficiently detect them. 
\begin{definition}[Redundant $\Ind_G$ inference]
Let $C$ be a clause and $F$ a formula. The $\Ind_G$ inference $C \vdash \cnf(F)$ is \textit{redundant} w.r.t. a set of equations $E$ and a clause $C'$, if $C\succ C'$ and there is a formula $F'$ and an $\Ind_G$ inference $C'\vdash\cnf(F')$ s.t. $F$ and $F'$ are equivalent modulo rewriting with $E$.
\end{definition}
It is easy to see that only non-redundant $\Ind_G$ inferences need to be performed to retain equational consequences and first-order refutations. The following two lemmas show sufficient conditions to efficiently check for redundant induction inferences.
\begin{restatable}[Redundant $\Ind_G$ --  Condition I]{lemma}{redundantIndInferenceI}
\label{reducibility-lemma}
Let $l\simeq r$ and $\overline{L}[t]\lor C$ be clauses, $x$ a fresh variable, and $\mathcal{I}$ an inference system that admits ED. If there is a substitution $\theta$ s.t. $l\theta\triangleleft L[x]$ and $l\sigma\succ r\sigma$ where $\sigma=\theta\cdot\{x\mapsto t\}$, then the $\Ind_G$ inference
$$\overline{L}[t]\lor C\vdash\cnf(\neg G[L[x]]\lor C)$$
is redundant in $\mathcal{I}\cup\{\Ind_G\}$ w.r.t. the clauses $l\simeq r$ and $(\overline{L}[t])[l\sigma\mapsto r\sigma]\lor C$.
\end{restatable}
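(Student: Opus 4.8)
The plan is to exhibit a concrete witness $C'$ together with an $\Ind_G$ inference from $C'$ whose conclusion is equivalent, modulo rewriting with the single equation $l\simeq r$, to the conclusion of the given $\Ind_G$ inference. The natural candidate is $C' := (\overline{L}[t])[l\sigma\mapsto r\sigma]\lor C$, exactly as stated. So the proof breaks into three pieces: (a) show $C'$ is actually a legitimate premise for an $\Ind_G$ inference of the required shape, i.e. $C'$ has the form $\overline{L'}[t]\lor C$ with $L'[t]$ ground and $t\in\mc{I}nd(\mc{T})$; (b) show $C\succ C'$, i.e. the parent clause is strictly larger than the witness in the clause ordering; and (c) show that the two induction conclusions $\cnf(\neg G[L[x]]\lor C)$ and $\cnf(\neg G[L'[x]]\lor C)$ are equivalent modulo rewriting with $l\simeq r$.

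For (a): since $l\theta\triangleleft L[x]$, write $L[x] = L[x][l\theta]_p$ for some position $p$, and set $L'[x] := L[x][l\theta\mapsto r\theta]$. Then $L'[t] = L[t][l\sigma\mapsto r\sigma]$ because $\sigma = \theta\cdot\{x\mapsto t\}$ and $x$ does not occur in $l$ or $r$ by the standard freshness/variable conventions for the equation — so instantiating by $\{x\mapsto t\}$ commutes with the replacement. Grounding of $L'[t]$ follows from grounding of $L[t]$ (we only replaced a subterm by a term over the same ground-instance), and $t$ is unchanged so $t\in\mc{I}nd(\mc{T})$ still holds; hence $C' = \overline{L'}[t]\lor C$ is a valid $\Ind_G$ premise, and we take the $\Ind_G$ inference $C'\vdash\cnf(\neg G[L'[x]]\lor C)$. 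For (b): the only change from $\overline{L}[t]\lor C$ to $C'$ is rewriting the ground subterm occurrence $l\sigma$ to $r\sigma$ inside the literal $\overline{L}[t]$, and the hypothesis gives $l\sigma\succ r\sigma$. Since $\succ$ is a simplification ordering (monotone under contexts) and we are comparing ground clauses, replacing $l\sigma$ by the strictly smaller $r\sigma$ strictly decreases the literal in the literal ordering and hence the clause in the multiset/bag extension, giving $C\succ C'$; one should spell out the edge case where $\overline{L}[t]$ is not the maximal literal, but monotonicity of the bag extension under replacing one element by a strictly smaller one handles it uniformly.

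For (c): this is where the ED hypothesis on $\mc{I}$ is used. The two conclusions are $\cnf$ of $\neg G[L[x]]\lor C$ and $\neg G[L'[x]]\lor C$ where $L'[x] = L[x][l\theta\mapsto r\theta]$. The second-order formula $G$ substitutes $L[x]$ (resp. $L'[x]$) into the places where its free second-order variable $F$ occurs; so $\neg G[L'[x]]$ is obtained from $\neg G[L[x]]$ by rewriting each such occurrence of $l\theta$ to $r\theta$ with the equation $l\simeq r$. Hence the two induction axioms are provably equivalent using only $l\simeq r$ and congruence, i.e. equivalent modulo rewriting with $\{l\simeq r\}\subseteq E$ — which is exactly the definition of a redundant $\Ind_G$ inference (Definition "Redundant $\Ind_G$ inference"), so the inference $\overline{L}[t]\lor C\vdash\cnf(\neg G[L[x]]\lor C)$ is redundant w.r.t. $l\simeq r$ and $C'$ in $\mc{I}\cup\{\Ind_G\}$. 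The role of ED is to guarantee that from $\mc{I}\cup\{\Ind_G\}$ applied to the original clause set extended with $C'$ (and $l\simeq r$), anything derivable via the omitted induction conclusion is still derivable: ED lets $\mc{I}$ simulate the $\Rw$ step turning $\cnf(\neg G[L'[x]]\lor C)$-consequences into $\cnf(\neg G[L[x]]\lor C)$-consequences, so no equational or first-order consequence is lost.

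I expect step (c), specifically the bookkeeping of how the replacement $l\theta\mapsto r\theta$ inside $L[x]$ propagates through clausification $\cnf$ and through the second-order substitution defining $G[\cdot]$, to be the main obstacle: one must be careful that $\cnf$ commutes with rewriting a subterm (it does, since $\cnf$ preserves the term structure up to renaming and the rewritten occurrence sits strictly inside a literal, never being the literal's top symbol), and that the fresh variable $x$ together with the freshness of the equation's variables really does make $(L[x][l\theta\mapsto r\theta])\{x\mapsto t\}$ coincide with $L[t][l\sigma\mapsto r\sigma]$. The ordering argument in (b) and the well-formedness check in (a) are routine given that $\succ$ is a simplification ordering and that $l\sigma\succ r\sigma$ is assumed outright.
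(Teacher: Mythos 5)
Your proposal is correct and follows essentially the same route as the paper: exhibit the witness clause $(\overline{L}[t])[l\sigma\mapsto r\sigma]\lor C$, verify $C\succ C'$ from $l\sigma\succ r\sigma$ and monotonicity of the ordering, give the corresponding $\Ind_G$ inference from the witness, and observe that the two induction conclusions are equivalent modulo rewriting with $l\simeq r$. The only cosmetic difference is where ED enters: the paper uses it up front to conclude that the witness clause is itself derivable (from any derivation of the original clause together with $l\simeq r$), whereas you fold it into the discussion of consequence preservation, but this does not affect correctness.
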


\begin{restatable}[Redundant $\Ind_G$ -- Condition II]{lemma}{redundantIndInferenceII}
\label{above-position-lemma}
Let $l\simeq r$ and $\overline{L}[t]\lor C$ be clauses, $x$ a fresh variable, and $\mathcal{I}$ an inference system that admits ED. If there is a substitution $\theta$ s.t. $l\theta\trianglelefteq t$ and $l\theta\succ r\theta$, then the $\Ind_G$ inference
$$\overline{L}[t]\lor C\vdash \cnf(\neg G[L[x]]\lor C)$$
is redundant in $\mathcal{I}\cup\{\Ind_G\}$ w.r.t. the clauses $l\simeq r$ and $\overline{L}[t[l\theta\mapsto r\theta]]\lor C$.
\end{restatable}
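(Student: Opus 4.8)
The plan is to prove Lemma~\ref{above-position-lemma} (Redundant $\Ind_G$ -- Condition~II) by exhibiting, under the stated hypotheses, an explicit $\Ind_G$ inference from the witness clause $C' := \overline{L}[t[l\theta\mapsto r\theta]]\lor C$ whose conclusion is equivalent modulo rewriting with $l\simeq r$ to the conclusion of the original inference, together with the required ordering decrease $\overline{L}[t]\lor C \succ C'$. First I would spell out the shape of the two conclusions. The original $\Ind_G$ inference is parameterized by the literal $L[x]$ (the abstraction of $\overline{L}[t]$ where the distinguished occurrences of $t$ are replaced by the induction variable $x$), and produces $\cnf(\neg G[L[x]]\lor C)$. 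Setting $t' := t[l\theta\mapsto r\theta]$, I note that since $l\theta\trianglelefteq t$, the rewrite happens \emph{inside} the inducted-upon term $t$, so the abstraction position is unchanged: the $\Ind_G$ inference on $C'$ uses the \emph{same} literal-with-hole $L[x]$, inducts on $t'\in\mathcal{I}nd(\mathcal{T})$ (here I rely on $\mathcal{I}nd(\mathcal{T})$ being closed under the relevant rewrites, or more conservatively on $t'$ being available as a ground term in the search space — this should be addressed explicitly, perhaps via a side remark or by the convention on $\mathcal{I}nd(\mathcal{T})$), and yields $F' := \neg G[L[x]]\lor C$. Crucially $F$ and $F'$ are \emph{literally the same formula} once we observe that $G$ depends only on the hole $L[x]$ and on $C$, neither of which changed — the only thing that changed is the ground instance $t$ versus $t'$, which does not appear in $G[L[x]]\lor C$ at all. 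So $F$ and $F'$ are even syntactically equal, a fortiori equivalent modulo rewriting with $E = \{l\simeq r\}$.

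The second component is the ordering decrease $\overline{L}[t]\lor C \succ \overline{L}[t']\lor C$. This follows from $l\theta\succ r\theta$: since $\succ$ is a reduction ordering total on ground terms and the terms involved are ground (recall $L[t]$ is ground, hence $t$ and $t'=t[l\theta\mapsto r\theta]$ are ground), the monotonicity of $\succ$ under contexts gives $t\succ t'$, whence $\overline{L}[t]\succ \overline{L}[t']$ by congruence of the literal/clause extension, and then $\overline{L}[t]\lor C\succ\overline{L}[t']\lor C$ by the bag/multiset extension. Here one must be slightly careful if $t$ occurs multiple times in $\overline{L}$ or if $l\theta$ occurs multiple times inside $t$; but replacing \emph{any} set of occurrences of a strictly larger ground subterm by a strictly smaller one strictly decreases the clause in the simplification ordering, so this is routine. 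I would package this as: $t \trianglerighteq l\theta \succ r\theta$ combined with the subterm and monotonicity properties yields $t\succ t'$.

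Having established both points, the redundancy claim is immediate from the definition of a redundant $\Ind_G$ inference: take $C' = \overline{L}[t']\lor C$ and $F' = \neg G[L[x]]\lor C$, note $C\succ C'$ (writing $C$ for the full premise clause), that $C'\vdash\cnf(F')$ is a legitimate $\Ind_G$ inference, and that $F\equiv F'$ modulo $E$. The reference to $\mathcal{I}$ admitting ED is what guarantees $C'$ is actually derivable once $C$ is (so that the redundancy is \emph{usable} in the calculus $\mathcal{I}\cup\{\Ind_G\}$): from $\mathcal{C}\vdash^*_{\mathcal{I}}\overline{L}[t]\lor C$ and $l\simeq r\in\mathcal{C}$ (or added), ED yields $\mathcal{C},l\simeq r\vdash^*_{\mathcal{I}}\overline{L}[t']\lor C$, so the witnessing clause is genuinely available. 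I expect the main obstacle to be not any calculation but the bookkeeping around the abstraction operation $L[\cdot]$ and the set $\mathcal{I}nd(\mathcal{T})$: one must argue carefully that rewriting strictly \emph{below} the inducted-upon term leaves the induction schema instance $G[L[x]]$ untouched (so the two conclusions coincide) and that $t'$ is still an admissible induction term — this is the subtle point that distinguishes Condition~II from Condition~I, where the rewrite is a \emph{strict} subterm of the hole $L[x]$ rather than of the concrete term $t$, and the two conclusions genuinely differ and must be reconciled via one $\Rw$ step, justified by ED.
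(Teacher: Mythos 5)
Your proposal is correct and follows essentially the same route as the paper's proof: derive the witness clause $\overline{L}[t[l\theta\mapsto r\theta]]\lor C$ via ED, obtain the ordering decrease from $l\theta\succ r\theta$, and observe that since the rewrite occurs inside the inducted-upon term, the abstracted literal $\overline{L}[t\mapsto x]$ coincides (up to variable renaming) with $\overline{L}[t[l\theta\mapsto r\theta]\mapsto y]$, so the two $\Ind_G$ conclusions are equivalent. Your side remark about needing $t[l\theta\mapsto r\theta]\in\mathcal{I}nd(\mathcal{T})$ is a fair point of care that the paper's proof leaves implicit, but it does not change the argument.
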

% The following example illustrates detecting a redundant \Ind inference using Lemma~\ref{reducibility-lemma}.
%
Lemmas~\ref{reducibility-lemma}--\ref{above-position-lemma} allow us to check for redundant $\Ind_G$ inferences similarly as performing demodulation, i.e. simplification by downward rewrites~\cite{SubsumptionDemodulation}. A consequence of these lemmas is that any non-redundant $\Ind_G$ inference on a clause that could be simplified by demodulation must induct on a subterm of a demodulatable term (i.e. a term that could be downward rewritten into a smaller term). The converse, however, that every subterm of a demodulatable term gives rise to a non-redundant $\Ind_G$ inference, does not hold in general.
\begin{example}\label{ex:subterms}
As shown in Example~\ref{ex:redundant-induction}, the $\Ind_G$ inference on clause~\eqref{eq:negconj2} using induction term $c$ is redundant w.r.t. clause~\eqref{eq:ax3} and~\eqref{eq:negconj} due to $\lenif{f(g(c))}\succ\suc(\lenif{c})$. The only non-redundant $\Ind_G$ inferences on clause~\eqref{eq:negconj2} w.r.t. clause~\eqref{eq:ax3} thus induct on the subterms $f(g(c))$ or $g(c)$ of $\lenif{f(g(c))}$.\qed
\end{example}
To control over which clauses induction should be triggered, we introduce the following notion for clauses that are not directly usable as premises for induction inferences.
\begin{definition}[Inductively redundant clause]
A clause is \textit{inductively redundant} if it is (first-order) redundant and all $\Ind_G$ inferences on it are redundant.
\end{definition}
The following example shows an inductively redundant clause.
\begin{example}
Continuing Example~\ref{ex:subterms}, the term $\lenif{g(d)\appif f(c)}$ in clause~\eqref{eq:negconj2} can be  demodulated, namely into $\lenif{f(d)\appif g(c)}$ by clause~\eqref{eq:ax1}. This makes inducting only on the subterms $g(d)$, $f(c)$ or $g(d)\appif f(c)$ non-redundant. As $\lenif{f(g(c))}$ and $\lenif{g(d)\appif f(c)}$ render all $\Ind_G$ inferences on the subterms of each other redundant, clause~\eqref{eq:negconj2} is inductively redundant.\qed
\end{example}
An inductively redundant clause is only necessary to preserve the ED property. Next, we show how some inductively redundant clauses can be avoided without losing ED. Towards this, we define so-called \emph{ineffective} equations.
\begin{definition}[Ineffective equation]
An equation $l\simeq r$ is \textit{ineffective} if $l\succ r$, each variable in $l$ has at most one occurrence and there is no strict non-variable subterm $s$ in $l$
% s.t. $s\theta$ is inducted upon for any substitution $\theta$.
s.t. $s\theta\in\mathcal{I}nd(\mathcal{T})$ for some substitution $\theta$. An equation is called \textit{effective} if it is not ineffective. We call an upward rewrite with an ineffective equation an \textit{ineffective rewrite}.
\end{definition}
The following example shows ineffective equations.
\begin{example}
As discussed in Example~\ref{ex:trivial-inductions}, base constructors such as $\zero$ and $\nil$ are not inducted upon when using only constructor-based induction schemas; hence,  $\zero,\nil\notin\mathcal{I}nd(\mathcal{T})$. The equations~\eqref{eq:plus1}, \eqref{eq:app1} and~\eqref{eq:len1} are therefore ineffective, since they are oriented left-to-right, their left-hand sides are linear, and none of the strict non-variable subterms in their left-hand sides are inducted upon.\qed
\end{example}
The following lemma proves that the result of an ineffective rewrite is inductively redundant.
\begin{restatable}[Redundancy of ineffective rewrites]{lemma}{redundantWeaklyUsefulRewrite}
\label{non-useful-equation-lemma}
Let $l\simeq r$ be an ineffective equation and $C[l\theta]$ a ground clause. If  $C[l\theta]\succ (l\simeq r)\theta$, the clause $C[l\theta]$ is inductively redundant in any inference system that admits ED. % if $C[l\theta]\succ (l\simeq r)\theta$.
\end{restatable}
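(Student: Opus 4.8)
The plan is to show that the clause $C[r\theta]$ obtained by the (upward) ineffective rewrite of $C[l\theta]$ is already enough to recover both the first-order redundancy of $C[l\theta]$ and the redundancy of all $\Ind_G$ inferences on $C[l\theta]$. First I would observe that, since $l\simeq r$ is ineffective, $l\succ r$, hence $l\theta\succ r\theta$, and so $C[l\theta]\succ C[r\theta]$; combined with the hypothesis $C[l\theta]\succ (l\simeq r)\theta$, the two clauses $C[r\theta]$ and $l\simeq r$ are both strictly smaller than $C[l\theta]$ and together imply it, so $C[l\theta]$ is (first-order) redundant with respect to $\{C[r\theta],\,l\simeq r\}$. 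This settles the first conjunct of ``inductively redundant''.

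For the second conjunct I need to show that every $\Ind_G$ inference on $C[l\theta]$ is redundant. Write $C[l\theta]=\overline{L}[t]\lor C'$ and fix an induction term $t\in\mathcal{I}nd(\mathcal{T})$ occurring in $\overline{L}$. The key case analysis is on where the rewritten occurrence $l\theta$ sits relative to (an occurrence of) $t$. If the position of $l\theta$ is parallel to, or above the topmost occurrence of, $t$ — equivalently $t\ntriangleleft l\theta$ at the rewritten position — then the rewrite does not touch the body of the induction and one reuses Lemma~\ref{above-position-lemma} (Condition~II): $l\theta\trianglelefteq t'$ for the relevant ancestor position and $l\theta\succ r\theta$, so the $\Ind_G$ inference on $t$ is redundant w.r.t. $l\simeq r$ and $C[r\theta]$. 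The remaining case is $l\theta\trianglerighteq t$ with $t$ a \emph{strict} subterm of $l\theta$: but then $t$ is an instance of a non-variable subterm of $l$ — here is where linearity and the ``no strict non-variable subterm $s$ with $s\mu\in\mathcal{I}nd(\mathcal{T})$'' clause of ineffectiveness are used — and if $t$ were a proper subterm of $l\theta$ below a non-variable position of $l$, ineffectiveness would be contradicted (such an $s$ would exist with $s\mu=t\in\mathcal{I}nd(\mathcal{T})$). So the only possibility is $t=l\theta$ itself, i.e. we induct on the whole redex; here Lemma~\ref{reducibility-lemma} (Condition~I) applies with the trivial $\theta$ matching $l$ against $L[x]$'s distinguished subterm, again yielding redundancy w.r.t. $l\simeq r$ and $C[r\theta]$. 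In all cases the $\Ind_G$ inference is redundant w.r.t.\ clauses strictly below $C[l\theta]$, so $C[l\theta]$ is inductively redundant.

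The main obstacle I expect is the case analysis on the position of the redex relative to the induction term, and in particular making precise the claim that ineffectiveness forbids $t\in\mathcal{I}nd(\mathcal{T})$ from appearing strictly inside $l\theta$ at a non-variable position: one must carefully separate the variable positions of $l$ (where $t$ could appear, but then $t$ is carried along substitution-wise and the rewrite leaves that copy of $t$ intact, so Condition~II still applies to the untouched occurrence) from the non-variable positions (excluded by ineffectiveness). Linearity of $l$ is what guarantees there is a single copy to track, so that the rewrite cannot duplicate or erase the induction term in a way that breaks the equational-derivability bookkeeping. Once this positional dichotomy is pinned down, the rest is a direct appeal to Lemmas~\ref{reducibility-lemma} and~\ref{above-position-lemma} together with the elementary ordering facts above, and the invocation of ED to guarantee the witnessing derivations exist in $\mathcal{I}\cup\{\Ind_G\}$.
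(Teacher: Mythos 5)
Your overall strategy matches the paper's: first-order redundancy of $C[l\theta]$ w.r.t.\ $C[r\theta]$ and $l\simeq r$ follows from the ordering hypothesis (with ED supplying the derivation of $C[r\theta]$), and redundancy of each $\Ind_G$ inference is obtained by a case split on the position of the redex $l\theta$ relative to the induction term $t$, reducing to Lemmas~\ref{reducibility-lemma} and~\ref{above-position-lemma}. However, the case analysis is not carried out correctly, and the one case where ineffectiveness actually does the work is mishandled. A first, bookkeeping error: the parallel case needs Lemma~\ref{reducibility-lemma} (Condition~I, whose hypothesis $l\theta\triangleleft L[x]$ holds precisely when the redex survives the abstraction $t\mapsto x$), not Lemma~\ref{above-position-lemma} (Condition~II, whose hypothesis is $l\theta\trianglelefteq t$); you have the two conditions the wrong way around in several places --- for instance, when $t=l\theta$ it is Condition~II that applies, not Condition~I with a ``trivial'' match, since after the abstraction $t\mapsto x$ the literal $L[x]$ no longer contains the redex at all.

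The substantive gap is the case $t\triangleleft l\theta$. Ineffectiveness only excludes $t$ from arising as an instance of a \emph{non-variable} strict subterm of $l$; it does not force $t=l\theta$. The occurrences of $t$ inside $l\theta$ can perfectly well sit below a variable position of $l$ (all of them under a single variable, by linearity of $l$), and your proposed fix --- that ``Condition~II still applies to the untouched occurrence'' --- does not go through, because Condition~II requires $l\theta\trianglelefteq t$, which is the opposite of the situation at hand. The paper's resolution is a genuine extra construction: define a new substitution $\sigma$ that abstracts $t$ to the fresh variable $x$ inside the range of $\theta$ (linearity is used exactly here, to confine all occurrences of $t$ within $l\theta$ to one variable's instance), so that $l\sigma\triangleleft L[x]$ holds and $l\sigma'\succ r\sigma'$ for $\sigma'=\sigma\cdot\{x\mapsto t\}$ follows from $l\succ r$; then Condition~I applies with $l\sigma$ in place of $l\theta$. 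Without this device (or an equivalent one) the central case of the lemma remains unproved.
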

Consider a derivation of \Rw inferences from an inductively non-redundant clause $C$ into an inductively non-redundant conclusion $D$, where every intermediate clause in the derivation is inductively redundant. We may notice in such derivations that an ineffective rewrite is eventually followed by a rewrite that is not ineffective in an overlapping position. These rewrites can be performed together to avoid the intermediate inductively redundant clauses. To control and trigger such rewriting chains, we introduce the following \textit{chaining} inferences.
\begin{center}
\begin{tabular}{c l l}

\multirow{2}{*}{
\AxiomC{$C[l\theta]$}
\AxiomC{$l\simeq r$}
\LeftLabel{$(\mtt{CRw})$}
\BinaryInfC{$C[r\theta]$}
\DisplayProof}
& \multirow{2}{*}{where} & \multirow{2}{*}{$l\theta\nprec r\theta$ or $l\simeq r$ is effective,}\\[2em]

\multirow{3}{*}{
\AxiomC{$s[l']\simeq t$}
\AxiomC{$l\simeq r$}
\LeftLabel{$(\Cleft)$}
\BinaryInfC{$(s[r]\simeq t)\theta$}
\DisplayProof}
& \multirow{3}{*}{where} & (1) $\theta=\mgu(l,l')$,\\
& & (2) $s[l']\simeq t$ is ineffective,\\
& & (3) $l\theta\not\succeq r\theta$ and $l\simeq r$ is effective,\\[1em]

\multirow{3}{*}{
\AxiomC{$s\simeq t[l']$}
\AxiomC{$l\simeq r$}
\LeftLabel{$(\Cright)$}
\BinaryInfC{$(s\simeq t[r])\theta$}
\DisplayProof}
& \multirow{3}{*}{where} & (1) $\theta=\mgu(l,l')$,\\
& & (2) $l\simeq r$ is ineffective,\\
& & (3) $t[l']\theta\not\succeq s\theta$ and $s\simeq t[l']$ is effective.\\[1em]
\end{tabular}
\end{center}
The chaining inferences \Cleft and \Cright combine ineffective and effective equations together in new, effective equations. Further,  $\mtt{CRw}$ disallows ineffective rewrites for consequence generation. By using chaining inferences for efficient rewrites in saturation with induction, we  define the calculus $\CReC$ as $\Sup\cup\{\mtt{CRw},\Cleft,\Cright\}$. While  $\CReC$ does not admit ED, we note that if an inductively non-redundant clause $D$ is derivable via $\Rw$ in \ReC, then $D$ is also derivable using chaining inferences in $\CReC$. That is, inductive consequences are not lost in $\CReC$. The following theorem adjusts such a variant of ED to $\CReC$. 

\begin{restatable}[$\CReC$ derivability]{theorem}{recChaining}
\label{thm:chaining}
Let $D$ be an inductively non-redundant clause. If $\mathcal{C}\vdash_{\{\Rw\}}^\ast D$, then $\mathcal{C}\vdash_{\{\mtt{CRw},\Cleft,\Cright\}}^\ast D$.
\end{restatable}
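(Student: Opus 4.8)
### Proof Plan for Theorem~\ref{thm:chaining}

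The plan is to argue by induction on a suitable well-founded measure over $\Rw$-derivations of $D$ from $\mathcal{C}$. Fix an $\Rw$-derivation $\Pi\colon \mathcal{C}\vdash_{\{\Rw\}}^\ast D$ witnessing the hypothesis. Since $D$ is inductively non-redundant, and every clause strictly below $D$ on a rewrite path that is inductively redundant can in principle be ``jumped over'', the first step is to normalise $\Pi$ so that it is a sequence of $\Rw$ steps whose only inductively non-redundant members are the endpoints of maximal ineffective-rewrite segments. Concretely, I would first observe (using Lemma~\ref{non-useful-equation-lemma}) that any clause produced by an ineffective rewrite is inductively redundant; hence along $\Pi$ the inductively non-redundant clauses are separated by blocks of ineffective rewrites, possibly interleaved with effective rewrites that do not by themselves restore inductive non-redundancy. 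The goal is to show that each such block, together with the effective rewrite that exits it, can be replayed in $\CReC$ using $\mtt{CRw}$, $\Cleft$, and $\Cright$.

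The heart of the argument is a local transformation. Suppose in $\Pi$ we have a maximal segment that starts at an inductively non-redundant clause $C_0$, applies an ineffective rewrite with $l_1\simeq r_1$ at position $p_1$ to get $C_1$, then eventually an effective rewrite (or a rewrite at a position that is $\succeq$-oriented, i.e.\ a downward rewrite) with $l_2\simeq r_2$ at position $p_2$ producing a clause $C_2$ that is again inductively non-redundant. I would distinguish whether $p_1$ and $p_2$ are parallel or overlapping. If they are parallel, the two rewrites commute, so I can postpone the ineffective rewrite; after doing this for all ineffective rewrites in the block, either the ineffective rewrite has become vacuous (its redex was later erased) or it has an overlapping successor. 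In the overlapping case, the ineffective equation $l_1\simeq r_1$ superposes with the effective equation $l_2\simeq r_2$: this is precisely what $\Cleft$ (when $l_2$ rewrites inside the left-hand side of $l_1\simeq r_1$, i.e.\ the ineffective equation is the ``destination'' clause) and $\Cright$ (the symmetric case) are designed to produce — a new \emph{effective} equation $e$ that realises the composite rewrite in one step, which can then be applied via $\mtt{CRw}$ since $\mtt{CRw}$ permits any effective equation. One must check that the side conditions of $\Cleft$/$\Cright$ (the $\not\succeq$ ordering constraints and the effective/ineffective typing) are exactly met by the configuration forced by ``ineffective upward rewrite followed by effective rewrite in an overlapping position'', which is why the definition of ineffective equation (left-linear, $l\succ r$, no inducted-upon strict subterm) was chosen. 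When several ineffective rewrites stack up before an effective one, I iterate the chaining inference, each time composing one more ineffective equation into the growing effective equation, analogously to the ``double peak needs two superpositions'' remark in Section~\ref{sec:rewriting}.

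For the termination of this replay/normalisation procedure I would use the multiset of clause-sizes (under $\succ$) appearing in $\Pi$, or simply the pair (length of $\Pi$, number of ineffective rewrites in $\Pi$) ordered lexicographically: each local transformation either removes an ineffective rewrite outright, or converts the pattern ``ineffective then effective across an overlap'' into a single $\mtt{CRw}$ step plus a bounded number of $\Cleft$/$\Cright$ side inferences that do not lengthen the main path. Finally, the $\mtt{CReC}$-derivation is assembled: the $\Cleft$/$\Cright$ inferences add the auxiliary effective equations to the search space, and the main path becomes a sequence of $\mtt{CRw}$ steps and genuinely effective $\Rw$-style steps, all legal in $\mtt{CReC}$, ending exactly at $D$. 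Because $D$ is inductively non-redundant it is never itself a clause we are allowed to skip, so it is retained.

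The step I expect to be the main obstacle is the overlapping case when a single ineffective rewrite is \emph{not} immediately followed by an overlapping effective rewrite, but only after a detour of further parallel ineffective rewrites and effective rewrites — showing that the commutation argument always lets us bring the block into the canonical ``stack of ineffective rewrites, then one overlapping effective rewrite'' shape, and that the $\Cleft$/$\Cright$ ordering side conditions survive this rearrangement, is the delicate bookkeeping. A secondary subtlety is the boundary case where the ineffective redex is eventually destroyed (rewritten away or its superterm rewritten) rather than ``consumed'' by an overlap; there one simply deletes the ineffective step, but one must make sure this does not disconnect the path, which again follows from left-linearity of ineffective equations ensuring the rewrite had no effect on the rest of the clause.
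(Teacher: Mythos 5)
Your overall strategy coincides with the paper's: eliminate the ineffective upward rewrites one at a time, commuting them past parallel steps and absorbing overlaps into $\Cleft$/$\Cright$ inferences, with Lemma~\ref{non-useful-equation-lemma} and the inductive non-redundancy of $D$ guaranteeing that every postponed redex is eventually resolved before the derivation ends. The paper additionally pre-normalises the derivation (its Lemma~\ref{lemma:linear-rw-derivation}) so that $\Rw$ is only ever applied to the main clause and never to the rewriting equations themselves, which simplifies the case split; and rather than pushing each ineffective rewrite \emph{forward}, it deletes the \emph{last} violating inference outright and then simulates the remainder of the derivation on the clause that still carries the smaller term $r\theta$ at every residual position, showing each later step of $\Pi$ can be mirrored. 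The outer induction is then on the number of violating inferences, which genuinely drops by one per round.

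The concrete gap in your plan is the termination argument, and it is not merely cosmetic. Your lexicographic measure (length of $\Pi$, number of ineffective rewrites) does not decrease under the parallel commutation step (both components are unchanged), and it can strictly \emph{increase}: if the postponed redex $l\theta$ sits at or below a variable position of a later rewrite $u\simeq w$ in which that variable occurs several times in $w$, postponing duplicates the redex, so the number of pending ineffective rewrites grows. The paper sidesteps this by never re-materialising the postponed rewrite at all; it tracks all residual occurrences $l\theta_1,\dots,l\theta_n$ simultaneously and distinguishes whether a later rewrite is parallel to them, strictly below a variable of $l$ (handled by left-linearity of the ineffective $l$), above a variable of $l$ (handled by $\Cleft$), or a superterm of some $l\theta_i$ (handled by $\Cright$ plus an inner induction). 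A second point your plan misses lives in that last case: when the later equation's left-hand side $u$ is non-linear, its match can be blocked because one occurrence of a repeated variable covers a residual $r\theta_i$ while another covers an $l\theta_j$; the paper re-synchronises these copies with extra \emph{downward} rewrites using the ineffective equation itself, which are legal $\mtt{CRw}$ steps since only the upward direction is forbidden for ineffective equations. Left-linearity of the ineffective equation's left-hand side, which you invoke for the "vacuous redex" boundary case, does not help here because the obstruction comes from non-linearity of the \emph{later} equations.
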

The calculi $\CReC_\lor$, $\CReC^\to$ and $\CReC^\to_\lor$ are respectively the variants of $\ReC_\lor$, $\ReC^\to$ and $\ReC^\to_\lor$, when using $\Cleft$ and $\Cright$, and restricting the $\Rw$ variants in these calculi to be used with effective equations similarly as in $\CRw$.\footnote{see Appendix C} Derivability results for these calculi similar to Theorem~\ref{thm:chaining} are straightforward based on Theorems~\ref{thm:rw_v_ugd}--\ref{thm:chaining}.

\section{Evaluation}\label{sec:evaluation}
\paragraph{{\bf Implementation.}} 
We implement our calculi in the \vampire\footnote{\url{https://github.com/vprover/vampire/commit/16a38442515f8385}} prover. 
Our framework for equational consequence generation is controlled via the new option {\tt -grw} which has the following values: {\tt off} disables equational consequence generation and uses only the \Sup calculus; {\tt all} uses the calculus \ReC; {\tt up} uses $\ReC_\lor$; {\tt ltr} uses $\ReC^\to$; and {\tt up\_ltr} uses $\ReC_\lor^\to$. With the further option {\tt -mgrwd}, we  limit the maximum depth of rewrites for \Rw inference variants. The option {\tt -grwc} toggles the chaining inferences, that is, the use of \CReC variants. Finally, the option {\tt -indrc} controls the redundancy check for induction, by using Lemmas~\ref{reducibility-lemma}--\ref{above-position-lemma} to avoid performing redundant $\Ind_G$ inferences. %\footnote{{\tt -grw: --goal\_rewriting, -mgrwd: --maximum\_goal\_rewriting\_depth, -grwc: --goal\_rewriting\_chaining, -indrc: --induction\_redundancy\_check}}

{Additionally, we use the following heuristics to control consequence generation in saturation with induction. We apply rewriting in a goal-oriented manner, only allowing rewriting into conjectures and their subgoals. Moreover, to avoid useless clauses from rewriting between unrelated subgoals, we disallow rewriting inferences which would introduce new Skolem constants into a clause. Finally, we avoid rewriting inferences which introduce variables into our conjectures, as these have to be instantiated before induction.}

%\section{Experimental Results}\label{sec:experiments}

\paragraph{{\bf Experimental setup.}} We run our experiments with the following option setup: {\tt -sa discount} to use   the 
 \textsc{Discount} saturation algorithm ~\cite{Discount};  {\tt -drc encompass} to enable  encompassment demodulation~\cite{Duarte2022}; and {\tt -thsq on} to control pure theory derivations~\cite{SplitQueues}. Experiments were run on computers with AMD Epyc 7502 2.5GHz processors and 1TB RAM, with each individual benchmark run given a single core. For inductive reasoning experiments, we used the UFDTLIA benchmark set from SMT-LIB~\cite{SMTLIB}, the TIP benchmark set~\cite{TIP} and the \vampire{} inductive benchmark set~\cite{CICM20}. We also used benchmarks from the UEQ division of TPTP~\cite{TPTP} to test first-order reasoning.% capabilities.

\paragraph{\bf Evaluation of inductive reasoning.} 
The first part of our experiments consisted of running \vampire{} on \emph{1266 inductive benchmarks} from the UFDTLIA, TIP and \vampire{} benchmark sets. We used a 60-second timeout and the options {\tt -ind struct -indoct on} to enable induction and generalisations over complex terms. We used two different simplification orderings: {\tt -to kbo} for  KBO ordering with constant weight and precedence determined  by the arity of symbols; {\tt -to lpo -sp occurrence} for the LPO ordering with a symbol precedence given by the declaration order. This LPO order is usually better at orienting recursive function axioms~\cite{FMCAD21}.

% \begin{figure}[tb]
% \small
% \setlength\tabcolsep{3.5pt}
% \begin{tabular}{c|c|c|c|c|c|c|c|c|c|c|c|c|c|c|c|c|c|}
% Option & \multicolumn{17}{c|}{Value}\\
% \hline
% {\tt -grw} & {\tt off} & \multicolumn{4}{c|}{\tt all} & \multicolumn{4}{c|}{\tt up} & \multicolumn{4}{c|}{\tt ltr} & \multicolumn{4}{c|}{\tt up\_ltr} \\
% \hline
% {\tt -grwc} & & & {\tt on} & & {\tt on} & & {\tt on} & & {\tt on} & & {\tt on} & & {\tt on} & & {\tt on} & & {\tt on}\\
% \hline
% {\tt -indrc} & & & & {\tt on} & {\tt on} & & & {\tt on} & {\tt on} & & & {\tt on} & {\tt on} & & & {\tt on} & {\tt on}\\
% \hline
% \hline
% {\tt -to kbo} & 270 & 290 & 302 & 299 & 305 & 290 & 305 & 299 & 304 & 290 & 302 & 301 & {\bf 307} & 291 & 302 & 300 & {\bf 307} \\
% \hline
% \begin{tabular}{c}{\tt -to lpo}\\
% {\tt -sp occ}\end{tabular} & 290 & 318 & 331 & 321 & {\bf 333} & 320 & 332 & 319 & {\bf 333} & 322 & 332 & 322 & 330 & 320 & 321 & 332 & 329 \\
% \end{tabular}
% \caption{Comparison of calculi $\Sup$, $\ReC$, $\ReC_\lor$, $\ReC^\to$, $\ReC^\to_\lor$ Inductive reasoning of \Sup{} calculus ({\tt off}) with $\ReC{}_\lor^\to$ {\tt up\_ltr} and redundancy options ({\tt indrc}, {\tt grwc}). Maximum rewriting depth is default 3.}
% \label{fig:induction-comparison}
% \end{figure}

\begin{figure}[tb]
\setlength\tabcolsep{3.6pt}

\centering
\begin{tabular}{c|c||c|c|c|c||c|c|c|c|}
{\tt -indrc off} & $\Sup$ & $\ReC$ & $\ReC_\lor$ & $\ReC^\to$ & $\ReC^\to_\lor$ & $\CReC$ & $\CReC_\lor$ & $\CReC^\to$ & $\CReC^\to_\lor$\\
\hline
\hline
KBO & 270 & 290 & 290 & 290 & {\bf 291} & 302 & {\bf 305} & 302 & 302 \\
\hline
LPO & 290 & 318 & 320 & {\bf 322} & 320 & 331 & {\bf 332} & {\bf 332} & {\bf 332} \\
\end{tabular}\medskip

\begin{tabular}{c|c||c|c|c|c||c|c|c|c|}
{\tt -indrc on} & $\Sup$ & $\ReC$ & $\ReC_\lor$ & $\ReC^\to$ & $\ReC^\to_\lor$ & $\CReC$ & $\CReC_\lor$ & $\CReC^\to$ & $\CReC^\to_\lor$\\
\hline
\hline
KBO & 270 & 299 & 299 & {\bf 301} & 300 & 305 & 304 & {\bf 307} & {\bf 307}\\
\hline
LPO & 290 & 321 & 319 & {\bf 322} & 321 & {\bf 333} & {\bf 333} & 330 & 329\\
\end{tabular}

\caption{Comparison of the $\Sup$ calculus with variants of the $\ReC$ and $\CReC$ calculi, using  1266 inductive benchmarks. Redundant $\Ind_G$ inference detection is disabled (resp. enabled) in the top (resp. bottom) table with option {\tt -indrc off} (resp. {\tt -indrc on}). Maximum rewriting depth ({\tt -mgrwd}) is set to  3.}
\label{fig:induction-comparison}
\end{figure}

Our results are summarised in Figure~\ref{fig:induction-comparison}, showcasing that each \ReC and \CReC calculi variant performs significantly better than \Sup. %the corresponding \Sup calculus configuration.
Using the LPO ordering turned out to be  advantageous over the KBO ordering. Performance is further improved via detection of redundant $\Ind_G$ inferences and using chaining inferences via \CReC variants. Among each group, however, the differences in the number of solved benchmarks are minimal, and there is no calculus that is a clear winner in all configurations. Statistics reveal that redundant $\Ind_G$ inference detection used together with $\ReC$ was able to eliminate 79.3\% of the overall 390,657,294 $\Ind_G$ inferences, while redundant $\Ind_G$ inference detection in $\CReC$ variant runs detected 42.6\% of the overall 169,457,851 $\Ind_G$ inferences redundant. This suggests that both redundant $\Ind_G$ detection and chaining inferences in $\CReC$ variants are effective in keeping the search space small. In total, the configurations for the $\ReC$ and $\CReC$ variants solved 45 problems that no \Sup variant could solve. Based on these results, we conclude that rewriting in inductive reasoning significantly improves upon standard superposition.
% Inspection of the runtimes reveal that while the more restricted variants $\ReC_\lor$, $\ReC^\to$ and $\ReC^\to_\lor$ generate (sometimes significantly) fewer duplicate consequences compared to $\ReC$, the overhead of maintaining the restrictions seems to undo the positive effect.

Figure~\ref{fig:cactusplots} shows cactus plots~\cite{CactusPlots} within the LPO configuration of $\Sup$,    $\ReC$ and $\CReC$ variants, and $\ReC$ and $\CReC$ variants with redundant $\Ind_G$ inference detection. Each plot line lists the logarithm of the time needed (vertical axis) to solve a certain number of the benchmarks (horizontal axis) individually, for a particular configuration. The left diagram shows the entire plot, and the right diagram a magnified (and rescaled) plot above 180 problems. The baseline configuration \Sup is a bit faster than the $\ReC$ variants up to around 260 problems, and after that it only solves a few problems in the several seconds region. The $\ReC$ variants without redundant $\Ind_G$ detection are almost indistinguishable in the entire plot. The $\ReC$ calculus with redundant $\Ind_G$ detection is however better, while there is a greater gap between the $\Sup$ and $\ReC$ calculi and the two $\CReC$ variants. The $\CReC$ calculus with redundant $\Ind_G$ detection has the slowest growing curve, corresponding to the fastest solving times.

%{Second, we experimented with different values of the maximum depth of rewriting ({\tt -mgrwd} option), resulting in a slight decrease in the number of solved benchmarks with each increase in the maximum depth, but no new benchmarks were solved.}

\paragraph{\bf Evaluation of first-order  reasoning.} We also measured how our calculi behave with pure first-order problems. In particular, we have run experiments on the UEQ division of TPTP using the CASC2019 portfolio mode of \vampire{} with a 300 seconds timeout ({\tt --mode portfolio -sched casc\_2019 -t 300}). While our calculi performed slightly worse than \vampire{} portfolio, we managed to solve a few unique and hard UEQ problems: %they managed to solve the following problems that 
\vampire{} without consequence generation could not solve: {\tt GRP664-12} (rating 0.96), {\tt COL066-1} (rating 0.79), {\tt LAT166-1} (rating 0.71), {\tt LAT156-1} (rating 0.71) and {\tt REL026-1} (rating 0.67). As such, our work is  useful not only for inductive, but also for first-order reasoning.

\begin{figure}[tb]

\begin{tikzpicture}[spy using outlines={rectangle, magnification=2, connect spies}]
\begin{axis}[
    width=.4\linewidth,
    ymode=log,
    y tick label style={/pgf/number format/1000 sep=\,},
    log ticks with fixed point,
    scale only axis,
    y tick label style={font=\small},
    x tick label style={font=\small},
    yticklabels={0s,0.01s,0.1s,1s,10s},
    ymin=0.005,
    ymax=60,
    xmin=0,
    xmax=336,
    legend pos=north west,
    legend style={font=\scriptsize},
    every axis plot/.append style={semithick,mark repeat=10},
]
\addplot[mark=x,color=black]
    table[x index=0, y index=1] {figure1/lpo/baseline-times.txt};
\addlegendentry{$\Sup$};
\addplot[mark=10-pointed star,color=red]
    table[x index=0,y index=1] {figure1/lpo/all-times.txt};
\addlegendentry{$\ReC$};
\addplot[mark=o,color=green]
    table[x index=0,y index=1] {figure1/lpo/up-times.txt};
\addlegendentry{$\ReC_\lor$};
\addplot[mark=square,color=orange]
    table[x index=0,y index=1] {figure1/lpo/ltr-times.txt};
\addlegendentry{$\ReC^\rightarrow$};
\addplot[mark=triangle,color=blue]
    table[x index=0,y index=1] {figure1/lpo/up-ltr-times.txt};
\addlegendentry{$\ReC^\rightarrow_\lor$};
\addplot[mark=pentagon,color=brown]
    table[x index=0,y index=1] {figure1/lpo/all-indrc-times.txt};
\addlegendentry{$\ReC$({\tt indrc})};
\addplot[mark=Mercedes star flipped,color=magenta]
    table[x index=0,y index=1] {figure1/lpo/all-grwc-times.txt};
\addlegendentry{$\CReC$};
\addplot[mark=diamond,color=teal]
    table[x index=0,y index=1] {figure1/lpo/all-grwc-indrc-times.txt};
\addlegendentry{$\CReC$({\tt indrc})};
\end{axis}
\begin{axis}[
    axis line style={thick,blue},
    scale only axis=true,
    at={(0.425\linewidth,0)},
    width=.5\linewidth,
    height=.347\linewidth,
    ymode=log,
    yticklabel=\empty,
    x tick label style={font=\small},
    % y tick label style={/pgf/number format/1000 sep=\,},
    % log ticks with fixed point,
    scale only axis,
    ymin=0.02,
    ymax=60,
    xmin=180,
    xmax=336,
    % legend pos=north west,
    % legend style={font=\scriptsize},
    every axis plot/.append style={semithick,mark repeat=3},
]
\addplot[mark=x,color=black]
    table[x index=0, y index=1] {figure1/lpo/baseline-times.txt};
\addplot[mark=10-pointed star,color=red]
    table[x index=0,y index=1] {figure1/lpo/all-times.txt};
\addplot[mark=o,color=green]
    table[x index=0,y index=1] {figure1/lpo/up-times.txt};
\addplot[mark=square,color=orange]
    table[x index=0,y index=1] {figure1/lpo/ltr-times.txt};
\addplot[mark=triangle,color=blue]
    table[x index=0,y index=1] {figure1/lpo/up-ltr-times.txt};
\addplot[mark=pentagon,color=brown]
    table[x index=0,y index=1] {figure1/lpo/all-indrc-times.txt};
\addplot[mark=diamond,color=teal]
    table[x index=0,y index=1] {figure1/lpo/all-grwc-indrc-times.txt};
\addplot[mark=Mercedes star flipped,color=magenta]
    table[x index=0,y index=1] {figure1/lpo/all-grwc-times.txt};
\end{axis}
\draw[draw=blue] (3.1,0.7) rectangle ++(2.7,4.3);
\draw[blue,densely dashed] (3.1,0.7) to (6.15,0);
\draw[blue,densely dashed] (5.8,0.7) to (6.15,0.65);
\draw[blue,densely dashed] (5.8,5) to (6.15,5.04);
% \spy [blue, height=2.5cm,width=4cm] on (3.5,1.2)
%    in node[fill=white] at (7,1);
% \spy [blue, height=6cm,width=3.5cm] on (5,3.5)
%    in node[fill=white] at (10,2.5);
\end{tikzpicture}

\caption{Plots of (logarithmic) time against number of problems that would be solved individually given that time limit. Selected configurations, all using LPO.}
\label{fig:cactusplots}
\end{figure}
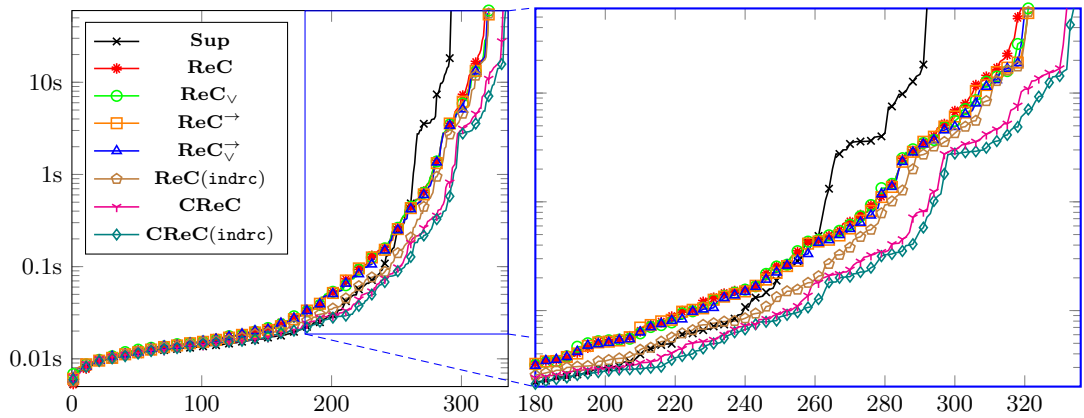

\vspace{-.5em}
\section{Related Work and Conclusion}
%\todo[inline]{TODO!}
We improve the generation of equational consequences within saturation-based theorem proving extended with inductive reasoning. The generated consequences serve as auxiliary lemmas to be used for proving (inductive) goals.

While auxiliary lemmas might be provided by users in  interactive theorem proving ~\cite{IsabelleHol,sledgehammer}, saturation-based automated theorem provers, by design, do not  support user guidance during proof search. Automation of induction in saturation therefore % of iFully-automatic approaches typically infer formulas equisatisfiable to a given conjecture (goal), then apply inductive reasoning to reduce them to subgoals.
implements inductive generalizations~\cite{CADE19,CADE21,Aubin,Cruanes17}, using failed proof attempts \cite{Rippling93} and specialized sound inferences with common patterns \cite{Wand17}. Our work extends these techniques by guiding proof search with auxiliary lemmas generated during proof search. Lemma generation is also exploited in theory exploration~\cite{HipSpec,ConjectureSynthesisForInductiveTheories}, without however imposing the relevance of   generated lemmas with respect to a given conjecture, which is the focus of our method. %. Our approach applies inductive inferences in such a way that this is not a concern.

The use of equational theories and term rewriting in inductive reasoning has  been addressed in~\cite{ACL2}, by ensuring termination of function definitions, and in~\cite{Wand17,FMCAD21}, by using  %, the main difficulties being the lack of termination and confluence properties of general rewriting. 
using completion procedures~\cite{ImplicitInductionInConditionalTheories,Rippling93} to interleave heuristic rewriting with theorem proving. Our work complements these approaches, by using rewriting-based equational reasoning for auxiliary lemma generation, and extending redundancy elimination in inductive reasoning. 
 Our experiments have shown a significant improvement in inductive reasoning and equational first-order reasoning, solving  45 new inductive problems and 5 hard TPTP problems when compared to standard superposition. 
 
 Integrating our method with proof assistants, e.g. Sledgehammer~\cite{sledgehammer}, is an interesting line of future work, with the aim of splitting goals into subgoals described by auxiliary lemmas.
Applying our  approach  to non-equational fragments, such as Horn formulas in equational logic~\cite{HornToEquational}, is another future challenge.

\paragraph{Acknowledgements.}
% %This work is the result of a research internship hosted at TU Wien and defended
% %at the University of Liège. 
% The authors would like to thank Pascal Fontaine %for
% %supervising the internship from the ULiège side and 
% for  valuable discussions and comments. 
We thank Martin Suda for discussing ideas related to this work. This work was partially funded by 
the ERC Consolidator Grant ARTIST 101002685,
the TU Wien Doctoral College SecInt, and 
the FWF SFB project SpyCoDe F8504,  and the
WWTF ICT22-007 grant ForSmart.
%and the EU H2020-101034440 COFUND project LogiCS$@$TUWien.

% \todo[inline]{A useful comment from the CADE23 reviewers:
% A minor remark: On page 4 you mention that you conjecture fol(Lem) to be proven
% by induction. On page 5 you mention the "two main challenges" in this proof. At
% this point there seems to be one more obvious challenge: how do we decide to
% conjecture something? Only in Section 6 there seems to be an answer to this
% question: the prover, given the -indelg all option, always conjectures every
% possible equality. This question would deserve a more prominent treatment in the
% paper.
% }

% \newpage
\bibliographystyle{plain}
\bibliography{bibliography}

\newpage

\section*{Appendix A}

\subsection*{Proof of our motivating example}
{
\setlength\tabcolsep{3.6pt}
\begin{longtable}{l l r}
1.&$\zero\not\simeq\suc(x)$&[\ref{eq:nat1}]\\
2.&$\suc(x)\not\simeq\suc(y)\lor x\simeq y$&[\ref{eq:nat2}]\\
3.&$\nil\not\simeq\cons(x,y)$&[\ref{eq:list1}]\\
4.&$\cons(x,y)\not\simeq\cons(z,u)\lor x\simeq z$&[\ref{eq:list2}]\\
5.&$\cons(x,y)\not\simeq\cons(z,u)\lor y\simeq u$&[\ref{eq:list3}]\\
6.&$\zero + x \simeq x$&[\ref{eq:plus1}]\\
7.&$\suc(x) + y \simeq  \suc(x + y)$&[\ref{eq:plus2}]\\
8.&$\nil \appif x \simeq x$&[\ref{eq:app1}]\\
9.&$\cons(x,y) \appif z \simeq \cons(x,y \appif z)$&[\ref{eq:app2}]\\
10.&$\lenif{\nil}\simeq\zero$&[\ref{eq:len1}]\\
11.&$\lenif{\cons(x,y)}\simeq\suc(\lenif{y})$&[\ref{eq:len2}]\\
12.&$g(x)\appif f(y)\simeq f(x)\appif g(y)$&[\ref{eq:ax1}]\\
13.&$\lenif{x}+\lenif{f(y)}\simeq\lenif{f(x)}+\lenif{y}$&[\ref{eq:ax2}]\\
14.&$\lenif{f(g(x))}\simeq\suc(\lenif{x})$&[\ref{eq:ax3}]\\
15.&$\lenif{g(d)\appif f(c)}\not\simeq\suc(\lenif{c}+\lenif{d})$&[\ref{eq:negconj}]\\
16.&$\lenif{g(d)\appif f(c)}\not\simeq\suc(\lenif{c})+\lenif{d}$&[\Rw 15,7]\\
17.&$\lenif{g(d)\appif f(c)}\not\simeq\lenif{f(g(c))}+\lenif{d}$&[\Rw 16,14]\\
18.&$\lenif{g(d)\appif f(c)}\not\simeq\lenif{g(c)}+\lenif{f(d)}$&[\Rw 17,13]\\
19.&$\lenif{f(d)\appif g(c)}\not\simeq\lenif{g(c)}+\lenif{f(d)}$&[\Rw 18,12]\\
20.&$\lenif{\nil\appif g(c)}\not\simeq\lenif{g(c)}+\lenif{\nil}\lor\lenif{c_2\appif g(c)}\simeq\lenif{g(c)}+\lenif{c_2}$&[\Ind 19]\\
21.&$\lenif{\nil\appif g(c)}\not\simeq\lenif{g(c)}+\lenif{\nil}\lor\lenif{\cons(c_1,c_2)\appif g(c)}\not\simeq\lenif{g(c)}+\lenif{\cons(c_1,c_2)}$&[\Ind 19]\\
22.&$\lenif{g(c)}\not\simeq\lenif{g(c)}+\lenif{\nil}\lor\lenif{c_2\appif g(c)}\simeq\lenif{g(c)}+\lenif{c_2}$&[\mtt{Sup} 20,8]\\
23.&$\lenif{g(c)}\not\simeq\lenif{g(c)}+\zero\lor\lenif{c_2\appif g(c)}\simeq\lenif{g(c)}+\lenif{c_2}$&[\mtt{Sup} 22,10]\\
24.&$\zero\not\simeq\zero+\zero\lor c_3\simeq c_3+\zero\lor\lenif{c_2\appif g(c)}\simeq\lenif{g(c)}+\lenif{c_2}$&[\Ind 23]\\
25.&$\zero\not\simeq\zero+\zero\lor \suc(c_3)\not\simeq \suc(c_3)+\zero\lor\lenif{c_2\appif g(c)}\simeq\lenif{g(c)}+\lenif{c_2}$&[\Ind 23]\\
26.&$\zero\not\simeq\zero\lor c_3\simeq c_3+\zero\lor\lenif{c_2\appif g(c)}\simeq\lenif{g(c)}+\lenif{c_2}$&[\mtt{Sup} 24,6]\\
27.&$c_3\simeq c_3+\zero\lor\lenif{c_2\appif g(c)}\simeq\lenif{g(c)}+\lenif{c_2}$&[\mtt{EqRes} 26]\\
28.&$\zero\not\simeq\zero\lor \suc(c_3)\not\simeq \suc(c_3)+\zero\lor\lenif{c_2\appif g(c)}\simeq\lenif{g(c)}+\lenif{c_2}$&[\mtt{Sup} 25,6]\\
29.&$\suc(c_3)\not\simeq \suc(c_3)+\zero\lor\lenif{c_2\appif g(c)}\simeq\lenif{g(c)}+\lenif{c_2}$&[\mtt{EqRes} 28]\\
30.&$\suc(c_3)\not\simeq \suc(c_3+\zero)\lor\lenif{c_2\appif g(c)}\simeq\lenif{g(c)}+\lenif{c_2}$&[\mtt{Sup} 29,7]\\
31.&$\suc(c_3)\not\simeq \suc(c_3)\lor\lenif{c_2\appif g(c)}\simeq\lenif{g(c)}+\lenif{c_2}$&[\mtt{Sup} 30,27]\\
32.&$\lenif{c_2\appif g(c)}\simeq\lenif{g(c)}+\lenif{c_2}$&[\mtt{EqRes} 31]\\

33.&$\lenif{g(c)}\not\simeq\lenif{g(c)}+\lenif{\nil}\lor\lenif{\cons(c_1,c_2)\appif g(c)}\not\simeq\lenif{g(c)}+\lenif{\cons(c_1,c_2)}$&[\mtt{Sup} 21,8]\\
34.&$\lenif{g(c)}\not\simeq\lenif{g(c)}+\zero\lor\lenif{\cons(c_1,c_2)\appif g(c)}\not\simeq\lenif{g(c)}+\lenif{\cons(c_1,c_2)}$&[\mtt{Sup} 33,10]\\
35.&$\zero\not\simeq\zero+\zero\lor c_4\simeq c_4+\zero\lor\lenif{\cons(c_1,c_2)\appif g(c)}\not\simeq\lenif{g(c)}+\lenif{\cons(c_1,c_2)}$&[\Ind 34]\\
36.&$\zero\not\simeq\zero+\zero\lor \suc(c_4)\not\simeq \suc(c_4)+\zero\lor\lenif{\cons(c_1,c_2)\appif g(c)}\not\simeq\lenif{g(c)}+\lenif{\cons(c_1,c_2)}$&[\Ind 34]\\
37.&$\zero\not\simeq\zero\lor c_4\simeq c_4+\zero\lor\lenif{\cons(c_1,c_2)\appif g(c)}\not\simeq\lenif{g(c)}+\lenif{\cons(c_1,c_2)}$&[\mtt{Sup} 35,6]\\
38.&$c_4\simeq c_4+\zero\lor\lenif{\cons(c_1,c_2)\appif g(c)}\not\simeq\lenif{g(c)}+\lenif{\cons(c_1,c_2)}$&[\mtt{EqRes} 37]\\
39.&$\zero\not\simeq\zero\lor \suc(c_4)\not\simeq \suc(c_4)+\zero\lor\lenif{\cons(c_1,c_2)\appif g(c)}\not\simeq\lenif{g(c)}+\lenif{\cons(c_1,c_2)}$&[\mtt{Sup} 36,6]\\
40.&$\suc(c_4)\not\simeq \suc(c_4)+\zero\lor\lenif{\cons(c_1,c_2)\appif g(c)}\not\simeq\lenif{g(c)}+\lenif{\cons(c_1,c_2)}$&[\mtt{EqRes} 39]\\
41.&$\suc(c_4)\not\simeq \suc(c_4+\zero)\lor\lenif{\cons(c_1,c_2)\appif g(c)}\not\simeq\lenif{g(c)}+\lenif{\cons(c_1,c_2)}$&[\mtt{Sup} 40,7]\\
42.&$\suc(c_4)\not\simeq \suc(c_4)\lor\lenif{\cons(c_1,c_2)\appif g(c)}\not\simeq\lenif{g(c)}+\lenif{\cons(c_1,c_2)}$&[\mtt{Sup} 41,27]\\
43.&$\lenif{\cons(c_1,c_2)\appif g(c)}\not\simeq\lenif{g(c)}+\lenif{\cons(c_1,c_2)}$&[\mtt{EqRes} 42]\\
44.&$\lenif{\cons(c_1,c_2\appif g(c))}\not\simeq\lenif{g(c)}+\lenif{\cons(c_1,c_2)}$&[\mtt{Sup} 43,9]\\
45.&$\suc(\lenif{c_2\appif g(c)})\not\simeq\lenif{g(c)}+\lenif{\cons(c_1,c_2)}$&[\mtt{Sup} 44,11]\\
46.&$\suc(\lenif{g(c)}+\lenif{c_2})\not\simeq\lenif{g(c)}+\lenif{\cons(c_1,c_2)}$&[\mtt{Sup} 45,32]\\
47.&$\suc(\lenif{g(c)}+\lenif{c_2})\not\simeq\lenif{g(c)}+\suc(\lenif{c_2})$&[\mtt{Sup} 46,11]\\
48.&$\suc(\zero+\lenif{c_2})\not\simeq\zero+\suc(\lenif{c_2})\lor\suc(c_5+\lenif{c_2})\simeq c_5+\suc(\lenif{c_2})$&[\Ind 47]\\
49.&$\suc(\zero+\lenif{c_2})\not\simeq\zero+\suc(\lenif{c_2})\lor\suc(\suc(c_5)+\lenif{c_2})\not\simeq\suc(c_5)+\suc(\lenif{c_2})$&[\Ind 47]\\
50.&$\suc(\lenif{c_2})\not\simeq\zero+\suc(\lenif{c_2})\lor\suc(c_5+\lenif{c_2})\simeq c_5+\suc(\lenif{c_2})$&[\mtt{Sup} 48,6]\\
51.&$\suc(\lenif{c_2})\not\simeq\suc(\lenif{c_2})\lor\suc(c_5+\lenif{c_2})\simeq c_5+\suc(\lenif{c_2})$&[\mtt{Sup} 50,6]\\
52.&$\suc(c_5+\lenif{c_2})\simeq c_5+\suc(\lenif{c_2})$&[\mtt{EqRes} 51]\\
53.&$\suc(\lenif{c_2})\not\simeq\zero+\suc(\lenif{c_2})\lor\suc(\suc(c_5)+\lenif{c_2})\not\simeq\suc(c_5)+\suc(\lenif{c_2})$&[\mtt{Sup} 49,6]\\
54.&$\suc(\lenif{c_2})\not\simeq\suc(\lenif{c_2})\lor\suc(\suc(c_5)+\lenif{c_2})\not\simeq\suc(c_5)+\suc(\lenif{c_2})$&[\mtt{Sup} 53,6]\\
55.&$\suc(\suc(c_5)+\lenif{c_2})\not\simeq\suc(c_5)+\suc(\lenif{c_2})$&[\mtt{EqRes} 54]\\
56.&$\suc(\suc(c_5+\lenif{c_2}))\not\simeq\suc(c_5)+\suc(\lenif{c_2})$&[\mtt{Sup} 55,7]\\
57.&$\suc(\suc(c_5+\lenif{c_2}))\not\simeq\suc(c_5+\suc(\lenif{c_2}))$&[\mtt{Sup} 56,7]\\
58.&$\suc(c_5+\suc(\lenif{c_2}))\not\simeq\suc(c_5+\suc(\lenif{c_2}))$&[\mtt{Sup} 57,52]\\
59.&$\square$&[\mtt{EqRes} 58]\\
\end{longtable}}

\newpage
\section*{Appendix B}
\subsection*{Rewriting calculi -- variants of \ReC}

\edDef*

\subsection*{The \ReC calculus}

\begin{center}
\begin{tabular}{c p{.1\linewidth} l}
\multirow{3}{*}{
\AxiomC{$\underline{s[u]\bowtie t}\lor C$}
\AxiomC{$\underline{l\simeq r}\lor D$}
\LeftLabel{(\mtt{Sup})}
\BinaryInfC{$(s[r]\bowtie t\lor C \lor D)\theta$}
\DisplayProof}
&
\multirow{3}{*}{where} & (1) $u$ is not a variable,\\
& & (2) $\theta=\mgu(l,u)$,\\
& & (3) $r\theta\not\succeq l\theta$ and $t\theta\not\succeq s\theta$,\\
\\

\multirow{2}{*}{
\AxiomC{$\underline{s\not\simeq t}\lor C$}
\LeftLabel{(\mtt{EqRes})}
\UnaryInfC{$C\theta$}
\DisplayProof}
&
\multirow{2}{*}{where} & \multirow{2}{*}{$\theta=\mgu(s,t)$,}\\
\\
\\

\multirow{2}{*}{
\AxiomC{$\underline{s\simeq t}\lor \underline{u\simeq w} \lor C$}
\LeftLabel{(\mtt{EqFac})}
\UnaryInfC{$(s\simeq t\lor t\not\simeq w \lor C)\theta$}
\DisplayProof}
&
\multirow{2}{*}{where} & (1) $\theta=\mgu(s,u)$,\\
& & (2) $t\theta\not\succeq s\theta$ and $w\theta\not\succeq t\theta$,\\
\end{tabular}

\begin{prooftree}
\AxiomC{$C[l\theta]$}
\AxiomC{$l\simeq r$}
\LeftLabel{$(\Rw)$}
\BinaryInfC{$C[r\theta]$}
\end{prooftree}
\medskip
\end{center}

\recEgd*
\begin{proof}
Let $\mathcal{C}$ be a set of equations and $l\simeq r$ an equation. By assumption we are given a derivation $\mathcal{C}\vdash^*_\ReC D[l\theta]$. We trivially extend it to the following \ReC-derivation:
\begin{prooftree}
\AxiomC{$\vdots$}
\noLine
\UnaryInfC{$D[l\theta]$}
\AxiomC{$l\simeq r$}
\LeftLabel{$(\Rw)$}
\BinaryInfC{$D[r\theta]$}
\end{prooftree}
\end{proof}

\subsection*{The $\ReC_\lor$ calculus}

\begin{center}
\begin{tabular}{c p{.1\linewidth} l}
\multirow{3}{*}{
\AxiomC{$\drw{\underline{s[u]\bowtie t}\lor C}$}
\AxiomC{$\drw{\underline{l\simeq r}\lor D}$}
\LeftLabel{(\mtt{Sup})}
\BinaryInfC{$\drw{(s[r]\bowtie t\lor C \lor D)\theta}$}
\DisplayProof}
&
\multirow{3}{*}{where} & (1) $u$ is not a variable,\\
& & (2) $\theta=\mgu(l,u)$,\\
& & (3) $r\theta\not\succeq l\theta$ and $t\theta\not\succeq s\theta$,\\
\\

\multirow{2}{*}{
\centering
\AxiomC{$\drw{\underline{s\not\simeq t}\lor C}$}
\LeftLabel{(\mtt{EqRes})}
\UnaryInfC{$\drw{C\theta}$}
\DisplayProof}
&
\multirow{2}{*}{where} & \multirow{2}{*}{$\theta=\mgu(s,t)$,}\\
\\
\\

\multirow{2}{.45\linewidth}{
\centering
\AxiomC{$\drw{\underline{s\simeq t}\lor \underline{u\simeq w} \lor C}$}
\LeftLabel{(\mtt{EqFac})}
\UnaryInfC{$\drw{(s\simeq t\lor t\not\simeq w \lor C)\theta}$}
\DisplayProof}
&
\multirow{2}{*}{where} & (1) $\theta=\mgu(s,u)$,\\
& & (2) $t\theta\not\succeq s\theta$ and $w\theta\not\succeq t\theta$,\\[1.8em]

% \multirow{2}{*}{\AxiomC{$\drw{C[l\theta]}$}
% \AxiomC{$\drw{l\simeq r}$}
% \LeftLabel{$(\Rw_\downarrow)$}
% \BinaryInfC{$\drw{C[r\theta]}$}
% \DisplayProof} & \multirow{2}{*}{where} &\multirow{2}{*}{$l\theta\nprec r\theta$,}\\[2.4em]

% \multirow{2}{*}{\AxiomC{$\erw{C[r\theta]}$}
% \AxiomC{$\drw{l\simeq r}$}
% \LeftLabel{$(\Rw_\uparrow)$}
% \BinaryInfC{$\urw{C[l\theta]}$}
% \DisplayProof} & \multirow{2}{*}{where} &\multirow{2}{*}{$l\theta\prec r\theta$.}\\[1em]
\end{tabular}
\begin{minipage}{.5\linewidth}
\centering
\AxiomC{$\drw{C[l\theta]}$}
\AxiomC{$l\simeq r$}
\LeftLabel{$(\Rw_\downarrow)$}
\BinaryInfC{$\drw{C[r\theta]}$}
\DisplayProof\quad
where $l\theta\npreceq r\theta$,
\end{minipage}\begin{minipage}{.5\linewidth}
\AxiomC{$\erw{C[l\theta]}$}
\AxiomC{$l\simeq r$}
\LeftLabel{$(\Rw_\uparrow)$}
\BinaryInfC{$\urw{C[r\theta]}$}
\DisplayProof
\quad where $l\theta\nsucceq r\theta$.
\end{minipage}
\medskip
\end{center}

% \begin{lemma}
% If $s[l]\succ_{kbo}s[r]$, then $l\succ_{kbo} r$.
% \end{lemma}
% \begin{proof}
% Assume $s[l]\succ_{kbo}s[r]$. If $s[l]=l$, then $s[r]=r$ and trivially $l\succ_{kbo}r$. Otherwise, we have $s[l]=f(t_1,...,t_i[l],...,t_n)$ and $s[r]=f(t_1,...,t_i[r],...,t_n)$. Since variable occurrences and symbols are exactly the same in the two terms apart from the occurrence of $l$ which is replaced by $r$ in the second term, we have that (i) $|s[l]|_x\geq |s[r]|_x$ iff $|l|_x\geq |r|_x$ for all variables $x$, and (ii) $w(s[l])= w(s[r])$ iff $w(l)= w(r)$ and $w(s[l])>w(s[r])$ iff $w(l)>w(r)$. In the second case, we have $l\succ_{kbo} r$. Otherwise, (by definition of KBO) we have $t_i[l]\succ_{kbo}t_i[r]$. ... By induction?
% \end{proof}

% \begin{lemma}
% If $s[l]\succ_{lpo}s[r]$, then $l\succ_{lpo} r$.
% \end{lemma}
% \begin{proof}
% Assume $s[l]\succ_{lpo}s[r]$. If $s[l]=l$, then $s[r]=r$ and trivially $s[l]\succ_{lpo}s[r]$. Otherwise, we have $s[l]=f(t_1,...,t_i[l],...,t_n)$ and $s[r]=f(t_1,...,t_i[r],...,t_n)$. By definition of LPO, then $s[l]\succ_{lpo}s[r]$ implies $t_i[l]\succ_{lpo}t_i[r]$. Repeat (induct?).
% \end{proof}

\begin{remark}
As noted earlier, we have to allow rewriting with incomparable equations in both $\Rw_\downarrow$ and $\Rw_\uparrow$ inferences. We show an example where a rewrite with an incomparable equation after an upward rewrite is needed.

Let $g(x,x)\simeq x$, $f(x,y)\simeq f(y,x)$ and $p(f(x,y))$ be clauses. Deriving $p(g(f(x,y),f(y,x)))$ can be done in two different ways:
\begin{enumerate}[label=(\roman*),leftmargin=1.7em]
\item
\begin{prooftree}
\AxiomC{$\drw{p(f(x,y))}$}
\AxiomC{$\drw{g(x,x)\simeq x}$}
\BinaryInfC{$\urw{p(g(f(x,y),f(x,y)))}$}
\AxiomC{$\drw{f(x,y)\simeq f(y,x)}$}
\BinaryInfC{$\urw{p(g(f(x,y),f(y,x)))}$}
\end{prooftree}

\item
\begin{prooftree}
\AxiomC{$\drw{p(f(x,y))}$}
\AxiomC{$\drw{f(x,y)\simeq f(y,x)}$}
\BinaryInfC{$\drw{p(f(y,x))}$}
\AxiomC{$\drw{g(x,x)\simeq x}$}
\BinaryInfC{$\urw{p(g(f(y,x),f(y,x)))}$}
\AxiomC{$\drw{f(x,y)\simeq f(y,x)}$}
\BinaryInfC{$\urw{p(g(f(x,y),f(y,x)))}$}
\end{prooftree}
\end{enumerate}
Notice that in both cases, we have to perform a rewrite with an incomparable equation after the upward rewrite.

We call an equation linear if both of its sides contain at most one occurrence of each variable. The equation $f(x,y)\simeq f(y,x)$ is linear, while $g(x,x)\simeq x$ is not. We can also avoid this issue if we restrict rewriting equations to be linear.
\end{remark}

\ugdRwDownarrow*
\begin{proof}
Let $\mathcal{C}$ be a set of equations and $l\simeq r$ an equation. We apply well-founded induction on $\succ$ for clauses. In particular, we assume contrary to the claim that we have a minimal (w.r.t. $\succ$) clause $\erw{D[l\theta]}$ such that $\mathcal{C}\vdash^*_{\ReC_\lor} \erw{D[l\theta]}$ but $\mathcal{C},l\simeq r\nvdash^*_{\ReC_\lor} \erw{D[r\theta]}$, violating equational derivability. Since this clause is a minimal clause (w.r.t. $\succ$) where we cannot perform such a derivation, we have the induction hypothesis that for all $\erw{D'[u\sigma]}\prec \erw{D[l\theta]}$ and all sets of equations $\mathcal{D}$ and equation $u\simeq w$, if $\mathcal{D}\vdash^*_{\ReC_\lor}\erw{D'[u\sigma]}$ then $\mathcal{D},u\simeq w\vdash^*_{\ReC_\lor}\erw{D'[w\sigma]}$. As we will see, the only problematic case in the proof is when we need to perform a downward rewrite after an upward rewrite. In this case, the clause before the upward rewrite is always smaller than $\erw{D[l\theta]}$, hence we can apply the induction hypothesis. We consider the following cases.
\begin{enumerate}[label=(\roman*),leftmargin=1.7em]

\item If $l\theta\nsucceq r\theta$, then we have the following derivation:
\begin{prooftree}
\AxiomC{$\vdots$}
\noLine
\UnaryInfC{$\erw{D[l\theta]}$}
\AxiomC{$l\simeq r$}
\LeftLabel{$(\Rw_\uparrow)$}
\BinaryInfC{$\urw{D[r\theta]}$}
\end{prooftree}
Hence we could derive $\erw{D[r\theta]}$, contradiction.

\item If $l\theta=r\theta$, then $\erw{D[l\theta]}=\erw{D[r\theta]}$, so we get the derivation trivially, contradiction.

\item If $l\theta\succ r\theta$ and $\erw{D[l\theta]}=\drw{D[l\theta]}$, then we have the following derivation:
\begin{prooftree}
\AxiomC{$\vdots$}
\noLine
\UnaryInfC{$\drw{D[l\theta]}$}
\AxiomC{$l\simeq r$}
\LeftLabel{$(\Rw_\downarrow)$}
\BinaryInfC{$\drw{D[r\theta]}$}
\end{prooftree}
Hence we could derive $\erw{D[r\theta]}$, contradiction.

\item If $l\theta\succ r\theta$ and $\erw{D[l\theta]}=\urw{D[l\theta]}$, then we have the following derivation by assumption:
\begin{prooftree}
\AxiomC{$\vdots$}
\noLine
\UnaryInfC{$\erw{D[u\sigma]}$}
\AxiomC{$u\simeq w$}
\LeftLabel{$(\Rw_\uparrow)$}
\BinaryInfC{$\urw{(D[w\sigma])[l\theta]}$}
\end{prooftree}
where $u\sigma\nsucceq w\sigma$. If $u\sigma\nprec w\sigma$, then $u\sigma$ and $w\sigma$ are incomparable. But then, there is also a derivation:
\begin{prooftree}
\AxiomC{$\vdots$}
\noLine
\UnaryInfC{$\erw{D[u\sigma]}$}
\AxiomC{$u\simeq w$}
\LeftLabel{$(\Rw_\downarrow)$}
\BinaryInfC{$\drw{(D[w\sigma])[l\theta]}$}
\end{prooftree}
This case is similar to Case (iii). Otherwise, we have $u\sigma\prec w\sigma$ and we consider the following subcases:
\begin{enumerate}[leftmargin=0em]

\item If $l\theta$ and $w\sigma$ are in parallel positions in $\urw{(D[w\sigma])[l\theta]}=\urw{D[w\sigma][l\theta]}$. Then, we have $\mathcal{C}\vdash^*_{\ReC_\lor}\urw{D[u\sigma][l\theta]}$ and since $u\sigma\prec w\sigma$, $\urw{D[u\sigma][l\theta]}\prec \urw{D[w\sigma][l\theta]}$, the claim holds for $\urw{D[u\sigma][l\theta]}$, in particular, from $\mathcal{C}\vdash^*_{\ReC_\lor}\urw{D[u\sigma][l\theta]}$ follows that and $\mathcal{C}\vdash^*_{\ReC_\lor}\erw{D[u\sigma][r\theta]}$. Then, we get the following derivation:
\begin{prooftree}
\AxiomC{$\vdots$}
\noLine
\UnaryInfC{$\erw{D[u\sigma][r\theta]}$}
\AxiomC{$u\simeq w$}
\LeftLabel{$(\Rw_\uparrow)$}
\BinaryInfC{$\urw{D[w\sigma][r\theta]}$}
\end{prooftree}
Hence we could derive $\erw{D[r\theta]}$, contradiction.

\item If $l\theta$ is a subterm of $w\sigma$ and is at or below a variable position in $w$, then there is some variable $x$ in $w$ s.t. $x\sigma=t[l\theta]$ for some term $t$. We have $\erw{D[u\sigma]}\prec \urw{D[w\sigma]}$ by assumption, hence from $\mathcal{C}\vdash^*_{\ReC_\lor}\erw{D[u\sigma]}$, (by applying the induction hypothesis several times, since $l\theta\succ r\theta$), we have a derivation $\mathcal{C},l\simeq r\vdash^*_{\ReC_\lor}\erw{D[u\sigma']}$ where for each $y$ in $u$
$$\sigma'(y)=\begin{cases}\sigma(y) & \text{if }x\neq y\\
t[r\theta] & \text{otherwise}\end{cases}$$
Note that this also works if $u$ does not contain $x$, because then $u\sigma=u\sigma'$ and we can use $\sigma'$ in the rewriting. Since $\erw{D[u\sigma']}\preceq \erw{D[u\sigma]}$, we can apply the induction hypothesis one more time to get from $\mathcal{C},l\simeq r\vdash^*_{\ReC_\lor}\erw{D[u\sigma']}$ the derivation $\mathcal{C},l\simeq r\vdash^*_{\ReC_\lor}\erw{D[w\sigma']}$ (note that $u\simeq w\in \mathcal{C}$ by assumption). Now we have $\erw{D[w\sigma']}=\erw{D[r\theta][r\theta]...[r\theta]}$ where some $r\theta$s should be $l\theta$s. By $l\theta\succ r\theta$, we restore these terms in the following derivation:
\begin{prooftree}
\AxiomC{$\vdots$}
\noLine
\UnaryInfC{$\erw{D[r\theta][r\theta]...[r\theta]}$}
\AxiomC{$l\simeq r$}
\LeftLabel{$(\Rw_\uparrow)$}
\BinaryInfC{$\urw{D[l\theta][r\theta]...[r\theta]}$}
\AxiomC{$l\simeq r$}
\LeftLabel{$(\Rw_\uparrow)$}
\BinaryInfC{$\ddots$}
\AxiomC{$l\simeq r$}
\LeftLabel{$(\Rw_\uparrow)$}
\BinaryInfC{$\urw{D[l\theta][l\theta]...[l\theta]}$}
\end{prooftree}
Hence we could derive $\erw{D[r\theta]}$, contradiction.

\item If $w\sigma$ is a subterm of $l\theta$ and is at or below a variable position in $l$, then there is some variable $x$ in $l$ s.t. $x\theta=t[w\sigma]$ for some term $t$. Consider the substitution $\theta'$ where for each variable $y$ in $l$, we have
$$\theta'(y)=\begin{cases}\theta(y) & \text{if }x\neq y\\
t[u\sigma] & \text{otherwise}\end{cases}$$
Since $\erw{D[u\sigma]}\prec \urw{(D[w\sigma])[l\theta]}$, and $u\sigma\prec w\sigma$, we can apply the induction hypothesis several times to get by $\mathcal{C}\vdash^*_{\ReC_\lor}\erw{D[u\sigma]}$ the derivation $\mathcal{C}\vdash^*_{\ReC_\lor}\erw{(D[u\sigma])[l\theta']}$. Then, by $\erw{(D[u\sigma])[l\theta']}\prec\urw{(D[w\sigma])[l\theta]}$ and by the induction hypothesis and $\mathcal{C}\vdash^*_{\ReC_\lor}\erw{(D[u\sigma])[l\theta']}$, we get a derivation $\mathcal{C},l\simeq r\vdash^*_{\ReC_\lor}\erw{(D[u\sigma])[r\theta']}$. Now we have the desired $\erw{(D[w\sigma])[r\theta]}$ except there are some $u\sigma$s which should be $w\sigma$s. By $u\sigma\prec w\sigma$, we restore these with the following derivation:
\begin{prooftree}
\AxiomC{$\vdots$}
\noLine
\UnaryInfC{$\erw{(D[u\sigma])[r\theta']}$}
\AxiomC{$u\simeq w$}
\LeftLabel{$(\Rw_\uparrow)$}
\BinaryInfC{$\ddots$}
\AxiomC{$u\simeq w$}
\LeftLabel{$(\Rw_\uparrow)$}
\BinaryInfC{$\urw{(D[w\sigma])[r\theta]}$}
\end{prooftree}
Hence we could derive $\erw{D[r\theta]}$, contradiction.

\item $l\theta$ is a subterm of $w\sigma$ in a position which is above a variable position in $w$. Let $l'$ be the term in $w$ in this position. Then, there is a superposition
\begin{prooftree}
\AxiomC{$w[l']\simeq u$}
\AxiomC{$l\simeq r$}
\BinaryInfC{$(w[r]\simeq u)\rho$}
\end{prooftree}
where $\rho=\mgu(l,l')$, $\theta=\rho\mu$ for some substitution $\mu$, and $\sigma=\rho\eta$ for some substitution $\eta$ (otherwise $l$ or $w$ would not match the clause $D$). Then, $\erw{D[u\sigma]}\prec \urw{(D[w\sigma])[l\theta]}$ so the induction hypothesis applies and we get from $\mathcal{C}\vdash^*_{\ReC_\lor} \erw{D[u\sigma]}$ the derivation $\mathcal{C},(w[r]\simeq u)\rho\vdash^*_{\ReC_\lor} \erw{(D[w\sigma])[r\theta]}$ and hence $\mathcal{C},l\simeq r\vdash^*_{\ReC_\lor} \erw{(D[w\sigma])[r\theta]}$.
Hence we could derive $\erw{D[r\theta]}$, contradiction.

\item $w\sigma$ is a subterm of $l\theta$ in a position which is above a variable position in $l$. Let $w'$ be the term in $l$ in this position. Then, there is a superposition
\begin{prooftree}
\AxiomC{$l[w']\simeq r$}
\AxiomC{$w\simeq u$}
\BinaryInfC{$(l[u]\simeq r)\rho$}
\end{prooftree}
where $\rho=\mgu(w,w')$, $\theta=\rho\mu$ for some substitution $\mu$, and $\sigma=\rho\eta$ for some substitution $\eta$ (otherwise $l$ or $w$ would not match the clause $D$). Then, $\erw{D[u\sigma]}\prec \urw{(D[w\sigma])[l\theta]}$ so the induction hypothesis applies and we get from $\mathcal{C}\vdash^*_{\ReC_\lor} \erw{D[u\sigma]}$ the derivation $\mathcal{C},(l[w]\simeq r)\rho\vdash^*_{\ReC_\lor} \erw{(D[w\sigma])[r\theta]}$ and hence $\mathcal{C},l\simeq r\vdash^*_{\ReC_\lor} \erw{(D[w\sigma])[r\theta]}$.
Hence we could derive $\erw{D[r\theta]}$, contradiction.

\end{enumerate}
We have covered all subcases of case (iv).
\end{enumerate}
We have covered all cases, which proves the claim.
%
% Assume we are given an $\ReC{}_\lor$ derivation of $s\not\simeq t$ from some premises $\mathcal{F}$. By finiteness of this derivation, we can trace back the derivation to a clause $u\not\simeq w$ which is not the result of an \mtt{Rw_\uparrow} inference, i.e. it was derived by $\ReC{}_\lor\setminus\{\mtt{Rw_\uparrow}\}$.
%
% Since $u\leftrightarrow^\ast_{E_\mathcal{F}}s\leftrightarrow^\ast_{E_\mathcal{F}} s'$, by first-order completeness of \Sup{}, $\mathcal{F}$ is eventually saturated s.t. there is a common normal form $p$ of $u$ and $s'$ and common normal form $q$ of $w$ and $t$ satisfying $p\preceq u,s'$, $q\preceq w,t$, moreover $\mathcal{F},u\not\simeq w\vdash^\ast_{\Sup\cup\{\mtt{Rw_\downarrow}\}}p\not\simeq q$ and $\mathcal{F},s'\not\simeq t\vdash^\ast_{\Sup\cup\{\mtt{Rw_\downarrow}\}}p\not\simeq q$.
%
% It is easy to see that if $\mathcal{F},s'\not\simeq t\vdash^\ast_{\Sup\cup\{\mtt{Rw_\downarrow}\}}p\not\simeq q$, then $\mathcal{F},p\not\simeq q\vdash^\ast_{\Sup\cup\{\mtt{Rw_\uparrow}\}}s'\not\simeq t$ by using some equations in the opposing orientation and in backward order. These two derivations together with $\mathcal{F}\vdash_{\ReC{}_\lor} s'\not\simeq t$ by assumption, we get the desired $\ReC_\lor$ derivation.
\end{proof}

\subsection*{The $\ReC^\to$ calculus}

\begin{center}
\begin{tabular}{c p{.1\linewidth} l}
\multirow{3}{*}{
\AxiomC{$_p\underline{s[u]\bowtie t}\lor C$}
\AxiomC{$_q\underline{l\simeq r}\lor D$}
\LeftLabel{(\mtt{Sup})}
\BinaryInfC{$_\epsilon(s[r]\bowtie t\lor C \lor D)\theta$}
\DisplayProof}
&
\multirow{3}{*}{where} & (1) $u$ is not a variable,\\
& & (2) $\theta=\mgu(l,u)$,\\
& & (3) $r\theta\not\succeq l\theta$ and $t\theta\not\succeq s\theta$,\\
\\

\multirow{2}{*}{
\AxiomC{$_p\underline{s\not\simeq t}\lor C$}
\LeftLabel{(\mtt{EqRes})}
\UnaryInfC{$_\epsilon C\theta$}
\DisplayProof}
&
\multirow{2}{*}{where} & \multirow{2}{*}{$\theta=\mgu(s,t)$,}\\
\\
\\

\multirow{2}{*}{
\AxiomC{$_p\underline{s\simeq t}\lor \underline{u\simeq w} \lor C$}
\LeftLabel{(\mtt{EqFac})}
\UnaryInfC{$_\epsilon(s\simeq t\lor t\not\simeq w \lor C)\theta$}
\DisplayProof}
&
\multirow{2}{*}{where} & (1) $\theta=\mgu(s,u)$,\\
& & (2) $t\theta\not\succeq s\theta$ and $w\theta\not\succeq t\theta$,\\
\\

\multirow{2}{*}{\AxiomC{$_pC[l\theta]_q$}
\AxiomC{$_{p'}l\simeq r$}
\LeftLabel{$(\Rw^\to)$}
\BinaryInfC{$_qC[r\theta]_q$}
\DisplayProof} & \multirow{2}{*}{where} &\multirow{2}{*}{$q\nless_l p$.}\\
\medskip
\end{tabular}
\end{center}

\recToEgd*
\begin{proof}
We annotate each \ReC derivation $\Pi$ with a finite sequence of positions $S(\Pi)$ inductively as follows. If the derivation is empty, i.e. we have $\vdash D$, then $S(\vdash D)=\langle\rangle$ (the empty sequence). Otherwise, we have a derivation of the following form:
\begin{prooftree}
\AxiomC{$\vdots$}
\noLine
\UnaryInfC{$D[l\theta]_p$}
\AxiomC{$l\simeq r$}
\LeftLabel{$(\Rw)$}
\BinaryInfC{$D[r\theta]_p$}
\end{prooftree}
and we have $S(\mathcal{C}\vdash^* D[l\theta]_p)=\langle p_1,...,p_n\rangle$. Then, the sequence is $S(\mathcal{C},l\simeq r\vdash^* D[r\theta])=\langle p_1,...,p_n,p\rangle$. We define a partial ordering $\lessdot$ on finite sequences of positions as follows: $\langle p_1,...,p_n,p,q,q_1,...,q_m\rangle\lessdot\langle p_1,...,p_n,q,p,q_1,...,q_m\rangle$ if $p<_l q$. It is easy to see that $\lessdot$ is well-founded.

We show the claim by well-founded induction on $\lessdot$. Let $\mathcal{C}$ be a set of equations and $D$ a clause. Take a $\ReC$ derivation $\Pi=\mathcal{C}\vdash^* D$ s.t. $S(\Pi)$ is minimal w.r.t. $\lessdot$ but is not a $\ReC^\to$ derivation. This means that there are two consecutive inferences of the following form in the derivation:
\begin{prooftree}
\AxiomC{$\vdots$}
\noLine
\UnaryInfC{$_{p'}C[l\theta]_p[u\sigma]_q$}
\AxiomC{$u\simeq w$}
\BinaryInfC{$_qC[l\theta]_p[w\sigma]_q$}
\AxiomC{$l\simeq r$}
\BinaryInfC{$_pC[r\theta]_p[w\sigma]_q$}
\noLine
\UnaryInfC{$\vdots$}
\end{prooftree}
where $p<_l q$ and hence we have $S(\Pi)=\langle p_1,...,p_n,p',q,p,q_1,...,q_m\rangle$. Consider the derivation $\Pi'$:
\begin{prooftree}
\AxiomC{$\vdots$}
\noLine
\UnaryInfC{$_{p'}C[l\theta]_p[u\sigma]_q$}
\AxiomC{$l\simeq r$}
\BinaryInfC{$_pC[r\theta]_p[u\sigma]_q$}
\AxiomC{$u\simeq w$}
\BinaryInfC{$_qC[r\theta]_p[w\sigma]_q$}
\noLine
\UnaryInfC{$\vdots$}
\end{prooftree}
We have that $S(\Pi')=\langle p_1,...,p_n,p',p,q,q_1,...,q_m\rangle$. We have $S(\Pi')\lessdot S(\Pi)$, so $S(\Pi)$ is not the minimal, contradiction. Hence, by well-foundedness of $\lessdot$ we conclude that there is a $\ReC^\to$ derivation $\mathcal{C}\vdash^*D$, so $\ReC^\to$ admits ED.
\end{proof}

\subsection*{The $\ReC^\to_\lor$ calculus}

\begin{center}
\begin{tabular}{c p{.1\linewidth} l}
\multirow{3}{*}{
\AxiomC{$\drw{_p\underline{s[u]\bowtie t}\lor C}$}
\AxiomC{$\drw{_q\underline{l\simeq r}\lor D}$}
\LeftLabel{(\mtt{Sup})}
\BinaryInfC{$\drw{_\epsilon(s[r]\bowtie t\lor C \lor D)\theta}$}
\DisplayProof}
&
\multirow{3}{*}{where} & (1) $u$ is not a variable,\\
& & (2) $\theta=\mgu(l,u)$,\\
& & (3) $r\theta\not\succeq l\theta$ and $t\theta\not\succeq s\theta$,\\
\\

\multirow{2}{*}{
\AxiomC{$\drw{_p\underline{s\not\simeq t}\lor C}$}
\LeftLabel{(\mtt{EqRes})}
\UnaryInfC{$\drw{_\epsilon C\theta}$}
\DisplayProof}
&
\multirow{2}{*}{where} & \multirow{2}{*}{$\theta=\mgu(s,t)$,}\\
\\
\\

\multirow{2}{*}{
\AxiomC{$\drw{_p\underline{s\simeq t}\lor \underline{u\simeq w} \lor C}$}
\LeftLabel{(\mtt{EqFac})}
\UnaryInfC{$\drw{_\epsilon(s\simeq t\lor t\not\simeq w \lor C)\theta}$}
\DisplayProof}
&
\multirow{2}{*}{where} & (1) $\theta=\mgu(s,u)$,\\
& & (2) $t\theta\not\succeq s\theta$ and $w\theta\not\succeq t\theta$,\\
\\
\\

\multirow{2}{*}{\AxiomC{$\drw{_pC[l\theta]_q}$}
\AxiomC{$_{p'}l\simeq r$}
\LeftLabel{$(\Rw^\to_\downarrow)$}
\BinaryInfC{$\drw{_qC[r\theta]_q}$}
\DisplayProof} & \multirow{2}{*}{where} & (1) $l\theta\npreceq r\theta$,\\
& & (2) $q\nless_l p$,\\
\\
\\

\multirow{2}{*}{\AxiomC{$\erw{_pC[l\theta]_q}$}
\AxiomC{$_{p'}l\simeq r$}
\LeftLabel{$(\Rw^\to_\uparrow)$}
\BinaryInfC{$\urw{_qC[r\theta]_q}$}
\DisplayProof} & \multirow{2}{*}{where} & (1) $l\theta\nsucceq r\theta$,\\
& & (2) $\erw{_pC[l\theta]}=\drw{_pC[l\theta]}$ or $q\nless_l p$.\\
\medskip
\end{tabular}
\end{center}

\recToLorEgd*
\begin{proof}
The proof has similar ideas as the proof of Theorem~\ref{thm:rec_to_egd}. Since each $\ReC_\lor$ derivation is a sequence of $\Rw_\downarrow$ inferences followed by a sequence of $\Rw_\uparrow$ inferences, we can argue about the two sequences separately. First, we show that for each $\ReC_\lor$ derivation, there is a derivation where each $\Rw_\downarrow$ inference is an $\Rw^\to_\downarrow$  inference but not each $\Rw_\uparrow$ inference is necessarily an $\Rw^\to_\uparrow$ inference. First, we associate a sequence of positions $S(\Pi)$ to each derivation $\Pi$ only based on $\Rw_\downarrow$ inferences. We use the same $\lessdot$ ordering as in the proof of Theorem~\ref{thm:rec_to_egd}. Let $\mathcal{C}$ be a set of equations and $D$ a clause. Take a $\ReC_\lor$ derivation $\Pi=\mathcal{C}\vdash^* D$ s.t. $S(\Pi)$ is minimal w.r.t. $\lessdot$ but there is some $\Rw_\downarrow$ inference which is not an $\Rw^\to_\downarrow$ inference. This means that there are two consecutive $\Rw_\downarrow$ inferences of the following form:
\begin{center}
\AxiomC{$\vdots$}
\noLine
\UnaryInfC{$\drw{_{p'}C[l\theta]_p[u\sigma]_q}$}
\AxiomC{$u\simeq w$}
\BinaryInfC{$\drw{_qC[l\theta]_p[w\sigma]_q}$}
\AxiomC{$l\simeq r$}
\BinaryInfC{$\drw{_pC[r\theta]_p[w\sigma]_q}$}
\noLine
\UnaryInfC{$\vdots$}
\DisplayProof
\end{center}
where $p<_l q$ and we have $S(\Pi)=\langle p_1,...,p_n,p',q,p,q_1,...,q_m\rangle$. We switch the order of inferences as follows:
\begin{center}
\AxiomC{$\vdots$}
\noLine
\UnaryInfC{$\drw{_{p'}C[l\theta]_p[u\sigma]_q}$}
\AxiomC{$l\simeq r$}
\BinaryInfC{$\drw{_pC[r\theta]_p[u\sigma]_q}$}
\AxiomC{$u\simeq w$}
\BinaryInfC{$\drw{_qC[r\theta]_p[w\sigma]_q}$}
\noLine
\UnaryInfC{$\vdots$}
\DisplayProof
\end{center}
We get $\langle p_1,...,p_n,p',p,q,q_1,...,q_m\rangle<\langle p_1,...,p_n,p',q,p,q_1,...,q_m\rangle$, hence $S(\Pi)$ was not minimal w.r.t. $\lessdot$, contradiction. Hence, by well-foundedness of $\lessdot$, we conclude that there exists a $\ReC_\lor$ derivation $\mathcal{C}\vdash^*D$ where each $\Rw_\downarrow$ inference is an $\Rw^\to_\downarrow$ inference.

Next, we show that given a $\ReC_\lor$ derivation $\Pi=\mathcal{C}\vdash^*D$ where each $\Rw_\downarrow$ inference is an $\Rw^\to_\downarrow$ inference, there is a $\ReC^\to_\lor$ derivation. Now, we define $S(\Pi)$ only based on $\Rw_\uparrow$ inferences. Again, we assume there is a $\ReC_\lor$ derivation $\Pi=\mathcal{C}\vdash^* D$ s.t. $S(\Pi)$ is minimal w.r.t. $\lessdot$ but there is some $\Rw_\uparrow$ inference which is not an $\Rw^\to_\uparrow$ inference and there are two consecutive $\Rw_\uparrow$ inferences of the following form:
\begin{center}
\AxiomC{$\vdots$}
\noLine
\UnaryInfC{$\erw{_{p'}C[l\theta]_p[u\sigma]_q}$}
\AxiomC{$u\simeq w$}
\BinaryInfC{$\urw{_qC[l\theta]_p[w\sigma]_q}$}
\AxiomC{$l\simeq r$}
\BinaryInfC{$\urw{_pC[r\theta]_p[w\sigma]_q}$}
\noLine
\UnaryInfC{$\vdots$}
\DisplayProof
\end{center}
where $p<_l q$ and we have $S(\Pi)=\langle p_1,...,p_n,p',q,p,q_1,...,q_m\rangle$. We switch the order of inferences:
\begin{center}
\AxiomC{$\vdots$}
\noLine
\UnaryInfC{$\erw{_{p'}C[l\theta]_p[u\sigma]_q}$}
\AxiomC{$l\simeq r$}
\BinaryInfC{$\urw{_pC[r\theta]_p[u\sigma]_q}$}
\AxiomC{$u\simeq w$}
\BinaryInfC{$\urw{_qC[r\theta]_p[w\sigma]_q}$}
\noLine
\UnaryInfC{$\vdots$}
\DisplayProof
\end{center}
We get $\langle p_1,...,p_n,p',p,q,q_1,...,q_m\rangle<\langle p_1,...,p_n,p',q,p,q_1,...,q_m\rangle$, hence $S(\Pi)$ was not minimal w.r.t. $\lessdot$, contradiction. Hence, by well-foundedness of $\lessdot$, we conclude that there exists a $\ReC^\to_\lor$ derivation $\mathcal{C}\vdash^*D$. Hence, $\ReC^\to_\lor$ admits ED.
\end{proof}

\newpage
\section*{Appendix C}

\subsection*{Redundancy elimination}

\redundantIndInferenceI*
\begin{proof}
Since $\mathcal{I}$ admits ED, and by assumption there is a derivation $\mathcal{C}\vdash^*_\mathcal{I} (L[t])[l\sigma]\lor C$ for some set of clauses $\mathcal{C}$. Then, there is also a derivation of the form $\mathcal{C},l\simeq r\vdash^*_\mathcal{I} (L[t])[l\sigma\mapsto r\sigma]\lor C$. From $l\theta\succ r\theta$, we have $(L[t])[l\sigma]\lor C\succ (L[t])[l\sigma\mapsto r\sigma]\lor C$. There is also an $\Ind_G$ inference
$$(L[t])[l\sigma\mapsto r\sigma]\lor C\vdash \cnf(\neg G[(L[y])[l\sigma\mapsto r\sigma]]\lor C)$$
where $y$ is fresh. The two formulas $\neg G[L[x]]\lor C$ and $\neg G[(L[y])[l\sigma\mapsto r\sigma]]\lor C$ are equivalent w.r.t. the equation $l\simeq r$. Hence, the inference is redundant.
\end{proof}

\redundantIndInferenceII*
\begin{proof}
Since $\mathcal{I}$ admits ED and there is a derivation $\mathcal{C}\vdash^*_\mathcal{I}\overline{L}[t]\lor C$, then there is a derivation $\mathcal{C},l\simeq r\vdash^*_\mathcal{I}\overline{L}[t[l\theta\mapsto r\theta]]\lor C$. By $l\theta\succ r\theta$, we have $\overline{L}[t]\lor C\succ \overline{L}[t[l\theta\mapsto r\theta]]\lor C$. Since $\overline{L}[t\mapsto x]$ is equivalent to $\overline{L}[t[l\theta\mapsto r\theta]\mapsto y]$ (where $y$ is fresh) up to variable renaming, we have following the $\Ind_G$ inference
$$\overline{L}[t[l\theta\mapsto r\theta]]\lor C\vdash \cnf(\neg G[L[y]]\lor C).$$
The formula $\neg G[L[x]]\lor C)$ is equivalent to $\neg G[L[y]]\lor C$. Hence, we conclude that the $\Ind_G$ inference is redundant w.r.t. $\overline{L}[t[l\theta\mapsto r\theta]]\lor C$.
\end{proof}

\redundantWeaklyUsefulRewrite*
\begin{proof}
By the condition $C[l\theta]\succ (l\simeq r)\theta$ it is straightforward to show that the clause $C[l\theta]$ is redundant w.r.t. $C[r\theta]$ and $l\simeq r$, and that by the ED property $C[r\theta]$ can be derived. Take an arbitrary induction on $C[l\theta]$, inducting on term $t$. We consider the following cases:
\begin{enumerate}[label=(\roman*),leftmargin=1.7em]
\item If $t\triangleleft l\theta$, $l\simeq r$ being ineffective, all occurrences of $t$ inside $l\theta$ must be under a single variable position in $l\theta$ (since $l$ has only one occurrence of each variable). Let $\sigma=\{y\mapsto r[t\mapsto x]\mid y\mapsto r\in\theta\}$ for some fresh variable $x$. Since $l\succ r$, we also have $l\sigma\succ r\sigma$ and Lemma~\ref{reducibility-lemma} applies with $l\sigma$, hence the $\Ind_G$ inference is redundant.
\item If $l\theta\trianglelefteq t$, by Lemma~\ref{above-position-lemma} the inference is redundant.
\item Otherwise, $l\theta$ is in a parallel position to $t$ and Lemma~\ref{reducibility-lemma} applies with $l\theta$.
\end{enumerate}
\end{proof}

% \begin{lemma}\label{variable-position-lemma}
% Given left-linear equation $l\simeq r$ and equation $u\simeq w$ with $l\succ r$ and $u\succ w$, any clause of the form $s[l\theta[u\eta]]\not\simeq t$ where $u\eta$ is below a variable position in $l\theta$, is inductively redundant.
% \end{lemma}\todo{This lemma is not used anywhere in the end (nor it is implemented) but it could be left here as a further example of redundant inductions.}
% \begin{proof}
% By Lemma~\ref{above-position-lemma}, any \Ind inference on a superterm of $u\eta$ if redundant. Consider an \Ind inference on a term $t'\triangleleft u\eta$. Since $u\eta$ is below a variable position in $l\theta$, there is a substitution $\theta'$ s.t. $l\theta[u\eta\mapsto x]\equiv l\theta'$. Since $l\succ r$, Lemma~\ref{reducibility-lemma} applies and the inference is redundant. Otherwise, the \Ind inference is on a term $t'$ parallel to $u\eta$. In this case, Lemma~\ref{reducibility-lemma} applies again.
% \end{proof}

% \begin{lemma}\label{excess-variables-lemma}
% Given two terms $s$ and $t$, if $\mathcal{V}ar(t)\setminus\mathcal{V}ar(s)\neq\emptyset$, then $s\not\succ t$.
% \end{lemma}
% \begin{proof}
% Let $x\in\mathcal{V}ar(t)\setminus\mathcal{V}ar(s)$ and assume $s\succ t[x]$. Let $\theta=\{x\mapsto s\}$. By stability under substitutions, we get $s\theta=s\succ t[x]\theta=t[s]\succ s$, contradiction.
% \end{proof}

\subsection*{The \CReC calculus}

\begin{center}
\begin{tabular}{c p{.1\linewidth} l}
\multirow{3}{*}{
\AxiomC{$\underline{s[u]\bowtie t}\lor C$}
\AxiomC{$\underline{l\simeq r}\lor D$}
\LeftLabel{(\mtt{Sup})}
\BinaryInfC{$(s[r]\bowtie t\lor C \lor D)\theta$}
\DisplayProof}
&
\multirow{3}{*}{where} & (1) $u$ is not a variable,\\
& & (2) $\theta=\mgu(l,u)$,\\
& & (3) $r\theta\not\succeq l\theta$ and $t\theta\not\succeq s\theta$,\\[1em]

\multirow{2}{*}{
\AxiomC{$\underline{s\not\simeq t}\lor C$}
\LeftLabel{(\mtt{EqRes})}
\UnaryInfC{$C\theta$}
\DisplayProof}
&
\multirow{2}{*}{where} & \multirow{2}{*}{$\theta=\mgu(s,t)$,}\\[2.4em]

\multirow{2}{*}{
\AxiomC{$\underline{s\simeq t}\lor \underline{u\simeq w} \lor C$}
\LeftLabel{(\mtt{EqFac})}
\UnaryInfC{$(s\simeq t\lor t\not\simeq w \lor C)\theta$}
\DisplayProof}
&
\multirow{2}{*}{where} & (1) $\theta=\mgu(s,u)$,\\
& & (2) $t\theta\not\succeq s\theta$ and $w\theta\not\succeq t\theta$,\\[1.5em]

\multirow{2}{*}{
\AxiomC{$C[l\theta]$}
\AxiomC{$l\simeq r$}
\LeftLabel{$(\mtt{CRw})$}
\BinaryInfC{$C[r\theta]$}
\DisplayProof}
& \multirow{2}{*}{where} & \multirow{2}{*}{$l\theta\nprec r\theta$ or $l\simeq r$ is effective,}\\[2.4em]

\multirow{3}{*}{
\AxiomC{$s[l']\simeq t$}
\AxiomC{$l\simeq r$}
\LeftLabel{$(\Cleft)$}
\BinaryInfC{$(s[r]\simeq t)\theta$}
\DisplayProof}
& \multirow{3}{*}{where} & (1) $\theta=\mgu(l,l')$,\\
& & (2) $s[l']\simeq t$ is ineffective,\\
& & (3) $l\theta\not\succeq r\theta$ and $l\simeq r$ is effective,\\[1em]

\multirow{3}{*}{
\AxiomC{$s\simeq t[l']$}
\AxiomC{$l\simeq r$}
\LeftLabel{$(\Cright)$}
\BinaryInfC{$(s\simeq t[r])\theta$}
\DisplayProof}
& \multirow{3}{*}{where} & (1) $\theta=\mgu(l,l')$,\\
& & (2) $l\simeq r$ is ineffective,\\
& & (3) $t[l']\theta\not\succeq s\theta$ and $s\simeq t[l']$ is effective.\\[1em]

\end{tabular}
\end{center}

\begin{lemma}
\label{lemma:linear-rw-derivation}
Let $\mathcal{C}$ be a set of clauses and $D$ a clause. If there is a derivation $\mathcal{C}\vdash^*_{\{\Rw\}} D$, then there is a derivation of the following form:
\begin{prooftree}
\AxiomC{$C_1$}
\AxiomC{$l_1\simeq r_1$}
\LeftLabel{$(\Rw)$}
\BinaryInfC{$C_2$}
\AxiomC{$l_2\simeq r_2$}
\LeftLabel{$(\Rw)$}
\BinaryInfC{$C_3$}
\AxiomC{}
\noLine
\BinaryInfC{$\ddots$}
\noLine
\UnaryInfC{$C_n$}
\AxiomC{$l_n\simeq r_n$}
\LeftLabel{$(\Rw)$}
\BinaryInfC{$D$}
\end{prooftree}
where $l_1\simeq r_1,...,l_n\simeq r_n\in\mathcal{C}$.
\end{lemma}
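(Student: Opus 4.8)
The plan is to rewrite an arbitrary $\{\Rw\}$-derivation of $D$ from $\mathcal{C}$ into one of the displayed shape by repeatedly removing a single defect. Throughout, we view a derivation as a tree (duplicating shared subderivations if necessary), so that every clause occurrence is the main premise of at most one inference. Recall that in an $\Rw$ inference the first (rewritten) premise may be an arbitrary clause but the second (rewriting) premise must be a unit equation; hence, if a unit equation $B=(l\simeq r)$ is derived rather than taken from $\mathcal{C}$, then $B$ is the conclusion of an $\Rw$ step whose main premise $B_1=(l_1\simeq r_1)$ is again a unit equation and whose rewriting premise $B_2$ is a unit equation, with $B$ obtained from $B_1$ by rewriting inside $l_1$ or inside $r_1$. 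A derivation already has the required shape precisely when no $\Rw$ inference uses a side premise outside $\mathcal{C}$; so the defect to eliminate is an $\Rw$ inference $\frac{A[l\theta]\ \ B}{A[r\theta]}$ with $B\notin\mathcal{C}$.

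First I would show how to remove one such defect. If $B$ arises from $B_1$ by rewriting in $r_1$, then $l=l_1$, so $A[l\theta]=A[l_1\theta]$, and we replace the inference by two $\Rw$ steps: rewrite $A[l_1\theta]$ with $B_1$ to get $A[r_1\theta]$, then rewrite $A[r_1\theta]$ with $B_2$ (at the same position and with the same substitution as in the step producing $B$) to get $A[r\theta]$. If $B$ arises from $B_1$ by rewriting in $l_1$, then $r=r_1$ and $A[l\theta]$ is $A[l_1\theta]$ with one subterm already rewritten by an instance of $B_2$; here we first undo that rewrite --- using $B_2$ read right-to-left, which is legitimate since equalities are symmetric --- to reach $A[l_1\theta]$, and then rewrite with $B_1$ to reach $A[r_1\theta]=A[r\theta]$. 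In both cases the defective inference is replaced by two $\Rw$ inferences whose side premises are $B_1$ and $B_2$ (or the reverse of $B_2$), each having a strictly shorter subderivation than $B$; the only point needing care is to check, in the undo case, that replacing the relevant subterm back and forth is a valid $\Rw$ step under the right substitution, which is routine once positions are tracked.

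To see that iterating the transformation terminates, I would use the measure $\mu=\sum_\iota d(\iota)$, where $\iota$ ranges over the $\Rw$ nodes of the tree and $d(\iota)$ is the number of $\Rw$ nodes in the subderivation of the side premise of $\iota$. Let $s_1,s_2$ be the numbers of $\Rw$ nodes in the subderivations of $B_1,B_2$; the subderivation of $B$ consists of those together with one extra step, so the removed defect contributes $d=s_1+s_2+1$, the removed step producing $B$ contributes $d=s_2$, and the two new inferences contribute $s_1$ and $s_2$. All other inferences retain their side premises, hence their $d$-values, so $\mu$ drops by exactly $s_2+1$. Therefore the transformation applies only finitely often, and we reach a derivation in which every side premise lies in $\mathcal{C}$.

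Finally, in such a derivation the tree is forced to be a chain: reading back from $D$, each inference is of the form ``rewrite its main premise $A$ with some $l_k\simeq r_k\in\mathcal{C}$'', so following the main premises from $D$ produces clauses $C_n,\dots,C_1$ in which $C_1$ has no inference above it (hence $C_1\in\mathcal{C}$) and each $C_{k+1}$ is obtained from $C_k$ by an $\Rw$ step with an equation of $\mathcal{C}$ --- exactly the form in the statement (the trivial case $D\in\mathcal{C}$ aside). I expect the main obstacle to be purely bookkeeping: verifying the position and substitution details of the undo step, and confirming that the local surgery leaves every other inference, and thus the measure accounting, intact.
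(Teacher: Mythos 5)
Your proof is correct and follows essentially the same route as the paper's: you perform the identical local surgery, splitting an $\Rw$ step whose equation premise was itself derived into two $\Rw$ steps against that premise's two parents (your Cases 1 and 2 are the paper's cases (ii) and (i), including the reversed use of $B_2$ in the undo case). The only difference is that you justify termination with an explicit well-founded measure, where the paper merely asserts that iterating the transformation reaches the desired form; your measure accounting is sound.
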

\begin{proof}
If a derivation is not of the desired form, we have one of the following cases in the derivation:
\begin{center}
(i)
\AxiomC{$C[l[w\sigma]\theta]$}
\AxiomC{$l[u\sigma]\simeq r$}
\AxiomC{$u\simeq w$}
\LeftLabel{$(\Rw)$}
\BinaryInfC{$l[w\sigma]\simeq r$}
\LeftLabel{$(\Rw)$}
\BinaryInfC{$C[r\theta]$}
\DisplayProof
\qquad
(ii)
\AxiomC{$C[l[w\sigma]\theta]$}
\AxiomC{$l\simeq r[u\sigma]$}
\AxiomC{$u\simeq w$}
\LeftLabel{$(\Rw)$}
\BinaryInfC{$l\simeq r[w\sigma]$}
\LeftLabel{$(\Rw)$}
\BinaryInfC{$C[r[w\sigma]\theta]$}
\DisplayProof
\end{center}
We transform the two cases into the following derivations:
\begin{center}
(i')
\AxiomC{$C[l[w\sigma]\theta]$}
\AxiomC{$u\simeq w$}
\LeftLabel{$(\Rw)$}
\BinaryInfC{$C[l[u\sigma]\theta]$}
\AxiomC{$l[u\sigma]\simeq r$}
\LeftLabel{$(\Rw)$}
\BinaryInfC{$C[r\theta]$}
\DisplayProof
\qquad
(ii')
\AxiomC{$C[l\theta]$}
\AxiomC{$l\simeq r[u\sigma]$}
\LeftLabel{$(\Rw)$}
\BinaryInfC{$C[r[u\sigma]\theta]$}
\AxiomC{$u\simeq w$}
\LeftLabel{$(\Rw)$}
\BinaryInfC{$C[r[w\sigma]\theta]$}
\DisplayProof
\end{center}
It is easy to see that repeating this transformation results in a derivation of the desired form.
\end{proof}

\recChaining*
\begin{proof}
% By induction on the derivation length of $\mathcal{C}\vdash_{\ReC_\lor}^\ast D$.

% The base case is a derivation of length 1, i.e. only one inference. We only have to consider when it is an \mtt{Rw_\uparrow} but not an \mtt{Rw_\uparrow^\ast} inference, for otherwise we already have an $\ReC_\lor^\ast$ derivation. If it is not an \mtt{Rw_\uparrow^\ast} inference, it uses a non-useful equation, and the result is inductively redundant by Lemma~\ref{non-useful-equation-lemma}, so the claim is vacuously true.

% In the step case, assume we have a derivation of length $n>1$, $\mathcal{C}\vdash_{\ReC_\lor} D'\vdash_{\ReC_\lor}^{n-1} D$. By the induction hypothesis, $D'\vdash_{\ReC_\lor^\ast}^{n-1} D$. If the first step $\mathcal{C}\vdash D'$ is with a useful equation, we are done since we can just concatenate the two parts. Otherwise, the inference is as follows:
% \begin{prooftree}
%     \AxiomC{$s[r\theta]\not\simeq t$}
%     \AxiomC{$l\simeq r$}
%     \BinaryInfC{$s[l\theta]\not\simeq t$}
% \end{prooftree}
% where $l\simeq r$ is a non-useful equation. If the rest of the derivation only contains rewrites parallel to $l\theta$, then the result still contains $l\theta$ and by Lemma~\ref{non-useful-equation-lemma} is inductively redundant, so assume otherwise. ...

In this proof, we call \emph{violating inferences} any $\Rw$ inferences that are not $\mtt{CRw}$ inferences. The proof is by induction on the number of violating inferences. Take an arbitrary derivation $\mathcal{C}\vdash^*_{\{\Rw\}} D$ s.t. $D$ is an inductively non-redundant clause. Let us denote this derivation with $\Pi$. If $\Pi$ does not contain any violating inferences, we are done as $\Pi$ is also a $\{\mtt{CRw},\Cleft,\Cright\}$ derivation. 

Otherwise, by Lemma~\ref{lemma:linear-rw-derivation} we may assume that $\Pi$ is in a form where \Rw inferences are only applied on a single clause repeatedly. Take the last (in the order of \Rw inferences) violating inference in $\Pi$ rewriting term $r\theta$ into $l\theta$ with ineffective $l\simeq r$ and $l\theta\succ r\theta$. We build a new derivation $\Gamma$ by induction on the derivation length of $\Pi$ s.t. it does not contain the violating inference. We claim that for each inference in $\Pi$ resulting in some clause $C[l\theta_1]...[l\theta_n]$ for some substitutions $\theta_1$,...,$\theta_n$, there is a derivation resulting in $C[r\theta_1]...[r\theta_n]$. Let us assume that we have a derivation $\Gamma$ corresponding to $n-1$ inferences in $\Pi$. We take the $n$th inference in $\Pi$ and consider the following cases.
\begin{enumerate}[label=(\roman*),leftmargin=1.7em]
\item If the clause $C$ we have in $\Pi$ is the same as the clause in $\Gamma$, we either have the violating inference, in which case we skip the inference and we get $C[r\theta]$ instead of $C[l\theta]$ or we have some other inference and we perform it, resulting in the same clause as in the conclusion of the $n$th inference of $\Pi$. 
\item Otherwise, in $\Pi$ the premise of the $n$th step is of the form $C[l\theta_1]...[l\theta_n]$ but the premise in $\Gamma$ is $C[r\theta_1]...[r\theta_n]$. If the inference is an \Rw inference parallel to all $l\theta_i$s in $\Pi$ resulting in some $C'[l\theta_1]...[l\theta_n]$, then we simply perform the inference in $\Gamma$ as well, resulting in the desired $C'[r\theta_1]...[r\theta_n]$.
\item If the inference is an \Rw inference on a subterm of some $l\theta_i$ in $\Pi$ and at or below a variable position in $l$, then let $x$ be this variable. Since $l$ has only one occurrence of $x$, the result is $l\theta'_i$ for some $\theta'_i$. In the new derivation, we perform the same rewrite inside each occurrence of $x$ in $r\theta_i$, resulting in $r\theta'_i$. Hence, in $\Pi$ we get $C[l\theta_1]...[l\theta'_i]...[l\theta_n]$, and in $\Gamma$ we get the desired $C[r\theta_1]...[r\theta'_i]...[r\theta_n]$.
\item If the inference is an \Rw inference with some equation $u\simeq w$ s.t. it rewrites some subterm $u\sigma]$ above a variable position of an $l\theta_i$ term in $\Pi$ for some substitution $\sigma$, then there is an inference
\begin{prooftree}
\AxiomC{$l[u']\simeq r$}
\AxiomC{$u\simeq w$}
\LeftLabel{$\kappa$}
\BinaryInfC{$(l[w]\simeq r)\rho$}
\end{prooftree}
where $\rho=\mgu(u',u)$. We have $l[u']\rho\succ r\rho$ by assumption. If $u\rho=w\rho$, then the rewrite with $u\simeq w$ in $l\theta_i$ results in the same clause $C[l\theta_1]...[l\theta_n]$ in $\Pi$, so by not performing the inference in $\Gamma$ we get the desired clause $C[r\theta_1]...[r\theta_n]$ trivially. Otherwise, if $w\rho\nsucceq u\rho$, then the inference $\kappa$ is a $\mtt{Sup}$ inference.  If $u\rho\nsucceq w\rho$, then the inference $\kappa$ is a \Cleft inference (since $u\simeq w$ cannot be an ineffective equation, by assumption). Hence, we can use $(l[w]\simeq r)\rho$ to rewrite $r\theta_i$ instead, resulting in $C[r\theta_1]...[r\theta_{i-1}][r\theta_{i+1}]...[l\theta_n]$ as desired.
\item Otherwise, the inference is an \Rw inference s.t. the rewritten term is the superterm of some $l\theta_i$s. Let $u\simeq w$ be the rewriting equation and $u\sigma$ be the rewritten term in $\Pi$. Let $p$ be the position of $u\sigma$ in $C[l\theta_1]...[l\theta_n]$. Let $u'$ be the term in the same position $p$ in $C[r\theta_1]...[r\theta_n]$. We induct on the number of positions inside $u'$ which prevent it from being rewritten by $u\simeq w$ (i.e. the positions preventing $u'$ being matched by $u$). Note that these terms are all $l\theta_i$s. If we have an $l\theta_i$ in a position in $u\sigma$ which is above a variable position in $u$, then there is an inference
\begin{prooftree}
\AxiomC{$u[l']\simeq w$}
\AxiomC{$l\simeq r$}
\LeftLabel{$\kappa$}
\BinaryInfC{$(u[r]\simeq w)\rho$}
\end{prooftree}
where $\rho=\mgu(l,l')$. Again, we have $l\rho\succ r\rho$ by assumption. Similarly as in the previous case, if $u[l']\rho=w\rho$, then the rewrite with $u\simeq w$ on $u\sigma$ results in the same clause $C[l\theta_1]...[l\theta_n]$ in $\Pi$, so by not performing the inference in $\Gamma$ we get the desired clause $C[r\theta_1]...[r\theta_n]$ trivially. If $w\rho\nsucceq u[l']\rho$, then $\kappa$ is a $\mtt{Sup}$ inference. If $u[l']\rho\nsucceq w\rho$, then $\kappa$ is a \Cright inference. By using $(u[r]\simeq w)\rho$ instead of $u[l']\simeq w$ for the rewrite, we get one less position where the rewrite (or match) is prevented, hence the induction hypothesis applies.

If there is no such $l\theta_i$, then either we have $u\sigma'$ for some $\sigma'$ and we can apply the rewrite resulting in $w\sigma'$, and we get a clause of the desired form, or $u$ does not match $u'$ because there are two occurrences of a variable $x$ in $u$ s.t. there are two distinct terms $t_1$ and $t_2$ in these positions in $u'$. This must be because $t_1$ contains $r\theta_i$ in some position where $t_2$ contains $l\theta_i$. We rewrite all such $l\theta_i$s in $u'$ into $r\theta_i$ using downward rewrites with $l\simeq r$, resulting in a term that can be rewritten by $u\simeq w$.
\end{enumerate}
We have covered all cases, proving the claim that all $n$ inferences in $\Pi$ can be performed or replaced by suitable inferences in $\Gamma$ resulting in some $D[r\theta_1]...,[r\theta_n]$ instead of $D=D[l\theta_1]...[l\theta_n]$. But $D$ is inductively non-redundant, so it cannot contain any such $l\theta_i$s. Hence, we get $D$ as the result of $\Gamma$ as well, with $\Gamma$ containing one less violating inference. We apply the induction hypothesis, and get a \CReC derivation.
\end{proof}

\newpage
\subsection*{The $\CReC_\lor$ calculus}

\begin{center}
\begin{tabular}{c p{.1\linewidth} l}
\multirow{3}{*}{
\AxiomC{$\drw{\underline{s[u]\bowtie t}\lor C}$}
\AxiomC{$\drw{\underline{l\simeq r}\lor D}$}
\LeftLabel{(\mtt{Sup})}
\BinaryInfC{$\drw{(s[r]\bowtie t\lor C \lor D)\theta}$}
\DisplayProof}
&
\multirow{3}{*}{where} & (1) $u$ is not a variable,\\
& & (2) $\theta=\mgu(l,u)$,\\
& & (3) $r\theta\not\succeq l\theta$ and $t\theta\not\succeq s\theta$,\\
\\

\multirow{2}{*}{
\centering
\AxiomC{$\drw{\underline{s\not\simeq t}\lor C}$}
\LeftLabel{(\mtt{EqRes})}
\UnaryInfC{$\drw{C\theta}$}
\DisplayProof}
&
\multirow{2}{*}{where} & \multirow{2}{*}{$\theta=\mgu(s,t)$,}\\
\\
\\

\multirow{2}{.45\linewidth}{
\centering
\AxiomC{$\drw{\underline{s\simeq t}\lor \underline{u\simeq w} \lor C}$}
\LeftLabel{(\mtt{EqFac})}
\UnaryInfC{$\drw{(s\simeq t\lor t\not\simeq w \lor C)\theta}$}
\DisplayProof}
&
\multirow{2}{*}{where} & (1) $\theta=\mgu(s,u)$,\\
& & (2) $t\theta\not\succeq s\theta$ and $w\theta\not\succeq t\theta$,\\[1.8em]

\multirow{2}{*}{\AxiomC{$\drw{C[l\theta]}$}
\AxiomC{$\drw{l\simeq r}$}
\LeftLabel{$(\mtt{CRw}_\downarrow)$}
\BinaryInfC{$\drw{C[r\theta]}$}
\DisplayProof} & \multirow{2}{*}{where} &\multirow{2}{*}{$l\theta\npreceq r\theta$,}\\[2.4em]

\multirow{2}{*}{\AxiomC{$\erw{C[l\theta]}$}
\AxiomC{$\drw{l\simeq r}$}
\LeftLabel{$(\mtt{CRw}_\uparrow)$}
\BinaryInfC{$\urw{C[r\theta]}$}
\DisplayProof} & \multirow{2}{*}{where} &(1) $l\theta\nsucceq r\theta$,\\
& & (2) $l\theta\nprec r\theta$ or $l\simeq r$ is effective.\\[1em]

\multirow{3}{*}{
\AxiomC{$\drw{s[l']\simeq t}$}
\AxiomC{$\drw{l\simeq r}$}
\LeftLabel{$(\Cleft)$}
\BinaryInfC{$\drw{(s[r]\simeq t)\theta}$}
\DisplayProof}
& \multirow{3}{*}{where} & (1) $\theta=\mgu(l,l')$,\\
& & (2) $s[l']\simeq t$ is ineffective,\\
& & (3) $l\theta\not\succeq r\theta$ and $l\simeq r$ is effective,\\[1em]

\multirow{3}{*}{
\AxiomC{$\drw{s\simeq t[l']}$}
\AxiomC{$\drw{l\simeq r}$}
\LeftLabel{$(\Cright)$}
\BinaryInfC{$\drw{(s\simeq t[r])\theta}$}
\DisplayProof}
& \multirow{3}{*}{where} & (1) $\theta=\mgu(l,l')$,\\
& & (2) $l\simeq r$ is ineffective,\\
& & (3) $t[l']\theta\not\succeq s\theta$ and $s\simeq t[l']$ is effective.\\[1em]

\end{tabular}
\medskip
\end{center}

\begin{theorem}[$\CReC_\lor$ derivability]
\label{thm:crec_derivability}
Given an inductively non-redundant clause $D$, if $\mathcal{C}\vdash^*_{\{\Rw_\downarrow,\Rw_\uparrow\}}D$, then $\mathcal{C}\vdash^*_{\{\CRw_\downarrow,\CRw_\uparrow,\Cleft,\Cright\}}D$.
\end{theorem}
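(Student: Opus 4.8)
The plan is to mirror the proof of Theorem~\ref{thm:chaining}, additionally carrying the $\drw{}$/$\urw{}$ annotations, and to exploit the extra structure that the $\ReC_\lor$ annotation discipline forces on rewrite derivations.

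First I would record two structural facts about $\{\Rw_\downarrow,\Rw_\uparrow\}$-derivations. Once a $\Rw_\uparrow$ step has been applied the working clause carries the annotation $\urw{}$, whereas $\Rw_\downarrow$ needs a $\drw{}$ premise; hence every such derivation splits into a prefix of $\Rw_\downarrow$ steps followed by a suffix of $\Rw_\uparrow$ steps. Since $\Rw_\downarrow$ and $\CRw_\downarrow$ carry the same side condition, the prefix is already a legal sequence of $\CRw_\downarrow$ steps, so the only inferences that can violate the $\CReC_\lor$ side conditions — the \emph{violating inferences} — are the $\Rw_\uparrow$ steps that are ineffective upward rewrites (rewriting the smaller side $r\theta$ into the larger side $l\theta$ with $l\succ r$ and $l\simeq r$ ineffective), and they all sit in the upward suffix. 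I would also need an annotation-aware analogue of Lemma~\ref{lemma:linear-rw-derivation}, so that the rewrites may be assumed to be applied repeatedly on a single clause, with rewriting equations coming from $\mathcal{C}$ or derived by $\Sup$, $\Cleft$, $\Cright$ — all of which produce $\drw{}$-equations, exactly as the second premises of $\CRw_\downarrow$/$\CRw_\uparrow$ demand.

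Then I would run the induction of Theorem~\ref{thm:chaining} on the number of violating inferences. Deleting the \emph{last} violating inference and pushing the resulting \emph{downward} replacement $l\theta\mapsto r\theta$ through the remaining, all-upward suffix proceeds by the same case analysis as in that proof — positions parallel to the residual copies of $l\theta$; positions strictly below a variable of $l$ (using linearity of the left side of an ineffective equation); and overlapping positions cut by a $\Sup$, $\Cleft$, or $\Cright$ inference, the choice being forced by whether the overlapping equation is used upward and, since the last violating inference has been removed, that equation cannot itself be an ineffective upward rewrite. For each transformed or newly added inference I only have to check that it fits $\CReC_\lor$: lying in the upward suffix it is performed as a $\CRw_\uparrow$ step (whose $\erw{}$ premise is insensitive to the current $\drw{}$/$\urw{}$ annotation) or as a $\Cleft$/$\Cright$ inference yielding a $\drw{}$-equation, and the downward-before-upward block structure survives because no downward rewrite is ever introduced. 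Finally, inductive non-redundancy of $D$ together with Lemma~\ref{non-useful-equation-lemma} forces $D$ itself to contain none of the residual $l\theta$ occurrences, so the rebuilt derivation still ends in $D$ and has strictly fewer violating inferences; the induction closes. (Equivalently, one can first invoke Theorem~\ref{thm:chaining} to obtain a $\{\CRw,\Cleft,\Cright\}$-derivation of $D$ and then split each $\CRw$ step into a $\CRw_\downarrow$ or $\CRw_\uparrow$ step by the peak-elimination argument of Theorem~\ref{thm:rw_v_ugd}; I would present whichever is cleaner to write out.)

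The main obstacle is the bookkeeping in the overlapping-position cases: one must ensure that the superpositions added to cut an overlap never create an ineffective equation that the suffix then uses \emph{upward} — which would re-introduce a violating inference and spoil the induction — and that recasting a $\Sup$-style overlap as $\Cleft$ or $\Cright$ really matches those rules' side conditions (ineffective premise in the correct slot, effective/ineffective roles exactly as prescribed). This is the very phenomenon that motivated adding $\Cleft$, $\Cright$ and the effectiveness restriction to $\CReC$, so it is resolved as there, only now re-checked in the presence of the annotations. The genuinely new content over Theorems~\ref{thm:rw_v_ugd} and~\ref{thm:chaining} is essentially the block-structure observation above, which is why these derivability results can fairly be called ``straightforward''.
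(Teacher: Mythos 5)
Your parenthetical alternative --- first invoke Theorem~\ref{thm:chaining} to get a $\{\CRw,\Cleft,\Cright\}$-derivation and then reorder it into the annotated form by the peak-elimination argument of Theorem~\ref{thm:rw_v_ugd} --- is exactly the paper's proof, which consists of precisely these two sentences. So in that reading your proposal is correct and matches the paper. Your \emph{primary} route, a direct annotation-carrying re-run of the chaining induction, is a genuinely different organization, and it has one weak point you should be aware of: the claim that ``the downward-before-upward block structure survives because no downward rewrite is ever introduced'' does not hold for the full case analysis of Theorem~\ref{thm:chaining}. In the overlapping case where the rewriting equation $u\simeq w$ has a non-linear variable, the paper's construction explicitly inserts auxiliary \emph{downward} rewrites with the ineffective equation $l\simeq r$ into the rebuilt derivation $\Gamma$ in order to restore matchability; if these land inside the upward suffix, they cannot be performed as $\CRw_\downarrow$ steps (whose first premise must carry $\drw{}$) and a further reordering pass is needed. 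This is precisely what the two-stage decomposition buys: by doing all the chaining on unannotated $\CRw$ derivations first and only then running peak elimination, the paper never has to maintain the $\drw{}/\urw{}$ discipline during the delicate overlap surgery. Your observation that every $\{\Rw_\downarrow,\Rw_\uparrow\}$-derivation is a downward prefix followed by an upward suffix, and that only ineffective strictly-upward $\Rw_\uparrow$ steps can violate the $\CRw_\uparrow$ side condition, is correct and is a useful sharpening not spelled out in the paper; but if you present the direct route you must either show the inserted downward rewrites can always be commuted back into the prefix or fall back on the compositional argument. I would therefore present the alternative (the paper's) version.
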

\begin{proof}
Since each $\Rw_\downarrow$ and $\Rw_\uparrow$ inference is also an $\Rw$ inference, the derivation is also a $\{\Rw\}$ derivation and Theorem~\ref{thm:chaining} applies. Hence we get a derivation $\mathcal{C}\vdash^*_{\{\CRw,\Cleft,\Cright\}}D$. From this derivation, similarly to the proof of Theorem~\ref{thm:rw_v_ugd}, we obtain a $\{\CRw_\downarrow,\CRw_\uparrow,\Cleft,\Cright\}$ derivation.
\end{proof}

\newpage
\subsection*{The $\CReC^\to$ calculus}

\begin{center}
\begin{tabular}{c p{.1\linewidth} l}
\multirow{3}{*}{
\AxiomC{$_p\underline{s[u]\bowtie t}\lor C$}
\AxiomC{$_q\underline{l\simeq r}\lor D$}
\LeftLabel{(\mtt{Sup})}
\BinaryInfC{$_\epsilon(s[r]\bowtie t\lor C \lor D)\theta$}
\DisplayProof}
&
\multirow{3}{*}{where} & (1) $u$ is not a variable,\\
& & (2) $\theta=\mgu(l,u)$,\\
& & (3) $r\theta\not\succeq l\theta$ and $t\theta\not\succeq s\theta$,\\
\\

\multirow{2}{*}{
\centering
\AxiomC{$_p\underline{s\not\simeq t}\lor C$}
\LeftLabel{(\mtt{EqRes})}
\UnaryInfC{$_\epsilon C\theta$}
\DisplayProof}
&
\multirow{2}{*}{where} & \multirow{2}{*}{$\theta=\mgu(s,t)$,}\\
\\
\\

\multirow{2}{.45\linewidth}{
\centering
\AxiomC{$_p\underline{s\simeq t}\lor \underline{u\simeq w} \lor C$}
\LeftLabel{(\mtt{EqFac})}
\UnaryInfC{$_\epsilon(s\simeq t\lor t\not\simeq w \lor C)\theta$}
\DisplayProof}
&
\multirow{2}{*}{where} & (1) $\theta=\mgu(s,u)$,\\
& & (2) $t\theta\not\succeq s\theta$ and $w\theta\not\succeq t\theta$,\\[1.8em]

\multirow{2}{*}{
\AxiomC{$_pC[l\theta]_q$}
\AxiomC{$_{p'}l\simeq r$}
\LeftLabel{$(\mtt{CRw}^\to)$}
\BinaryInfC{$_qC[r\theta]_q$}
\DisplayProof} & \multirow{2}{*}{where} & (1) $q\nless_l p$,\\
& & (2) $l\theta\nprec r\theta$ or $l\simeq r$ is effective.\\[1em]

\multirow{3}{*}{
\AxiomC{$_p s[l']\simeq t$}
\AxiomC{$_q l\simeq r$}
\LeftLabel{$(\Cleft)$}
\BinaryInfC{$_\epsilon(s[r]\simeq t)\theta$}
\DisplayProof}
& \multirow{3}{*}{where} & (1) $\theta=\mgu(l,l')$,\\
& & (2) $s[l']\simeq t$ is ineffective,\\
& & (3) $l\theta\not\succeq r\theta$ and $l\simeq r$ is effective,\\[1em]

\multirow{3}{*}{
\AxiomC{$_p s\simeq t[l']$}
\AxiomC{$_q l\simeq r$}
\LeftLabel{$(\Cright)$}
\BinaryInfC{$_\epsilon(s\simeq t[r])\theta$}
\DisplayProof}
& \multirow{3}{*}{where} & (1) $\theta=\mgu(l,l')$,\\
& & (2) $l\simeq r$ is ineffective,\\
& & (3) $t[l']\theta\not\succeq s\theta$ and $s\simeq t[l']$ is effective.\\[1em]

\end{tabular}
\medskip
\end{center}

\begin{theorem}[$\CReC^\to$ derivability]
Given an inductively non-redundant clause $D$, if $\mathcal{C}\vdash^*_{\{\Rw^\to\}}D$, then $\mathcal{C}\vdash^*_{\{\CRw^\to,\Cleft,\Cright\}}D$.
\end{theorem}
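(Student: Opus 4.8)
The plan is to reduce the claim to two facts already at our disposal: Theorem~\ref{thm:chaining}, which trades ineffective rewrites for chaining inferences, and the reordering argument in the proof of Theorem~\ref{thm:rec_to_egd}, which installs the left-to-right discipline. First I would note that every $\Rw^\to$ inference is in particular an $\Rw$ inference: the position annotation and the side condition $q\nless_l p$ only restrict when $\Rw$ may fire, never licensing anything new. Thus $\mathcal{C}\vdash^*_{\{\Rw^\to\}}D$ entails $\mathcal{C}\vdash^*_{\{\Rw\}}D$, and since $D$ is inductively non-redundant, Theorem~\ref{thm:chaining} yields a derivation $\mathcal{C}\vdash^*_{\{\CRw,\Cleft,\Cright\}}D$.

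Next I would turn this $\{\CRw,\Cleft,\Cright\}$ derivation into a $\{\CRw^\to,\Cleft,\Cright\}$ one by replaying the proof of Theorem~\ref{thm:rec_to_egd} with $\CRw$ in the role of $\Rw$: annotate each derivation with the finite sequence of positions of its $\CRw$ steps, order such sequences by the well-founded relation $\lessdot$ from that proof, and take a derivation of $D$ whose sequence is $\lessdot$-minimal. The $\Cleft$, $\Cright$, $\mtt{Sup}$, $\mtt{EqRes}$ and $\mtt{EqFac}$ rules of $\CReC^\to$ impose no $<_l$ constraint and reset the position annotation to the root $\epsilon$, and $q\nless_l\epsilon$ holds for every $q$; so the only way this minimal derivation could fail to be a $\CReC^\to$ derivation is to contain two consecutive $\CRw$ steps on the same clause at positions $p$ then $q$ with $q<_l p$. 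But then $p$ and $q$ are parallel, the two one-step rewrites commute, and swapping them gives a derivation with the same premises and conclusion whose position sequence is strictly $\lessdot$-smaller --- contradicting minimality. Hence the $\lessdot$-minimal derivation is the desired $\{\CRw^\to,\Cleft,\Cright\}$ derivation.

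The step I expect to take genuine care is checking that such a swap keeps both rewrites valid, and in particular preserves the $\CRw$ side condition ``$l\theta\nprec r\theta$ or $l\simeq r$ is effective'' for each of them --- this is the one ingredient genuinely absent from the $\Rw$-only argument of Theorem~\ref{thm:rec_to_egd}. It goes through because the two rewrites sit at parallel positions: performing the rewrite at $p$ leaves the subterm matched at $q$ untouched and vice versa, so the matchers, the matched instances $l\theta,r\theta$, and the (purely syntactic) effectiveness of each rewriting equation are literally unchanged by the swap. One should also observe that reordering $\CRw$ steps along the rewrite history of a clause does not disturb the equations and clauses consumed by $\Cleft$ and $\Cright$ steps elsewhere in the derivation, so no chaining step is invalidated. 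Finally, the analogous recipe --- Theorem~\ref{thm:chaining} followed by the reordering in the proof of Theorem~\ref{thm:rec_to_lor_egd} --- yields the corresponding derivability result for $\CReC^\to_\lor$.
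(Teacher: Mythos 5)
Your proposal is correct and follows essentially the same route as the paper: cast the $\{\Rw^\to\}$ derivation as an $\{\Rw\}$ derivation, invoke Theorem~\ref{thm:chaining} to obtain a $\{\CRw,\Cleft,\Cright\}$ derivation, and then replay the $\lessdot$-minimality reordering argument of Theorem~\ref{thm:rec_to_egd} to impose the left-to-right discipline. Your extra check that swapping two parallel-position $\CRw$ steps preserves their side conditions (matchers, ordering constraints, and effectiveness are untouched) is a detail the paper leaves implicit, but it does not change the argument.
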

\begin{proof}
Similarly to the proof of Theorem~\ref{thm:crec_derivability}, for any $\{\Rw\}$ derivation, we get a $\{\CRw,\Cleft,\Cright\}$ derivation by Theorem~\ref{thm:chaining}. We then use the same approach as in Theorem~\ref{thm:rec_to_egd} to get the desired derivation.
\end{proof}

\newpage
\subsection*{The $\CReC^\to_\lor$ calculus}

\begin{center}
\begin{tabular}{c p{.1\linewidth} l}
\multirow{3}{*}{
\AxiomC{$\drw{_p\underline{s[u]\bowtie t}\lor C}$}
\AxiomC{$\drw{_q\underline{l\simeq r}\lor D}$}
\LeftLabel{(\mtt{Sup})}
\BinaryInfC{$\drw{_\epsilon(s[r]\bowtie t\lor C \lor D)\theta}$}
\DisplayProof}
&
\multirow{3}{*}{where} & (1) $u$ is not a variable,\\
& & (2) $\theta=\mgu(l,u)$,\\
& & (3) $r\theta\not\succeq l\theta$ and $t\theta\not\succeq s\theta$,\\
\\

\multirow{2}{*}{
\centering
\AxiomC{$\drw{_p\underline{s\not\simeq t}\lor C}$}
\LeftLabel{(\mtt{EqRes})}
\UnaryInfC{$\drw{_\epsilon C\theta}$}
\DisplayProof}
&
\multirow{2}{*}{where} & \multirow{2}{*}{$\theta=\mgu(s,t)$,}\\
\\
\\

\multirow{2}{.45\linewidth}{
\centering
\AxiomC{$\drw{_p\underline{s\simeq t}\lor \underline{u\simeq w} \lor C}$}
\LeftLabel{(\mtt{EqFac})}
\UnaryInfC{$\drw{_\epsilon(s\simeq t\lor t\not\simeq w \lor C)\theta}$}
\DisplayProof}
&
\multirow{2}{*}{where} & (1) $\theta=\mgu(s,u)$,\\
& & (2) $t\theta\not\succeq s\theta$ and $w\theta\not\succeq t\theta$,\\[1.8em]

\multirow{2}{*}{
\AxiomC{$\drw{_p C[l\theta]_q}$}
\AxiomC{$\drw{_{p'}l\simeq r}$}
\LeftLabel{$(\mtt{CRw}_\downarrow)$}
\BinaryInfC{$\drw{_q C[r\theta]_q}$}
\DisplayProof} & \multirow{2}{*}{where} & (1) $l\theta\npreceq r\theta$,\\
& & (2) $q\nless_l p$,\\[2.4em]

\multirow{2}{*}{
\AxiomC{$\erw{_p C[l\theta]_q}$}
\AxiomC{$\drw{_{p'} l\simeq r}$}
\LeftLabel{$(\mtt{CRw}_\uparrow)$}
\BinaryInfC{$\urw{_q C[r\theta]_q}$}
\DisplayProof} & \multirow{2}{*}{where} &(1) $l\theta\nsucceq r\theta$,\\
& & (2) $\erw{_pC[l\theta]}=\drw{_pC[l\theta]}$ or $q\nless_l p$,\\
& & (3) $l\theta\nprec r\theta$ or $l\simeq r$ is effective,\\[1em]

\multirow{3}{*}{
\AxiomC{$\drw{_p s[l']\simeq t}$}
\AxiomC{$\drw{_q l\simeq r}$}
\LeftLabel{$(\Cleft)$}
\BinaryInfC{$\drw{_\epsilon (s[r]\simeq t)\theta}$}
\DisplayProof}
& \multirow{3}{*}{where} & (1) $\theta=\mgu(l,l')$,\\
& & (2) $s[l']\simeq t$ is ineffective,\\
& & (3) $l\theta\not\succeq r\theta$ and $l\simeq r$ is effective,\\[1em]

\multirow{3}{*}{
\AxiomC{$\drw{_p s\simeq t[l']}$}
\AxiomC{$\drw{_q l\simeq r}$}
\LeftLabel{$(\Cright)$}
\BinaryInfC{$\drw{_\epsilon(s\simeq t[r])\theta}$}
\DisplayProof}
& \multirow{3}{*}{where} & (1) $\theta=\mgu(l,l')$,\\
& & (2) $l\simeq r$ is ineffective,\\
& & (3) $t[l']\theta\not\succeq s\theta$ and $s\simeq t[l']$ is effective.\\[1em]

\end{tabular}
\medskip
\end{center}

\begin{theorem}[$\CReC^\to_\lor$ derivability]
Given an inductively non-redundant clause $D$, if $\mathcal{C}\vdash^*_{\{\Rw^\to_\downarrow,\Rw^\to_\uparrow\}}D$, then $\mathcal{C}\vdash^*_{\{\CRw^\to_\downarrow,\CRw^\to_\uparrow,\Cleft,\Cright\}}D$.
\end{theorem}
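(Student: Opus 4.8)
The plan is to derive the statement by removing, one refinement at a time, the three features that set $\CReC^\to_\lor$ apart from plain $\Rw$-based rewriting: forbidding ineffective upward rewrites in favour of $\Cleft$/$\Cright$ chaining, splitting rewrites into a downward block followed by an upward block to prevent peaks, and imposing a left-to-right position order within each block. Each of these was shown in isolation to preserve the relevant derivability property (Theorem~\ref{thm:chaining} for chaining, Theorems~\ref{thm:rw_v_ugd} and~\ref{thm:crec_derivability} for peak elimination, Theorems~\ref{thm:rec_to_egd} and~\ref{thm:rec_to_lor_egd} for the left-to-right order), so it suffices to show they still do when stacked on top of one another.

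Concretely, I would argue in three steps. (A) Since every $\Rw^\to_\downarrow$ and $\Rw^\to_\uparrow$ inference is in particular an $\Rw$ inference, the hypothesis $\mathcal{C}\vdash^*_{\{\Rw^\to_\downarrow,\Rw^\to_\uparrow\}}D$ at once gives $\mathcal{C}\vdash^*_{\{\Rw\}}D$; as $D$ is inductively non-redundant, Theorem~\ref{thm:chaining} yields $\mathcal{C}\vdash^*_{\{\mtt{CRw},\Cleft,\Cright\}}D$. (B) Following the proof of Theorem~\ref{thm:crec_derivability} --- i.e.\ replaying the peak-elimination argument of Theorem~\ref{thm:rw_v_ugd} with $\mtt{CRw}$ in place of $\Rw$ --- and linearising in the style of Lemma~\ref{lemma:linear-rw-derivation}, we rewrite this as $\mathcal{C}\vdash^*_{\{\CRw_\downarrow,\CRw_\uparrow,\Cleft,\Cright\}}D$, where the rewrites applied to the target clause form a maximal block of downward rewrites followed by a block of upward rewrites (parallel up/down pairs flipped, overlapping ones cut by a $\mtt{Sup}$ inference) and the $\Cleft$/$\Cright$ steps supplying the required equations are untouched. (C) Finally, applying the two-phase reordering of Theorem~\ref{thm:rec_to_lor_egd} --- annotate a derivation with the position sequence of its downward (resp.\ upward) rewrites and, using the well-founded order $\lessdot$ on position sequences, repeatedly swap an out-of-order adjacent pair of same-direction rewrites, which necessarily sit in parallel positions and hence commute --- we turn each downward step into a $\CRw^\to_\downarrow$ step and each upward step into a $\CRw^\to_\uparrow$ step, obtaining $\mathcal{C}\vdash^*_{\{\CRw^\to_\downarrow,\CRw^\to_\uparrow,\Cleft,\Cright\}}D$, as required.

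The hard part will not be any single step but verifying that the refinements genuinely compose, i.e.\ that every side-condition of the new inference variants survives both transformations. Two points need care. First, the effectiveness condition on upward rewrites: in the derivation obtained in step (B) every (strict) upward rewrite uses an effective equation --- this is forced by the $\CRw_\uparrow$ side-condition together with $l\theta\prec r\theta$ --- and this property is preserved both when peak elimination flips a parallel down/up pair and when the reordering of step (C) swaps two upward rewrites, since in each case the very same equations are reused; hence the condition ``$l\theta\nprec r\theta$ or $l\simeq r$ is effective'' of $\CRw^\to_\uparrow$ is never violated. Second, the position bookkeeping around the $\mtt{Sup}$ inferences introduced during peak elimination: such a conclusion carries the root position $\epsilon$, and since $q\nless_l\epsilon$ holds for every position $q$, no left-to-right constraint is triggered immediately after a superposition, so the position annotations of Theorems~\ref{thm:rec_to_egd} and~\ref{thm:rec_to_lor_egd} extend consistently through these steps. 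Granting these checks, the composition goes through and the theorem follows.
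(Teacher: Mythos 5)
Your proposal is correct and follows essentially the same route as the paper: drop the extra constraints to view the hypothesis as a plain $\{\Rw_\downarrow,\Rw_\uparrow\}$ (hence $\{\Rw\}$) derivation, invoke Theorem~\ref{thm:chaining} and the peak-restoration argument (packaged in the paper as Theorem~\ref{thm:crec_derivability}) to obtain a $\{\CRw_\downarrow,\CRw_\uparrow,\Cleft,\Cright\}$ derivation, and then replay the $\lessdot$-reordering of Theorem~\ref{thm:rec_to_lor_egd} to reinstate the left-to-right position constraints. Your explicit checks that the effectiveness side-conditions and the position annotations survive the composition are details the paper leaves implicit, but they do not alter the argument.
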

\begin{proof}
Any $\{\Rw^\to_\downarrow,\Rw^\to_\uparrow\}$ derivation is a $\{\Rw_\downarrow,\Rw_\uparrow\}$ derivation, hence we get a $\{\CRw_\downarrow,\CRw_\uparrow,\Cleft,\Cright\}$ derivation by Theorem~\ref{thm:crec_derivability}. We then use the same approach as in Theorem~\ref{thm:rec_to_lor_egd} to get the desired derivation.
\end{proof}

\end{document}